\newtheorem{theorem}{Theorem}
\newtheorem{proposition}{Proposition}
\newtheorem{corollary}{Corollary}
\xdef\f@size@small{\f@size}
\xdef\f@baselineskip@small{\f@baselineskip}
\xdef\f@size@normalsize{\f@size}
\xdef\f@baselineskip@normalsize{\f@baselineskip}
\newcommand{\smalltonormalsize}{%
  \fontsize
    {\fpeval{(\f@size@small+\f@size@normalsize)/2}}
    {\fpeval{(\f@baselineskip@small+\f@baselineskip@normalsize)/2}}%
  \selectfont
}
\xdef\f@size@footnotesize{\f@size}
\xdef\f@baselineskip@footnotesize{\f@baselineskip}
\xdef\f@size@small{\f@size}
\xdef\f@baselineskip@small{\f@baselineskip}
\newcommand{\footnotesizetosmall}{%
  \fontsize
    {\fpeval{(\f@size@footnotesize+\f@size@small)/2}}
    {\fpeval{(\f@baselineskip@footnotesize+\f@baselineskip@small)/2}}%
  \selectfont
}
\xdef\f@size@scriptsize{\f@size}
\xdef\f@baselineskip@scriptsize{\f@baselineskip}
\xdef\f@size@footnotesize{\f@size}
\xdef\f@baselineskip@footnotesize{\f@baselineskip}
\newcommand{\scriptsizetofootnotesize}{%
  \fontsize
    {\fpeval{(\f@size@scriptsize+\f@size@footnotesize)/2}}
    {\fpeval{(\f@baselineskip@scriptsize+\f@baselineskip@footnotesize)/2}}%
  \selectfont
}
\begin{document}

\title{Joint Metrics for EMF Exposure and Coverage in Real-World Homogeneous and Inhomogeneous Cellular Networks}

\author{Quentin~Gontier,~\IEEEmembership{Graduate Student Member,~IEEE,} Charles~Wiame,~\IEEEmembership{Member,~IEEE,} Shanshan~Wang,~\IEEEmembership{Member,~IEEE,} Marco~Di~Renzo,~\IEEEmembership{Fellow,~IEEE,} Joe~Wiart,~\IEEEmembership{Senior Member,~IEEE,} François~Horlin,~\IEEEmembership{Member,~IEEE,} Christo~Tsigros, Claude~Oestges,~\IEEEmembership{Fellow,~IEEE,} and  Philippe~De~Doncker,~\IEEEmembership{Member,~IEEE}
\thanks{Manuscript received 17 August 2023; revised 14 February 2024 and 2 May 2024; accepted 3 May 2024. The work of Q. Gontier is supported by Innoviris under the Stochastic Geometry Modeling of Public Exposure to EMF (STOEMP-EMF)~grant. The work of M. Di Renzo is supported in part by the European Commission through the Horizon Europe project titled COVER under grant agreement number 101086228, the Horizon Europe project titled UNITE under grant agreement number 101129618, and the Horizon Europe project titled INSTINCT under grant agreement number 101139161, as well as by the Agence Nationale de la Recherche (ANR) through the France 2030 project titled ANR-PEPR Networks of the Future under grant agreement NF-Founds 22-PEFT-0010, and by the CHIST-ERA project titled PASSIONATE under grant agreement CHRIST-ERA-22-WAI-04 through ANR-23-CHR4-0003-01.}
\thanks{Q. Gontier, F. Horlin and Ph. De Doncker are with Universit\'e Libre de Bruxelles, OPERA-WCG, Avenue Roosevelt 50 CP 165/81, 1050 Brussels, Belgium (quentin.gontier@ulb.be).}
\thanks{C. Wiame is with NCRC Group, Massachusetts Institute of Technology, Cambridge, MA 02139 USA.}
\thanks{S. Wang and J. Wiart are with Chaire C2M, LTCI, Télécom Paris, Institut Polytechnique de Paris, 91120 Palaiseau, France.}
\thanks{M. Di Renzo is with Universit\'e Paris-Saclay, CNRS, CentraleSup\'elec, Laboratoire des Signaux et Syst\`emes, 3 Rue Joliot-Curie, 91192 Gif-sur-Yvette, France. (marco.di-renzo@universite-paris-saclay.fr).}
\thanks{C. Tsigros is with Department Technologies et Rayonnement, Brussels Environment, Belgium.}
\thanks{C. Oestges is with ICTEAM Institute, Université Catholique de Louvain, Louvain-la-Neuve, Belgium.}
}


\maketitle
\begin{abstract}
This paper evaluates the downlink performance of cellular networks in terms of coverage and electromagnetic field exposure (EMFE), in the framework of stochastic geometry. The model is constructed based on datasets for sub-6~GHz macro cellular networks but it is general enough to be applicable to millimeter-wave networks as well. On the one hand, performance metrics are calculated for $\beta$-Ginibre point processes which are shown to faithfully model a large number of motion-invariant networks. On the other hand, performance metrics are derived for inhomogeneous Poisson point processes with a radial intensity measure, which are shown to be a good approximation for motion-variant networks. For both cases, joint and marginal distributions of the EMFE and the coverage, and the first moments of the EMFE are provided and validated by Monte Carlo simulations using realistic sets of parameters from two sub-6~GHz macro urban cellular networks, i.e., 5G~NR~2100 (Paris, France) and LTE~1800 (Brussels, Belgium) datasets. In addition, this paper includes the analysis of the impact of the network parameters and discusses the achievable trade-off between coverage and EMFE.
\end{abstract}

\begin{IEEEkeywords}
$\beta$-Ginibre point process, coverage, dynamic beamforming, EMF exposure, inhomogeneous Poisson point process, Nakagami-$m$ fading, stochastic geometry.
\end{IEEEkeywords}

\section{Introduction}

\IEEEPARstart{T}{elecommunication} operators are faced with the challenge of optimizing the coverage of their cellular networks while ensuring compliance with public electromagnetic field exposure (EMFE) limits. On the one hand, the study of coverage, outage, data rate or spectral efficiency can be carried out through the evaluation of the signal-to-interference-and-noise ratio (SINR). On the other hand, the EMFE is subject to restrictions for public health reasons specified in terms of incident power density (IPD), or, equivalently, in terms of electric field strength\cite{GontierAccess, app10238753}. However, both metrics are still most often considered independently even though their correlation is high and their combination is necessary to fully address network optimization problems. Both also heavily depend on the randomness in the network topology. To capture this aspect, stochastic geometry (SG) theory can be efficiently employed as an alternative to numerical simulations. Using this framework, base stations (BSs) are modeled as spatial point processes (PPs)\cite{Baccelli1997StochasticGA, tutorial}, for which closed-form expressions characterizing the average network performance can be derived. Motivated by these considerations, \textbf{the main aim of this paper is to introduce a mathematical framework, applicable to real-world networks, for jointly evaluating the trade-offs between SINR and EMFE to electromagnetic fields for both motion-invariant (MI)}, i.e. both stationary (translation invariant) and isotropic (rotation invariant) networks, \textbf{and motion-variant (MV) networks}.

\subsection{Related Works}
\subsubsection{Evaluation of the SINR by using SG}\label{sssec:IPD_soa}
The SINR was the initial metric explored within the SG framework\cite{baccellimetrics}. Regular hexagonal lattices\cite{hexagonal2016} or perfect square lattices\cite{InterferenceLargeNetworkHaenggi} were the first models of cellular networks being considered. They were embedded in the SG framework because they are extreme cases for modeling the repulsion between the points of a PP. These models are often limited to simulations because of their lack of mathematical tractability. Another body of studies \cite{Baccelli1997StochasticGA, Lee2013, 7733098} employed the homogeneous Poisson point process (H-PPP), consisting of randomly located points in the region of interest, without any spatial dependence between them. Although the H-PPP leads to highly tractable analytical expressions, which explains its use in most of the SG literature applied to telecommunication networks, it cannot model spatial repulsive or attractive behaviors between points. Numerical results have shown that H-PPPs provide a lower bound for the coverage probability of real cellular networks while hexagonal lattices provide an upper bound\cite{6042301}. In another body of works, more advanced PP models are investigated (e.g., Strauss Hardcore PP\cite{GSPP}, Grey Saturation PP\cite{GSPP}, Poisson Hard-core process\cite{PHCP}). A complete taxonomy of PP models frequently used in the literature of cellular networks can be found in\cite{DiRenzoBible2021}. Some models lead to tractable expressions but have not been tested in real-world networks\cite{8187697, PHCP}. Other models have been validated in real-world networks but the analytical expressions are highly intractable mathematically: approximations are usually required to obtain mathematical expressions for useful performance metrics\cite{Kibilda2016, Anjin2013, Zhang_2021, 6524460, GSPP}. Finally, some studies provide models addressing tractability and accuracy aspects, but they are only applicable to MI networks\cite{IDT, Baccelli_DPP2}. This is for example the case of $\beta$-Ginibre point processes ($\beta$-GPPs) which allow for tractable and accurate modeling of many MI networks, as shown for the city of Paris, France, in \cite{BGPP_Gomez} and for thirteen networks in Western Europe in\cite{cost2022}. Previous analyses for $\beta$-GPP cellular networks are however limited to the SINR cumulative distribution function (CDF)\cite{Ginibre_theory, GinibreNaoto14}. Regarding MV networks, the literature is very scarce. Inhomogeneous Poisson point processes (I-PPPs) models have been widely used in the literature, but only to replace equivalent more complex MI PPs, from the point of view of the typical user for downlink (DL) analyses\cite{IDT} or from the point of view of the typical BS for uplink (UL) analyses\cite{uplink_meta, Bai16}. These works cannot capture the dependence of the network performance on the location of the user under investigation. The only attempt to derive the SINR complementary cumulative distribution function (CCDF) in an inhomogeneous network is in\cite{alpha-stable} where the authors consider a generalized PPP setup with an $\alpha$-stable distributed BS density (which can be seen as a generalization of the I-PPP), whose parameters are fitted to empirical data but at a cost of analytical tractability. 

\subsubsection{Evaluation of the IPD by using SG}
More recently, the IPD has progressively gained interest in the SG community to evaluate the performance of wireless power transfer systems\cite{HE2, HE3}, with the objective to estimate the amount of power that can be harvested in the context of the Internet-of-Things. In \cite{ECC}, the energy correlation is investigated in a wireless power transfer system where the transmitters use dynamic beamforming (BF). The evaluation of the IPD to assess the EMFE for public health concerns is more recent. Indeed, the IPD and EMFE are highly coupled since the objective of EMF-aware systems is to ensure that the incident power is low enough to fulfill EMFE thresholds. In the SG context, a first attempt to model the EMFE can be found in\cite{app10238753}. The authors use an empirical propagation model for a 5G massive multiple-input multiple-output (mMIMO) network in the millimeter wave (mmWave) band. In\cite{GontierAccess}, the theoretical distribution of the EMFE is compared to an experimental distribution obtained from measurements in an urban environment. This model is then used in\cite{GontierICC} to study BS densification scenarios. It is worth mentioning that numerous works employed deterministic models to evaluate the impact of network densification on the EMFE\cite{chiaraviglio2021cellular, Deruyck18}. The EMFE has also been numerically assessed in an indoor environment in \cite{Shikhantsov20}, by using a methodology akin to the SG framework. This involves obtaining a large number of ray-launching realizations by employing a randomized arrangement of scatterers for each realization. In the context of SG modeling, the EMFE is analyzed considering a max-min fairness power control in a 5G mMIMO network in\cite{power_control}. At last, the EMFE is analyzed in networks where sub-6~GHz and mmWave BSs coexist in\cite{9511258}. It is worth noting that all existing works characterizing the IPD represent the network topology by relying on a H-PPP, which cannot capture the spatial repulsions and attractions that characterize general network deployments. 
\subsubsection{Joint evaluation of the IPD and SINR by using SG}
In the context of energy harvesting, SG brought a new perspective to simultaneous wireless information and power transfer (SWIPT) analyses by allowing the computation of the joint CCDF in order to find a trade-off between coverage and harvested power. These studies take into account many features including line-of-sight and non-line-of-sight links, time switching and power splitting schemes, dynamic BF\cite{TVC_SWIPT} and mMIMO \cite{SWIPT_MIMO}. Similar tools can be used for a joint analysis of the EMFE and the SINR. In a recent study \cite{chen2023joint}, the authors delved into the analysis of the EMFE for DL and UL transmissions, originating from both BSs and active users' smartphones, alongside the SINR for DL communication. A more comprehensive investigation \cite{Qin24} extends the analysis to include passive EMFE from BSs and active users, while also considering the impact of EMFE limits on network coverage. Furthermore, resource allocation optimization to maximize the number of connected users to a BS is explored in \cite{Pardo24} via MC simulations, taking into account coverage requirements and EMFE considerations. However, a joint analysis of EMFE and SINR, merging them into a single metric, remains largely unexplored in the existing literature. To the best of the authors' knowledge, the only analytical frameworks studying EMFE and SINR jointly are reported in\cite{manhattan} for Manhattan networks and\cite{cell-free} for user-centric cell-free mMIMO networks. 

\subsection{Contributions}
Motivated by these considerations, the aims of the present paper are (i) to introduce an analytical framework for jointly evaluating the trade-offs between coverage and EMFE for two different PPs (MI and MV) and (ii) to validate the approach by using realistic datasets for sub-6 GHz macro BSs tailored to the large majority of urban and rural environments. The specific contributions of this paper are as follows:
    \subsubsection{Motion-invariant networks}~For MI networks, the proposed mathematical approach is based on a $\beta$-GPP. The first contribution is to complement the approaches described in Subsection \ref{sssec:IPD_soa} by developing a framework for calculating mathematical expressions for the following metrics:   
    \begin{itemize}
        \item Mean and variance of the EMFE;
        \item Marginal CDF of the EMFE;
        \item Joint CDF of the EMFE and SINR.
    \end{itemize}
    \subsubsection{Motion-variant networks} The MI assumption does not hold anymore for cities with a historic city center that is characterized by a high BS deployment density and a lower deployment density as the distance from the city center increases. In these scenarios, in addition, the BS density is often found to be angle-independent. We, therefore, go beyond classical studies applied to MI networks by introducing an I-PPP model for MV networks, which is characterized by a radial intensity measure. The second contribution is the development of a comprehensive framework for calculating mathematical expressions in MV networks. In particular, the following metrics are derived: 
    \begin{itemize}
        \item Mean and variance of the EMFE;
        \item Marginal CDF of the EMFE and CCDF of the SINR;
        \item Joint CDF of the EMFE and SINR.
    \end{itemize} 

Finally, the developed frameworks are applied to real-world networks for both the MI and the MV cases. The network performance is evaluated using realistic system parameters.

A comparison between this work and the most related works is summarized in Table~\ref{tab:comparison}.

{\small
\begin{table*}[t]
\begin{center}
\caption{Comparison between the relevant SG literature and this work. $^{*}$: Study of the SNR only.}
\begin{tabular}{|c|c|c|c|c|c|c|c|c|c|c|}
 \hline
 \multirow{3}{*}{Ref.} & \multicolumn{6}{c|}{Topology} & \multicolumn{4}{c|}{Mathematical performance metrics of interest} \\
 \hline
 & Spatial & Motion- & Real- & \multirow{2}{*}{Tractability} & Nakagami-$m$ & Dynamic & SI(N)R & IPD & Joint & \multirow{2}{*}{DL/UL} \\
  & repulsion & variance & world & & fading & beamforming & CCDF & CCDF & distribution & \\
 \hline
 \cite{GontierAccess, GontierICC, app10238753} &  &  & \checkmark & \checkmark &  &  &  & \checkmark &  & DL\\
 \hline
 \cite{Baccelli1997StochasticGA, tutorial} &  &  & \checkmark & \checkmark &  &  &  &  &  & /\\
 \hline
 \cite{baccellimetrics} &  &  &  & \checkmark &  &  & \checkmark &  &  & DL\\
 \hline
 \cite{hexagonal2016, GSPP} & \checkmark &  &  &  &  &  & \checkmark &  &  & DL\\
 \hline
 \cite{InterferenceLargeNetworkHaenggi, 6524460} & \checkmark &  &  &  & \checkmark &  & \checkmark &  &  & DL\\
 \hline
\cite{Lee2013, 6042301} &  &  & \checkmark &  &  &  & \checkmark &  &  & DL\\
 \hline
\cite{7733098} &  &  & \checkmark &  & \checkmark &  & \checkmark &  &  & DL+UL\\
 \hline
 \cite{PHCP, 8187697, Ginibre_theory, GinibreNaoto14} & \checkmark &  &  & \checkmark &  &  & \checkmark &  &  & DL\\
 \hline
 \cite{Kibilda2016, BGPP_Gomez, cost2022} & \checkmark &  & \checkmark & \checkmark &  &  &  &  &  & DL\\
 \hline
 \cite{Anjin2013, Zhang_2021, IDT, Baccelli_DPP2} & \checkmark &  & \checkmark & \checkmark &  &  & \checkmark &  &  & DL\\
 \hline
 \cite{uplink_meta, Bai16} &  &  &  &  &  &  & \checkmark &  &  & UL\\
 \hline
 \cite{alpha-stable} & \checkmark & \checkmark & \checkmark &  &  &  & \checkmark &  &  & DL\\
 \hline
 \cite{HE2, HE3} &  &  &  & \checkmark &  &  &  & \checkmark &  & DL\\
 \hline
 \cite{ECC} &  &  &  & \checkmark & \checkmark & \checkmark &  &  &  & DL\\
 \hline
 \cite{power_control} &  &  &  & \checkmark &  &  &  &  &  & DL\\
 \hline
 \cite{9511258} &  &  &  & \checkmark & \checkmark & \checkmark &  & \checkmark &  & DL\\
 \hline
 \cite{TVC_SWIPT, SWIPT_MIMO} &  &  &  & \checkmark &  & \checkmark &  &  & \checkmark & DL\\
 \hline
 \cite{chen2023joint} &  &  &  & \checkmark & \checkmark &  & \checkmark$^{*}$ & \checkmark &  & DL+UL\\
 \hline
 \cite{Qin24} &  &  &  & \checkmark &  & \checkmark & \checkmark & \checkmark &  & DL+UL\\
 \hline
 \cite{manhattan, cell-free} & \checkmark &  &  & \checkmark &  &  & \checkmark & \checkmark & \checkmark & DL\\
 \hline
  This work & \checkmark & \checkmark & \checkmark & \checkmark & \checkmark & \checkmark & \checkmark & \checkmark & \checkmark & DL\\
 \hline
\end{tabular}
\end{center}
\label{tab:comparison}
\end{table*}
}
\subsection{Structure of the Paper}
The paper is organized as follows: Section~\ref{sec:system_model} introduces the network topologies and the system model. Section~\ref{sec:analytical_results} provides mathematical expressions for the performance metrics of interest by using SG. Numerical validations based on two real-world network deployments are provided in Section~\ref{sec:numerical_results}. Finally, conclusions are given in Section~\ref{sec:conclusion}.

\section{System Model}
\label{sec:system_model}

\subsection{Mathematical Background}
\label{ssec:background}
Let $\mathcal{B} \in \mathbb{R}^2$ be the two-dimensional area where the considered network is located. Let $\Psi = \{\!X_i\!\}$ be the PP of BSs in $\mathcal{B}$, which are assumed to have the same technology, to belong to the same network provider and to transmit at the same frequency $f$. $\Psi$ is modeled as a realization of a PP of density $\lambda(u), u \in \mathcal{B}$. $\Psi$ is stationary if its statistical properties remain unaffected after any translation. It is isotropic if its statistical properties are invariant under any rotation. $\Psi$ is called MI if it is stationary and isotropic.

\subsubsection{Motion-Invariant Networks}\label{ssec:motion_invariant_model}
In the MI case, a good estimator of the true BS density $\lambda$ is the number of BSs within $\mathcal{B}$ divided by the area of $\mathcal{B}$. Motion-invariance implies that the performance metrics are statistically identical at any point in an infinite network. It is common practice to consider a typical user centered at the origin to facilitate the analysis. In a finite network, it is often assumed that the results do not vary as the typical user stays away from the boundaries of the area $\mathcal{B}$. 

In Section \ref{sec:analytical_results}, the performance metrics for MI networks will be derived for $\beta$-GPP models, which are characterized by a constant density $\lambda$ and a parameter $\beta$. This latter is between 0 and 1\cite{Ginibre_theory}, with $\beta\!=\!0$ corresponding to a~H-PPP and $\beta\!=\!1$ to a Ginibre Point Process (GPP). The main difficulty compared to other PP models comes from the fact that $\beta$-GPPs are constructed from GPPs, which are defined from a complex kernel\cite{Ginibre_theory}. For these PP models, an analytical expression for the probability density function (PDF) of the distance to the nearest BS, $f_{R_0, \Theta_0}$, can be obtained as follows. Let $X_i, i\in \mathbb{N}$ denote the points of a $\beta$-GPP with density $\lambda$, which is denoted by $\Phi_\lambda^\beta$. $\Phi_\lambda^\beta$ is constructed from the GPP $\Phi_{\lambda/\beta}^1 = \{{X}_i\}_{i\in \mathbb{N}}$ with density $\lambda/\beta$ by independent thinning with probability $\beta$. The set of square distances, $\{|{X}_i|^2\}_{i\in \mathbb{N}}$, has the same distribution as $\{Y_i\}$ such that $Y_i \sim \text{Gamma}(i, \pi\lambda/\beta), i \in \mathbb{N}$ are mutually independent\cite{GinibreNaoto14}. This property can be used to derive performance metrics since it is possible to take advantage of the existence of an analytical PDF for the square distance to each BS $X_i$. The PDF of $Y_i$ is given by {\small\begin{equation}\label{eq:BGPP_pdf}
    f_i(u) = {u^{i-1}}\, e^{-\frac{c u}{\beta}}\, (c/\beta)^i/{(i-1)!}
\end{equation}}
with $c\! = \!\lambda\pi$. To obtain $\Phi_\lambda^\beta$ from $\Phi_{\lambda/\beta}^1$, each element of $\Phi_\lambda^\beta$ is associated with a mark $\xi$. Then, $\{\xi_i\}_{i \in \mathbb{N}}$ is the set of marks of $\Phi_{\lambda/\beta}^1$, which are mutually independent and identically distributed random variables with $\mathbb{P}(\xi_i = 1) = \beta$ and $\mathbb{P}(\xi_i = 0) = 1-\beta$. Accordingly, the serving BS denoted by $X_0$ is $X_i$ if $\xi_i = 1$ and if all the other BSs are located farther than $X_i$, i.e. $\{X_0 = X_i\} = \{\xi_i = 1\} \cap \mathcal{A}_i$ where $\mathcal{A}_i = {\left\{\xi_j=1, |X_j| > |X_i|\right\} \cup \{\xi_j = 0\}} \forall j\in \mathbb{N} \setminus \{i\}$. Further details can be found in \cite{Ginibre_theory}.

\subsubsection{Motion-Variant Networks}\label{sssec:min_math}
If $\Psi$ is MV, an inhomogeneous PP model needs to be selected. Unfortunately, inhomogeneous models are intractable mathematically, except for the I-PPP, which can be viewed as an approximation of some more complex models. The computation of performance metrics requires the use of a spatially-varying density function $\lambda(u), u \in \mathcal{B}$, which is fitted to the empirical BS density by, e.g., using a least-square method. 

Large European cities are often characterized by the presence of a densely populated historic center, with old buildings and an irregular street organization, leading to a high density of BSs to accommodate the large data traffic. As the distance from the city center increases, the density of antennas decreases, leading to an almost radial density\cite{cost2022}. Based on these considerations, a flexible radial density model is chosen, which is monotonically decreasing with $\rho$ and is characterized by 6 parameters $\Tilde{a}$, $\Tilde{b}$, $\Tilde{c}$, $\Tilde{d}$, $\Tilde{\rho}$, $\Tilde{\theta}$:
{\smalltonormalsize
\begin{align}\label{eq:IPPP_density}
    \lambda(\rho, \theta) = \frac{\Tilde{a}}{\Delta(\rho, \theta)}+\Tilde{b}+\Tilde{c} \Delta(\rho, \theta) + \Tilde{d} \left(\Delta(\rho, \theta)\right)^2
\end{align}}with $\Delta^2(\rho, \theta) = {\Tilde{\rho}^2+\rho^2-2\Tilde{\rho} \rho \cos(\theta - \Tilde{\theta})}$. The user is assumed to be located at the origin of the coordinate system. The 4 parameters $\Tilde{a}$, $\Tilde{b}$, $\Tilde{c}$, $\Tilde{d}$ must be chosen so that the conditions $\lambda(\rho, \theta) \geq 0$ and $\partial \lambda / \partial \Delta \leq 0$ are met for $\rho \leq \tau$, where $\tau$ is the radius of the disk centered at the origin inside which the network is studied and can take arbitrary large or infinite values. For this choice of parameters, $(\Tilde{\rho}, \Tilde{\theta})$ is the point of maximal BS density. The mathematical tractability of the performance metrics increases if the intensity measure
{\smalltonormalsize
\begin{equation}\label{eq:intensity_measure}
    \Lambda(\mathcal{D}) = \int_{\mathcal{D}}\lambda(u)\, du,
\end{equation}}where $\mathcal{D} \in \mathcal{B}$, is purely radial, which is the case analyzed here. The mathematical expressions derived for MV networks in this paper are only valid for the typical user located at the origin of the coordinate system but, as shown in Subsection~\ref{ssec:analysis_IPPP}, they can be calculated at other user locations through a change of coordinate system so that the user is located at the center of this new coordinate system. Practically speaking, the approach consists of replacing $\lambda(u)$ by $\lambda(u-u_C)$ in all the expressions, where $u_C$ is the point of interest in the old coordinate system.

\subsection{Propagation Model}
\label{ssec:propagation_model}
The propagation model is defined as
\begin{equation}\label{eq:model}
    P_{r, i} = P_t G_{i} G_r|h_i|^2 l_i
\end{equation}
where $P_{r, i}$ is the received power from BS $X_i$, $P_t$ is the transmit power of $X_i$, $G_{i}$ is the gain of $X_i$ in the direction of the user, $G_r$ is the gain of the receiver, assumed isotropic and equal to 1 to simplify the analysis, $|h_i|^2$ accounts for the fading and
$l_i = l(X_i) = \kappa^{-1} \left(r_i^2+z^2\right)^{-\alpha/2}$ is the path loss attenuation with exponent $\alpha > 2$, $\kappa = (4\pi f/c_0)^2$ where $c_0$ is the speed of light, $r_i$ is the distance between the user and the BS $X_i$, and $z > 0$ is the height of $X_i$. 

In the following, a Nakagami-$m$ fading model is assumed, which means that $|h_i|^2$ is gamma distributed with a shape parameter $m$ and a scale parameter $1/m$. The CDF of $|h_i|^2$ is then $F_{|h|^2}(x) = \gamma(m,m x)/\Gamma(m)$ where $\gamma(\cdot,\cdot)$ is the lower incomplete gamma function and $\Gamma(\cdot)$ is the gamma function.
 
The BSs are assumed to employ dynamic BF, and an analysis based on BS databases \cite{Cartoradio, IBGE} revealed that BSs implementing dynamic BF have significantly lower sidelobes compared to the main lobe. Additionally, the analysis indicated that network providers equip each BS with three similar antenna arrays oriented at 120$^\circ$ intervals, each capable of covering 120$^\circ$. From these considerations and for the sake of mathematical tractability, the actual antenna array patterns are approximated using a sectored antenna model. The gain of the antenna array of a generic BS can be expressed as follows:
{\smalltonormalsize
\begin{equation}\label{eq:gain}
    G_t = \begin{cases}
         G_{\text{max}} \qquad &\text{ if } |\theta| \leq \omega\\
         0 \qquad &\text{ if } |\theta| > \omega
    \end{cases}
\end{equation}}
where $\omega \in [0, 2\pi/3]$ and $\theta \in [0;2\pi[$ is the angle off the boresight direction and $\omega$ is the beamwidth of the main lobe. The serving BS is assumed to estimate the angle of arrival and to adjust its antenna steering orientation accordingly. The model assumes no alignment error. The beams of all the interfering BSs are considered to be randomly oriented with respect to each other, uniformly distributed in $[0;2\pi[$. Consequently, the gains, $G_i$, of the interfering BSs in the direction of the user are Bernouilli random variables with probability $3\omega/(2\pi)$, which is the probability that the typical user is illuminated by the main lobe of each interfering BS. 
For notational simplicity, $G_{\text{max}} = 1$ is normalized, and then we can simplify the notation as $P_t G_{\text{max}} \to P_t$.

The mathematical expressions derived in the following section are defined for a circular area $\mathcal{B}$ of radius $\tau$ located in the $z=0$ plane and centered on the calculation point. The calculations take an exclusion radius $r_e \geq 0$ into account, representing a non-publicly accessible area around the BSs. A closest BS association policy is assumed. 

Define $\Bar{P}_{r,i}\! =\! \Bar{P}_{r,i}(r_i^2) \!= \!P_t l_i$ to simplify the notation. This quantity accounts for the maximal received power at distance $R_i$ from $X_i$, averaged over fading. It incorporates both the transmit power and the attenuation due to the path loss. Let $S_0\! =\! \Bar{P}_{r,0}|h_0|^2$ be the useful power received from the serving BS $X_0$ that is assumed to be the closest to the user and let $I_0 = \sum_{i \in \Psi \setminus\left\{X_0\right\}} \Bar{P}_{r,i} G_{i} |h_i|^2$ be the aggregate interference. Based on these definitions, the SINR conditioned on the distance to the serving BS is given by
{\smalltonormalsize
\begin{align}\label{eq:SINR}
    \text{\normalfont{SINR}}_0 = \frac{S_0}{I_0 + \sigma^2}
\end{align}}where $\sigma^2$ is the thermal noise power. In the following, the performance metrics will be derived for the user DL power EMFE defined as
{\smalltonormalsize
\begin{align}\label{eq:expP}
    \mathcal{P} = \sum_{i \in \Psi} \Bar{P}_{r,i}\,G_{i}\, |h_i|^2 = S_0+I_0,
\end{align}}which can be converted into a total IPD, expressed in W/m$^2$,~as
{\smalltonormalsize
\begin{equation}\label{eq:ExpWM2}
    \mathcal{S} = \sum_{i \in \Psi} \frac{P_{t}\,G_{i}\,|h_i|^2}{4\pi \left(r_i^2+z^2\right)^{\alpha/2}} = \frac{\kappa}{4\pi}\mathcal{P}\footnote{Note that the IPD is frequency-independent. The frequency dependence of $\kappa$ in the right-hand term of the relationship $\mathcal{S} = \frac{\kappa}{4\pi}\mathcal{P}$ cancels out due to the $\kappa^{-1}$ dependence of $\mathcal{P}$.}
\end{equation}}by definition, and, finally, into a root-mean-square electric field strength in V/m as
{\smalltonormalsize\begin{equation}\label{eq:ExpVM}
    E[\text{V/m}] = \sqrt{120 \pi \mathcal{S}}.
\end{equation}}

\section{Mathematical Framework}
\label{sec:analytical_results}
\begin{figure}[!ht]
    \begin{center}
    \includegraphics[width=1\linewidth, trim={5cm, 4cm, 5cm, 5cm}, clip]{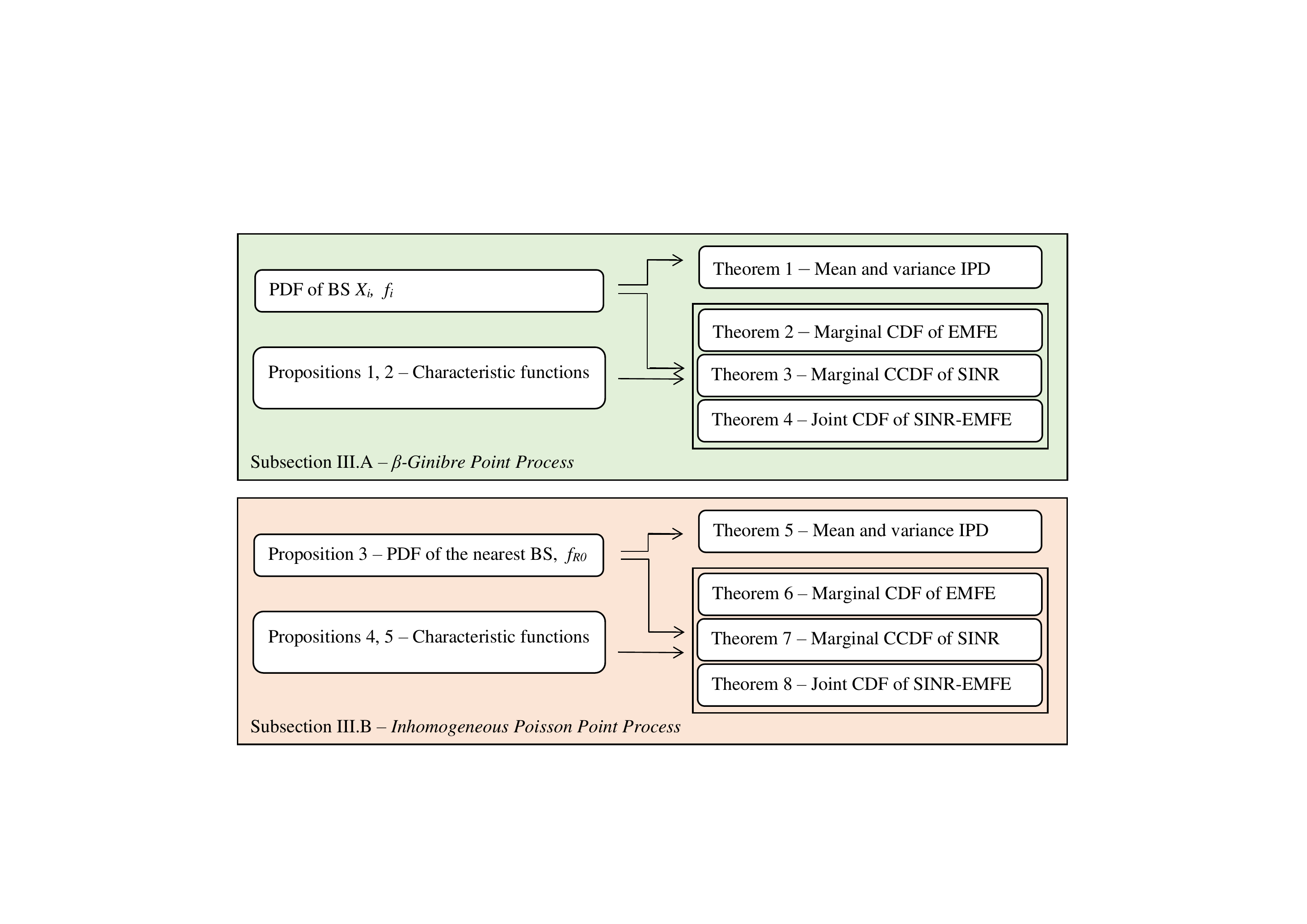}
    \end{center}
    \caption{Organization of Section \ref{sec:analytical_results}}
    \label{fig:Schema section III}
\end{figure}

The structure of this section is illustrated in Fig.~\ref{fig:Schema section III}. Let $T$ be the {\normalfont SINR} threshold for reliable transmission at the user equipment, and let $T'$ be the maximum allowed level of DL EMFE in power scale. The CDFs, CCDFs and PDFs can be calculated using Gil-Pelaez's inversion theorem\cite{gil-pelaez} from the knowledge of (i) the characteristic functions (CFs) of the useful signal $\phi_S$ and of the interference $\phi_I$, and (ii) the density function $f_{R_0, \Theta_0}$ of the position to the serving BS denoted by the random variables $(R_0, \Theta_0)$.

\subsection{Motion-Invariant Networks: $\beta$-Ginibre Point Process}
\label{ssec:mi} 
Based on the considerations in Subsections~\ref{ssec:background} and \ref{ssec:propagation_model}, the mean and the variance of EMFE can be calculated.
\begin{theorem}\label{eq:BGPP_meanexp}The mean of the EMFE of a $\beta$-GPP given the propagation model in~\eqref{eq:model} is given by
{\small
\begin{multline}\label{eq:BGPP_meanexpeq}
        \kern-.6em\!\mathbb E\left[\mathcal{P}\right] \!=\! \beta  \int_{r_e^2}^{\tau^2}\!\left[\left(\Bar{P}_{r}(u) + \frac{2 p_g c \left[\Bar{P}_{r}(r^2)\left(r^2+z^2\right)\right]_{r = \tau}^{r = \sqrt{u}}}{\alpha-2} \!\!\right) \!\Omega(u)\right.\\
        \left.-p_g \beta\int_{u}^{\tau^2} \Omega^*(u,v) \Bar{P}_{r}(v) dv\right] du
\end{multline}} where 
\begin{equation*}
    \Omega(u) = \sum_{i \in \mathbb{N}} f_i(u) \Upsilon_i^\beta(u), \quad \Omega^*(u,v) = \sum_{i \in \mathbb{N}} f_i(u) f_i(v) \Upsilon_i^\beta(u),
\end{equation*}
\begin{equation*}
    \Omega^{**}(u,v,w) = \sum_{i \in \mathbb{N}} f_i(u) f_i(v) f_i(w)\Upsilon_i^\beta(u),
\end{equation*}
\begin{equation*}
    \Upsilon_i^\beta(u) = \prod_{j \in \mathbb{N} \setminus \{i\}}\Bigg(1-\beta+\beta \,\frac{\Gamma\left(j, \frac{c u}{\beta}, \frac{c\tau^2}{\beta}\right)}{(j-1)!}\Bigg).
\end{equation*}
The associated variance is given by $\mathbb V\left[\mathcal{P}\right] = \mathbb E\left[\mathcal{P}^2\right] - \left(\mathbb E\left[\mathcal{P}\right]\right)^2$, where  $\mathbb E\left[\mathcal{P}^2\right]$ is given in \eqref{eq:BGPP_varexpeq} shown at the top of the next page. We denote $[f(x)]_{x=a}^{x=b} = f(b) - f(a)$ for ease of writing.\begin{figure*}[!ht]
{\footnotesizetosmall
\begin{equation}\label{eq:BGPP_varexpeq}
    \begin{split}
        &\mathbb E\left[\mathcal{P}^2\right] =\! \int_{r_e^2}^{\tau^2}\! \Omega(u)\left(\beta \tfrac{m+1}{m} \Bar{P}_{r}^2(u) + \beta p_g \, \Bar{P}_{r}(u) \tfrac{2 c}{\alpha-2} \left[\Bar{P}_{r}(r^2)\left(r^2+z^2\right)\right]_{r = \tau}^{r = \sqrt{u}}+\tfrac{m+1}{m} \tfrac{p_g c}{\alpha-1} \left[\Bar{P}_{r}^2(r^2)\left(r^2+z^2\right)\right]_{r = \tau}^{r = \sqrt{u}}\right.\\
        &\quad-\left.p_g^2 c^2 \!\int_{u}^{\tau^2}\!\int_{u}^{\tau^2}\!e^{-\tfrac{c(v+w)}{\beta}} I_0\left(2c\sqrt{vw}/\beta\right)\Bar{P}_{r}(v)\Bar{P}_{r}(w) dw\,dv+\left(\tfrac{2 p_g c }{\alpha-2} \left[\Bar{P}_{r}(r^2)\left(r^2+z^2\right)\right]_{r = \tau}^{r = \sqrt{u}}\right)^2\right)  du\\
        &\quad-\!\int_{r_e^2}^{\tau^2}\! \int_{u}^{\tau^2}\! p_g \beta \Bar{P}_{r}(v)\!\left[\Omega^*(u, v)\!\left( \beta  \Bar{P}_{r}(u)+ \tfrac{m+1}{m} \Bar{P}_{r}(v) \!+\! \tfrac{4 p_g c }{\alpha-2} \!\left[\Bar{P}_{r}(r^2)\left(r^2+z^2\right)\right]_{r = \tau}^{r = \sqrt{u}}  \right)-2 p_g \beta  \int_{u}^{\tau^2}\! \Omega^{**}(u,v,w) \Bar{P}_{r}(w) dw\right] dv \,du
    \end{split}
\end{equation}}
{\footnotesizetosmall
\begin{multline}\label{eq:BGPP_varexpeq_nobf}
    \mathbb E\left[\mathcal{P}^{*2}\right] = \tfrac{2(m+1)}{m(\alpha-1)}\,\left[\Bar{P}_{r}^2(r)\left(r^2+z^2\right)\right]_{r = \tau}^{r = r_e} + \left(\tfrac{2  c}{\alpha-2} \left[\Bar{P}_{r}(r^2)\left(r^2+z^2\right)\right]_{r = \tau}^{r = r_e}\right)^2 - c^2 \!\int_{r_e^2}^{\tau^2}\!\int_{r_e^2}^{\tau^2}\!e^{-\tfrac{c(v+w)}{\beta}} I_0\left(\tfrac{2c\sqrt{vw}}{\beta}\right)\Bar{P}_{r}(v)\Bar{P}_{r}(w) dw\,dv
\end{multline}}
{\small
\begin{equation}\label{eq:BGPP_jointeq}
\begin{split}
    G(T,T') &\triangleq \mathbb E_0\left[ \mathbb{P}\left[\frac{S_0}{I_0+\sigma^2} > T,S_0+I_0 < T'\right]\right]\! \\
    &= \beta \int_{{r_e^2}}^{{\tau^2}}\!\left(\frac{\Omega(u)}{2}F_{|h|^2}\!\left(\frac{T'}{\Bar{P}_{r}(u)}\right)- \frac{1}{\pi q}\int_{0}^{\infty} \text{\normalfont Im}\left[ \sum_{i \in \mathbb{N}} f_i(u) \Upsilon_i^\beta(u)\phi_{I,i}(q|u) \zeta(T,T',\Bar{P}_{r}(u)) \right]  dq \right)du
\end{split}
\end{equation} with
\begin{equation}\label{eq:zeta}
    \zeta(T,T',\Bar{P}_{r})= \frac{m^m}{(m-1)!}\frac{\gamma\left(m, \frac{-T''}{\Bar{P}_{r}}\left(m+j\frac{q\Bar{P}_{r}}{T}\right)\right)
    }{(m+j\frac{q\Bar{P}_{r}}{T})^m}\,e^{jq\sigma^2} + \frac{m^m}{(m-1)!}\frac{\Gamma\left(m; \frac{T' \left(m-jq\Bar{P}_{r}\right)}{\Bar{P}_{r}}, \frac{T''\left(m-jq\Bar{P}_{r}\right)}{\Bar{P}_{r}}\right)}{(m-jq\Bar{P}_{r})^m}e^{-jqT'}.
\end{equation}
}
\hrulefill
\end{figure*}
\end{theorem}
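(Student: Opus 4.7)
The plan is to exploit the independent thinning construction of the $\beta$-GPP from a GPP, so that the squared moduli $\{Y_i\}$ form an independent family with $Y_i \sim \text{Gamma}(i, c/\beta)$ and the marks $\{\xi_i\}$ are independent $\text{Bernoulli}(\beta)$. Since $\mathcal{P} = S_0 + I_0$ by \eqref{eq:expP}, I split $\mathbb{E}[\mathcal{P}] = \mathbb{E}[S_0] + \mathbb{E}[I_0]$ and $\mathbb{E}[\mathcal{P}^2] = \mathbb{E}[S_0^2] + 2\mathbb{E}[S_0 I_0] + \mathbb{E}[I_0^2]$. First, I derive the PDF of the squared serving-BS distance: the event $\{X_0 = X_i\} = \{\xi_i = 1\} \cap \mathcal{A}_i$ with $\mathcal{A}_i$ as in the paper, and by independence of the $(Y_j, \xi_j)$, conditioning on $Y_i = u$ yields $\mathbb{P}(X_0 = X_i, Y_i \in du) = \beta f_i(u) \Upsilon_i^\beta(u)\,du$, so $f_{R_0^2}(u) = \beta \Omega(u)$. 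Using $\mathbb{E}[|h_0|^2] = 1$ this immediately produces the signal term $\int \bar P_r(u)\beta\Omega(u)\,du$ in \eqref{eq:BGPP_meanexpeq}, and with $\mathbb{E}[|h_0|^4] = (m+1)/m$ it produces the $\frac{m+1}{m}\bar P_r^2(u)\Omega(u)$ contribution in \eqref{eq:BGPP_varexpeq}.

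For $\mathbb{E}[I_0]$, I would sum over all ordered pairs $(X_i, X_j)$ with $j \neq i$, in which $X_i$ is the serving BS and $X_j$ is an interferer. Using $\mathbb{E}[G_j] = p_g$ and the same independence factorisation, the joint probability that $X_0 = X_i$ at $Y_i = u$ and that $X_j$ contributes at $Y_j = v \in (u,\tau^2]$ is $\beta^2 f_i(u)f_j(v) \prod_{k \neq i,j}[\cdot]\,du\,dv$. Completing the partial product by reinserting the missing $j$-factor and subtracting the compensating term, the double sum $\sum_i \sum_{j \neq i}$ collapses via $\sum_j f_j(v) = c/\beta$ into one main contribution proportional to $\Omega(u)\,c\int_u^{\tau^2}\bar P_r(v)\,dv$ and a repulsion correction $-\beta \int_u^{\tau^2}\Omega^*(u,v)\bar P_r(v)\,dv$. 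The closed-form evaluation of $\int_u^{\tau^2}(v+z^2)^{-\alpha/2}dv$ then yields the $2c/(\alpha-2)$ prefactor and the bracket $[\bar P_r(r^2)(r^2+z^2)]_{r=\tau}^{r=\sqrt u}$ appearing in \eqref{eq:BGPP_meanexpeq}.

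$\mathbb{E}[\mathcal{P}^2]$ follows by the same template one order higher. The diagonal contributions to $\mathbb{E}[I_0^2]$ and the cross term $\mathbb{E}[S_0 I_0]$ use $\mathbb{E}[|h|^4] = (m+1)/m$ and $\mathbb{E}[G^2] = p_g$ together with pairwise independence of fading and beamforming across BSs, and they reduce to $\Omega$-, $\Omega^*$- and $\Omega^{**}$-integrals that reproduce every term of \eqref{eq:BGPP_varexpeq} except the Bessel one. That last term originates from the off-diagonal sum $\sum_{i \neq k}$ in $\mathbb{E}[I_0^2]$, where the identity $\sum_i f_i(v)f_i(w) = (c/\beta)^2 e^{-c(v+w)/\beta}I_0(2c\sqrt{vw}/\beta)$, obtained directly from the series definition of the modified Bessel function, supplies the $I_0(\cdot)$ factor. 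Finally $\mathbb{V}[\mathcal{P}] = \mathbb{E}[\mathcal{P}^2] - (\mathbb{E}[\mathcal{P}])^2$ by definition.

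The main obstacle I foresee is the combinatorial bookkeeping of the double and triple sums $\sum_i\sum_{j \neq i}[\cdot]\prod_{k \neq i,j}[\cdot]$: they do not factor cleanly, and the correct simplification requires recognising when a missing factor can be reinserted into $\Upsilon_i^\beta(u)$ at the cost of a compensating subtraction producing $\Omega^*$ or $\Omega^{**}$. Handling the finite truncation $\tau$ consistently (as the upper limit $c\tau^2/\beta$ of the generalised incomplete gamma appearing in $\Upsilon_i^\beta$) and the exclusion radius $r_e$ throughout these manipulations is the most error-prone aspect.
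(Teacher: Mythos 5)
Your architecture mirrors the paper's Appendices~\ref{sec:BGPP_mean_proof} and~\ref{sec:BGPP_var_proof} almost step for step: the exact serving-BS law $\mathbb{P}(X_0=X_i,\,Y_i\in du)=\beta f_i(u)\Upsilon_i^\beta(u)\,du$, the collapse $\sum_{j}f_j(v)=c/\beta$ yielding the closed-form $2c/(\alpha-2)$ bracket, the moments $\mathbb{E}[|h|^4]=(m+1)/m$ and $\mathbb{E}[G^2]=p_g$, the split of $\mathbb{E}[I_0^2]$ into a squared sum minus a sum of squares, and the identity $\sum_{j}f_j(v)f_j(w)=(c/\beta)^2e^{-c(v+w)/\beta}I_0(2c\sqrt{vw}/\beta)$ for the Bessel term are all precisely the paper's steps.

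There is, however, one genuine gap: you present the interference bookkeeping as an exact collapse, and that step would fail. The exact joint density that $X_i$ serves at $Y_i=u$ while $X_j$ interferes at $Y_j=v>u$ carries the product over $k\in\mathbb{N}\setminus\{i,j\}$, i.e.
$\Upsilon_{i,j}^\beta(u)=\Upsilon_i^\beta(u)\left(1-\beta+\beta\,\Gamma\left(j,\tfrac{cu}{\beta},\tfrac{c\tau^2}{\beta}\right)/(j-1)!\right)^{-1}$,
so the missing $j$-th factor can only be ``reinserted'' at the cost of a \emph{multiplicative} correction; no additive compensating term of the $\Omega^*$ form repairs it, and the double sum therefore does not reduce exactly to the $\Omega$ and $\Omega^*$ integrals of \eqref{eq:BGPP_meanexpeq}. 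The $-p_g\beta\int_u^{\tau^2}\Omega^*(u,v)\Bar{P}_r(v)\,dv$ term in the theorem arises from a different, benign operation—completing $\sum_{j\neq i}f_j(v)$ to $\sum_j f_j(v)=c/\beta$—while the mismatch between $\Upsilon_{i,j}^\beta$ and $\Upsilon_i^\beta$ remains. The paper resolves this by explicitly applying $\mathbb E_{\Psi\setminus\{X_i\}}$ and $\mathbb E_\xi$ to the interference sum and to $\mathds{1}_{\mathcal{A}_i}$ \emph{independently}, openly flagged as an approximation and validated by Monte Carlo (a discrepancy of about $2\cdot10^{-8}$~W/m$^2$ on the mean). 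The same decoupling is needed for the cross term $\mathbb{E}[S_iI_i]$ and for both pieces of $\mathbb{E}[I_i^2]$ where $\Omega^{**}$ appears, so your second-moment derivation inherits the identical issue. To make your proof sound you must either invoke this decoupling explicitly as an approximation, or accept residual factors that are not of the $\Omega$, $\Omega^*$, $\Omega^{**}$ form—the combinatorial obstacle you flagged at the end is exactly this, and it is not a matter of bookkeeping but of an irreducible correlation between the interferers and the event $\mathcal{A}_i$.
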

\begin{proof}
    The proof of \eqref{eq:BGPP_meanexpeq} is provided in Appendix~\ref{sec:BGPP_mean_proof} and the proof of \eqref{eq:BGPP_varexpeq} is provided in Appendix~\ref{sec:BGPP_var_proof}.
\end{proof}
The mean has the same expression as for a PPP, as obtained in \cite{GontierAccess} and \cite{ECC}, without and with dynamic BF, respectively. The expressions of the mean and the variance of the EMFE are given for a network of BSs not employing dynamic BF in Corollary~\ref{eq:BGPP_meanexp_nobf}.
\begin{corollary}\label{eq:BGPP_meanexp_nobf}The mean and the variance of the EMFE of a $\beta$-GPP when no dynamic BF is employed are respectively given by
{\smalltonormalsize
\begin{align}\label{eq:BGPP_meanexpeq_nobg}
    \mathbb E\left[\mathcal{P^*}\right] &= \frac{2}{\alpha-2}\, c \left[\Bar{P}_{r}(r)\left(r^2+z^2\right)\right]_{r = \tau}^{r = r_e}
\end{align}} and $\mathbb V\left[\mathcal{P^*}\right] = \mathbb E\left[\mathcal{P^*}^2\right] - \left(\mathbb E\left[\mathcal{P^*}\right]\right)^2$, where  $\mathbb E\left[\mathcal{P^*}^2\right]$ is given in \eqref{eq:BGPP_varexpeq_nobf} shown at the top of the next page. 
\end{corollary}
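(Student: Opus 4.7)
The plan is to proceed by direct application of first- and second-order Campbell-type formulas for the $\beta$-GPP, bypassing the nearest-BS conditioning that was needed in Theorem~\ref{eq:BGPP_meanexp}. When dynamic BF is disabled, $G_i \equiv G_{\text{max}} = 1$ for every $i$, so \eqref{eq:expP} reduces to $\mathcal{P}^{*} = \sum_{X_i \in \Psi} \Bar{P}_{r,i}\,|h_i|^2$ with no angular randomness and no need to distinguish the serving BS from the interferers. Consequently, the $\Omega$, $\Omega^{*}$, $\Omega^{**}$ structure of Theorem~\ref{eq:BGPP_meanexp}, which arose from conditioning on the ordered distances $\{|X_i|^2\}$ via $f_i$ in~\eqref{eq:BGPP_pdf}, is not required here: all BSs enter the sum symmetrically.

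For the mean, I would use linearity together with independence of $\{|h_i|^2\}$ from $\Psi$ and the Nakagami normalization $\mathbb{E}[|h|^2]=1$ to write $\mathbb{E}[\mathcal{P}^{*}] = \mathbb{E}\bigl[\sum_{X_i \in \Psi} \Bar{P}_{r,i}\bigr]$. Since the first-order intensity measure of a $\beta$-GPP of density $\lambda$ coincides with that of a PPP (both equal $\lambda\, du$), Campbell's theorem gives $\mathbb{E}[\mathcal{P}^{*}] = \lambda \int_{r_e \leq |u| \leq \tau} \Bar{P}_r(|u|^2)\, du$. Passing to polar coordinates and substituting $s = r^2+z^2$ produces~\eqref{eq:BGPP_meanexpeq_nobg} in closed form after using $c = \pi\lambda$.

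For the second moment, I would split $(\mathcal{P}^{*})^2$ into a diagonal part ($i=j$) and an off-diagonal part ($i \neq j$). The diagonal part uses $\mathbb{E}[|h|^4] = (m+1)/m$ (second moment of a $\text{Gamma}(m,1/m)$ variable) together with the first-order intensity again, yielding the first term of~\eqref{eq:BGPP_varexpeq_nobf}. The off-diagonal part calls for the second-order product density of the $\beta$-GPP: exploiting that the $\beta$-GPP is determinantal with kernel $K_\beta(u,v) = \beta\, K^{\text{GPP}}_{\lambda/\beta}(u,v)$, one obtains $\rho^{(2)}(u,v) = \lambda^{2} - |K_\beta(u,v)|^{2} = \lambda^{2}\bigl(1 - e^{-c|u-v|^{2}/\beta}\bigr)$. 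The constant ``$1$'' piece separates and integrates to $(\mathbb{E}[\mathcal{P}^{*}])^{2}$, which is exactly the squared-mean term of~\eqref{eq:BGPP_varexpeq_nobf}. The remaining exponential contribution is handled by writing $|u-v|^{2} = r_{1}^{2} + r_{2}^{2} - 2 r_{1} r_{2}\cos(\theta_{1}-\theta_{2})$ in polar coordinates, collapsing the angular integral via $I_{0}(x) = \tfrac{1}{2\pi}\int_{0}^{2\pi} e^{x\cos\theta}\, d\theta$, and changing variables $v = r_{1}^{2}$, $w = r_{2}^{2}$ to recover the Bessel-type double integral.

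The main obstacle is the rigorous derivation of $\rho^{(2)}$ for the $\beta$-GPP, specifically tracking how the independent $\beta$-thinning rescales the kernel so that $K_\beta(u,u) = \lambda$ while $|K_\beta(u,v)|^{2}$ carries the factor $c/\beta$ in its exponent; once this is in hand, the angular integration via $I_{0}$ and the radial change of variables are routine, and both terms of~\eqref{eq:BGPP_varexpeq_nobf} follow immediately. A secondary pitfall is keeping the exclusion ball $B(0,r_e)$ and the outer radius $\tau$ consistent throughout all limits of integration, so that the bracket evaluations $[\,\cdot\,]_{r=\tau}^{r=r_e}$ match the ones already appearing in Theorem~\ref{eq:BGPP_meanexp}.
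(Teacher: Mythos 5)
Your proposal is correct, but it reaches the corollary by a genuinely different route than the paper. The paper never invokes the determinantal structure of the $\beta$-GPP explicitly: for the mean it keeps the thinning marks $\xi_i$ and the Gamma densities $f_i$ of \eqref{eq:BGPP_pdf}, writes $\mathbb E\left[\mathcal{P}^*\right] = \mathbb E_{\Psi,|h|,\xi}\left[\sum_i \Bar{P}_{r,i}|h_i|^2\xi_i\right]$, and recovers the constant intensity $c$ by summing the series as in \eqref{eq:sum_converges}; for the second moment it specializes the proof of \eqref{eq:BGPP_varexpeq}, noting that without the $j\neq i$, $k\neq i$ exclusions only the first terms of $T_1$ in \eqref{eq:var_BGPP_4}, $T_{2a}$ in \eqref{eq:var_BGPP_6} and $T_{2b}$ in \eqref{eq:var_BGPP_7} survive (with lower bound $r_e$ in place of $R_i$), the Bessel kernel arising from the series $\beta^2\sum_j f_j(v)f_j(w) = c^2\, e^{-c(v+w)/\beta}\, I_0\!\left(2c\sqrt{vw}/\beta\right)$. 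You instead quote general DPP facts: independent thinning maps the Ginibre kernel $K_{\lambda/\beta}$ to $K_\beta = \beta K_{\lambda/\beta}$, hence $K_\beta(u,u)=\lambda$ and $\rho^{(2)}(u,v) = \lambda^2\left(1-e^{-c|u-v|^2/\beta}\right)$, and then apply Campbell's theorem for the diagonal part (with $\mathbb E\left[|h|^4\right] = (m+1)/m$) and the second-order product density for the off-diagonal part, collapsing the angular integral via $I_0(x) = \frac{1}{2\pi}\int_0^{2\pi} e^{x\cos\theta}\,d\theta$. The two derivations agree integrand by integrand — your angularly averaged $|K_\beta(u,v)|^2$ in squared-radius coordinates is exactly the paper's summed series — so both yield the same three terms of \eqref{eq:BGPP_varexpeq_nobf}. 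Your route buys a self-contained and more conceptual argument: it dispenses with the serving-BS bookkeeping entirely (correctly observing that no ordering of the points is needed once $G_i\equiv 1$) and, notably, it does not inherit the independence approximation made in Appendix~\ref{sec:BGPP_mean_proof}, since no event $\mathcal{A}_i$ ever appears; the paper's route buys reuse of machinery already established for Theorem~\ref{eq:BGPP_meanexp} at the cost of carrying that structure through. An incidental benefit of your computation is that it makes the constants transparent: the diagonal term comes out as $\frac{(m+1)\,c}{m(\alpha-1)}\left[\Bar{P}_{r}^2(r)\left(r^2+z^2\right)\right]_{r=\tau}^{r=r_e}$, consistent with setting $p_g=1$ in the corresponding term of \eqref{eq:BGPP_varexpeq}, which indicates that the factor $2$ displayed in the first term of \eqref{eq:BGPP_varexpeq_nobf} is a typo for $c$.
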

\begin{proof}
    The proof of \eqref{eq:BGPP_meanexpeq_nobg} is provided at the end of Appendix~\ref{sec:BGPP_mean_proof} and the proof of \eqref{eq:BGPP_varexpeq_nobf} is provided at the end of Appendix~\ref{sec:BGPP_var_proof}.
\end{proof}

The calculation of the higher-order moments of the EMFE and the moments of the interference or SINR can be obtained similarly. The CFs of the interference and EMFE are provided in Proposition~\ref{eq:BGPP_PhiI} and Corollary~\ref{eq:BGPP_PhiEmf_nobf}.

\begin{proposition}\label{eq:BGPP_PhiI}\textit{The CF of the interference of a $\beta$-GPP for the propagation model in~\eqref{eq:model}, conditioned on the BS $X_i$ located at a distance $r_i=|X_i|$ is}
{\small
\begin{align*}
    \Phi_{I,i}(q|u) = \prod\limits_{\substack{k \in \mathbb{N} \\ k \neq i}} \!\Bigg[\int_{u}^{{\tau^2}} \!f_k(v)  \frac{p_g\beta}{\left(1- {j q \Bar{P}_{r}(v)}/{m}\right)^m} dv+\!1\!-p_g\beta\Bigg].
\end{align*}
}
\end{proposition}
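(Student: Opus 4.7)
The plan is to compute $\Phi_{I,i}(q|u) = \mathbb{E}[e^{jq I_0} \,|\, Y_i = u,\, X_0 = X_i]$ by leveraging the three independent sources of randomness embedded in the $\beta$-GPP construction of Subsection~\ref{ssec:motion_invariant_model}: (i) the squared distances $Y_k = |X_k|^2$ of the underlying GPP $\Phi_{\lambda/\beta}^1$ are mutually independent with density $f_k$ given in~\eqref{eq:BGPP_pdf}; (ii) the retention marks $\xi_k$ are i.i.d.\ Bernoulli$(\beta)$ and independent of $\{Y_k\}$; and (iii) the beam indicators $G_k$ are i.i.d.\ Bernoulli$(p_g)$ and the fading magnitudes $|h_k|^2$ are i.i.d.\ Gamma$(m,1/m)$, mutually independent of everything else.

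First I would write $I_0 = \sum_{k\neq i}\bar{P}_r(Y_k)\,G_k\,|h_k|^2\,\xi_k$ and observe that the conditioning event $\{X_0=X_i\}=\{\xi_i=1\}\cap\bigcap_{j\neq i}\mathcal{A}_i^j$, with $\mathcal{A}_i^j=\{\xi_j=0\}\cup\{\xi_j=1,\,Y_j>u\}$, decouples across the indices $k\neq i$ because each $\mathcal{A}_i^j$ involves only the independent pair $(\xi_j,Y_j)$. This gives the product representation $\Phi_{I,i}(q|u)=\prod_{k\neq i}\phi_k(q|u)$, where $\phi_k(q|u)$ is the per-BS factor to be identified with the bracket in the statement.

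Second, I would compute each $\phi_k(q|u)$ by nested marginalizations. The innermost step uses the Laplace transform of Gamma$(m,1/m)$ to obtain $\mathbb{E}[e^{jq\bar{P}_r(v)|h_k|^2}]=(1-jq\bar{P}_r(v)/m)^{-m}$ given $G_k=1$, $Y_k=v$. Averaging over the beam Bernoulli $G_k$ produces the mixture $p_g(1-jq\bar{P}_r(v)/m)^{-m}+(1-p_g)$, and averaging over the retention mark $\xi_k$ introduces a $(1-\beta)$ ``absent-BS'' alternative. Integrating over $Y_k$ with support restricted to $v\in(u,\tau^2)$ on the $\xi_k=1$ branch (imposed by $\mathcal{A}_i^k$) and collecting the four elementary outcomes for the Bernoulli pair $(\xi_k,G_k)$ collapses the combinatorics into the effective interferer probability $p_g\beta$ and its complement $1-p_g\beta$, yielding the bracketed expression in the claim.

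The main technical point is justifying the factorization of $\Phi_{I,i}(q|u)$ across the indices $k\neq i$: this hinges on the $\beta$-GPP representation of Subsection~\ref{ssec:motion_invariant_model}, in which the marks $\{\xi_k\}$ and squared distances $\{Y_k\}$ form mutually independent families, so that once $Y_i=u$ is fixed the events $\mathcal{A}_i^j$ become mutually independent. Once this factorization is established, evaluating each $\phi_k(q|u)$ reduces to a routine closed-form Laplace-transform calculation assembled from the Nakagami-$m$ and Bernoulli marginals, and the product over $k\neq i$ delivers the proposition.
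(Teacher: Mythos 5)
Your proposal is correct and follows essentially the same route as the paper's proof in Appendix~C: factorize the conditional CF over the indices $k \neq i$ using the mutual independence of the families $\{Y_k\}$, $\{\xi_k\}$, $\{G_k\}$, $\{|h_k|^2\}$, evaluate the inner Nakagami-$m$ expectation via the Gamma characteristic function $(1-jq\Bar{P}_{r}(v)/m)^{-m}$, average the Bernoulli pair $(\xi_k, G_k)$ into the mixture $p_g\beta(\cdot)^{-m} + 1 - p_g\beta$, and integrate against $f_k$ over $(u,\tau^2)$. Note only that your final ``collapse,'' which places $1-p_g\beta$ outside the integral, tacitly assigns unit mass to the non-interfering branches instead of weighting them by $\int_u^{\tau^2} f_k(v)\,dv$; this is precisely the same simplification the paper itself makes when it extracts $Q(q)$ before applying the indicator $\mathds{1}_{\{|X_k|>|X_i|\}}$, so your argument matches the paper's at its own level of rigor.
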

\begin{proof} 
The proof is provided in Appendix \ref{sec:BGPP_PhiI_Proof}.
\end{proof}
It is worth noting that from the Cauchy-Schwarz theorem and from the definition of a PDF, the integral is always smaller than $\beta$, proving that the infinite product converges. To mitigate numerical inaccuracies, the product can be truncated to the $N$th term. The impact of such truncation will be analyzed in Section~\ref{ssec:analysis_BGPP}.

Based on this proposition, the CDF of the EMFE, the CCDF of the coverage and the joint CDF of the EMFE and SINR are provided in Theorems~\ref{eq:exp_BGPP}, \ref{eq:BGPP_Cov} and \ref{eq:BGPP_joint}, respectively.
\begin{theorem}\label{eq:exp_BGPP}The CDF of the EMFE of a $\beta$-GPP for the propagation model in~\eqref{eq:model} is given by
{\footnotesize
\begin{equation*}
\begin{split}
    &F_{\text{emf}}(T') \triangleq \mathbb{P}\left[\mathcal{P} < T'\right] \\
    &= \beta \!\int_{r_e^2}^{\tau^2}\!\left(\frac{\Omega(u)}{2}\!-\!\int_0^\infty \! \text{\normalfont Im} \!\left[\sum_{i \in \mathbb{N}} f_i(u) \Upsilon_i^\beta(u)\phi_{E,i}(q|u)\frac{e^{-j q T'}}{\pi q}\!\right]\!dq\!\right)\!du.
\end{split}
\end{equation*}}where $\phi_{E,i}(q|u) = \Phi_S(q|u)\, \Phi_{I,i}(q|u)$ and
\begin{equation*}
    \Phi_S(q|u) = \mathbb{E}_{|h|}\left[\exp\left(j q S(u)\right)\right] = (1-jq \Bar{P}_{r}(u)/m)^{-m}.
\end{equation*}
\end{theorem}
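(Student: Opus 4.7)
The strategy is to apply the Gil-Pelaez inversion formula to the conditional characteristic function of $\mathcal{P}=S_0+I_0$, after conditioning on the identity and location of the serving BS, and then to average the resulting conditional CDF against the serving-BS distance distribution dictated by the $\beta$-GPP construction.

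First, I would decompose by the identity $i$ of the serving BS. Using the $\beta$-thinned GPP construction recalled in Subsection~\ref{ssec:motion_invariant_model}, the event $\{X_0=X_i,\,R_0^2\in du\}$ factorises as $\{\xi_i=1\}\cap\mathcal{A}_i\cap\{Y_i\in du\}$. Since the squared distances $\{Y_j\}$ are mutually independent Gamma variables and the marks $\{\xi_j\}$ are i.i.d.\ Bernoulli$(\beta)$ independent of the points, this event has joint density $\beta\, f_i(u)\,\Upsilon_i^\beta(u)\,du$, where each factor of $\Upsilon_i^\beta(u)$ is the probability that the $j$-th GPP point is either thinned away ($1-\beta$) or lies at squared distance in $(u,\tau^2]$ (yielding $\beta\,\Gamma(j,cu/\beta,c\tau^2/\beta)/(j-1)!$). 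Summing over $i$ identifies $\beta\,\Omega(u)$ as the effective density of the serving-BS squared distance on $[r_e^2,\tau^2]$.

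Next, conditional on $\{X_0=X_i,R_0^2=u\}$, the useful signal $S_0=\bar{P}_{r}(u)|h_0|^2$ depends only on the serving-link fade $h_0$, whereas $I_0$ is a functional of the remaining points, their marks, their antenna orientations and their independent Nakagami fades. Hence $S_0\perp I_0$ given the conditioning and the conditional CF factorises as $\phi_{E,i}(q|u)=\Phi_S(q|u)\,\Phi_{I,i}(q|u)$. The signal CF is obtained directly from $|h_0|^2\sim\mathrm{Gamma}(m,1/m)$, giving $\Phi_S(q|u)=(1-jq\bar{P}_{r}(u)/m)^{-m}$; the interference CF is exactly the object computed in Proposition~\ref{eq:BGPP_PhiI}, the subscript $i$ reminding us that the term corresponding to $X_i$ is removed from the product.

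Finally, Gil-Pelaez's theorem applied to the conditional distribution of $\mathcal{P}$ yields
\begin{equation*}
\mathbb{P}[\mathcal{P}<T'\mid X_0=X_i,\,R_0^2=u]=\tfrac{1}{2}-\tfrac{1}{\pi}\int_0^\infty\tfrac{\operatorname{Im}[\phi_{E,i}(q|u)e^{-jqT'}]}{q}\,dq.
\end{equation*}
I would then multiply by $\beta\, f_i(u)\,\Upsilon_i^\beta(u)$, sum over $i\in\mathbb{N}$ and integrate $u$ over $[r_e^2,\tau^2]$. The constant $1/2$ collapses with the sum into $\beta\,\Omega(u)/2$, producing the first term inside the outer integral, while the oscillatory part yields the second term with the sum kept inside $\operatorname{Im}[\cdot]$, matching the claimed expression. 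The main technical obstacle is justifying the exchange of the infinite sum over $i$ with the Gil-Pelaez improper integral over $q$ and with the Lebesgue integral over $u$: this rests on a Fubini/dominated-convergence argument using that every factor in $\Upsilon_i^\beta(u)$ is bounded by $1$, that $f_i(u)$ decays factorially in $i$, that $|\phi_{E,i}(q|u)|\le 1$, and that the removable singularity $|\operatorname{Im}[\phi_{E,i}(q|u)e^{-jqT'}]/q|$ at $q=0$ is uniformly integrable near the origin thanks to $\phi_{E,i}(0|u)=1$ and the smoothness of the Nakagami-$m$ CF.
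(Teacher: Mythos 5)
Your architecture is the same as the paper's own proof (Appendix~\ref{BGPP_emf_proof}): decompose $F_{\text{emf}}(T')=\beta\sum_{i\in\mathbb N}\mathbb P\left[S_i+I_i<T',\,\mathcal A_i\right]$ over the identity of the serving BS, invert the conditional CF $\phi_{E,i}=\Phi_S\,\Phi_{I,i}$ via Gil-Pelaez, and average against $f_i(u)\,\Upsilon_i^\beta(u)$; your identification of $\beta f_i(u)\Upsilon_i^\beta(u)\,du$ as the probability of $\{X_0=X_i,\,R_0^2\in du\}$ also matches the paper's computation.

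There is, however, one genuine flaw in your justification of the key step: you assert that, given $\{X_0=X_i,\,R_0^2=u\}$, the interference CF is \emph{exactly} the $\Phi_{I,i}(q|u)$ of Proposition~\ref{eq:BGPP_PhiI}, i.e., that $\mathbb E\left[\mathds 1_{\mathcal A_i}\,e^{jqI_i}\right]=\Upsilon_i^\beta(u)\,\Phi_{I,i}(q|u)$. This identity is false: conditioning on $\mathcal A_i$ tilts the law of every remaining point, because on $\{\xi_k=1\}$ the squared distance $Y_k$ is confined to $(u,\tau^2]$ and its density must be renormalized jointly with the mark. The exact $k$-th factor of $\mathbb E\left[\mathds 1_{\mathcal A_i}e^{jqI_i}\right]$ is
\begin{equation*}
(1-\beta)+\beta(1-p_g)\,\frac{\Gamma\!\left(k,\frac{cu}{\beta},\frac{c\tau^2}{\beta}\right)}{(k-1)!}+\beta p_g\!\int_u^{\tau^2}\!\frac{f_k(v)\,dv}{\left(1-jq\Bar{P}_{r}(v)/m\right)^{m}},
\end{equation*}
whereas the theorem's integrand replaces it by the product
$\bigl[(1-\beta)+\beta\,\Gamma(k,\frac{cu}{\beta},\frac{c\tau^2}{\beta})/(k-1)!\bigr]\bigl[1-p_g\beta+p_g\beta\int_u^{\tau^2}f_k(v)\left(1-jq\Bar{P}_{r}(v)/m\right)^{-m}dv\bigr]$.
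Setting $q=0$ and $p_g=1$, the exact factor is $(1-\beta)+\beta\,\Gamma(k,\frac{cu}{\beta},\frac{c\tau^2}{\beta})/(k-1)!$ while the product is its square, so the two differ. The paper is aware of this: the proof of Theorem~\ref{eq:BGPP_meanexp} in Appendix~\ref{sec:BGPP_mean_proof} explicitly introduces ``a small approximation'' by applying $\mathbb E_{\Psi\setminus\{X_i\}}$ and $\mathbb E_\xi$ to the interference functional and to $\mathds 1_{\mathcal A_i}$ independently, with the error controlled only numerically, and the proof of Theorem~\ref{eq:exp_BGPP} reuses exactly that reasoning. Your proof therefore establishes the stated formula only modulo this decoupling approximation, which you not only fail to acknowledge but positively assert away; an exact derivation along your lines would yield a different (corrected) expression, not the one in the theorem. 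A minor related slip: under the paper's definition, $\phi_{E,i}(0|u)=\prod_{k\neq i}\bigl[1-p_g\beta+p_g\beta\,\Gamma(k,\frac{cu}{\beta},\frac{c\tau^2}{\beta})/(k-1)!\bigr]<1$ in general, so the normalization $\phi_{E,i}(0|u)=1$ you invoke for integrability near $q=0$ is false; the singularity is still removable, but because $\phi_{E,i}(\cdot|u)e^{-jqT'}$ is smooth in $q$ and real-valued at $q=0$, making the imaginary part $O(q)$, not for the reason you give.
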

\begin{proof}
The proof is provided in Appendix~\ref{BGPP_emf_proof}.
\end{proof}
It is noteworthy that, although Gil-Pelaez's theorem is widely applied in SG, it may not be easy to be computed numerically. This originates from the highly oscillatory integrands, particularly caused by the presence of complex exponential functions. To mitigate these challenges, we recommend expressing $\mathbb{P}\left[\mathcal{P}/T' < 1\right]$ instead of $\mathbb{P}\left[\mathcal{P} < T'\right]$. As a result, the inner component of the imaginary part operator can be represented as $\phi_{E}(q/T') e^{-jq}$. In a network of BSs not employing BF, the expression of $F_{emf}(T')$ simplifies as stated in Corollary~\ref{eq:BGPP_PhiEmf_nobf}.

\begin{corollary}\label{eq:BGPP_PhiEmf_nobf}The CF of the EMFE of a $\beta$-GPP where no dynamic BF is employed is given by
\begin{equation*}
        F_{\text{emf}}^*(T') =  \frac{1}{2}- \int_{0}^{\infty} \frac{1}{\pi q}\text{\normalfont Im}\left[\phi_{E}^*(q)\, e^{-jqT'} \right] dq.
\end{equation*}where $\phi_{E}^*(q) = \Phi_I(q|r_e^2)$ with $p_g = 1$.
\end{corollary}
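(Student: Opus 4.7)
The structural form of the claimed CDF is exactly Gil-Pelaez's inversion formula applied to a non-negative random variable, so the substantive content of the corollary is the identification $\phi_E^*(q) = \Phi_I(q|r_e^2)$ (with $p_g = 1$) for the characteristic function of the total EMFE $\mathcal{P}^*$ in the no-BF setting. My plan is therefore to derive $\phi_E^*(q)$ directly and then invoke Gil-Pelaez.

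To derive $\phi_E^*$, I would translate the no-BF assumption as follows. Every BS emits with gain $G_i = G_{\max} = 1$ toward the user, so the Bernoulli gain indicator used throughout Section \ref{sec:analytical_results} becomes trivially $1$; this is equivalent to setting $p_g = 1$. Moreover, without beam steering there is no preferential treatment of the closest BS, so the EMFE collapses to the single sum $\mathcal{P}^* = \sum_{X_k \in \Phi_\lambda^\beta \cap A} \Bar{P}_r(|X_k|^2)|h_k|^2$ over the annulus $A = \{r_e \leq |X_k| \leq \tau\}$, with no need to isolate a serving BS and no conditioning on $(R_0, \Theta_0)$. I would then compute the CF of this sum by mirroring the proof of Proposition \ref{eq:BGPP_PhiI}: average first over the Nakagami-$m$ fading, producing per-point factors $(1-jq\Bar{P}_r(|X_k|^2)/m)^{-m}$; average next over the independent thinning marks $\xi_k$, giving $\beta(\cdot)^{-m} + (1-\beta)$ for each point of the underlying GPP; and finally use the distributional identity $\{|X_k|^2\} \overset{d}{=} \{Y_k\}$ with $Y_k \sim \text{Gamma}(k, c/\beta)$ mutually independent to move the expectation inside a product of integrals against the $f_k$ densities on $[r_e^2, \tau^2]$.

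The main step that requires care is the reconciliation of the resulting product with the form of $\Phi_I$ in Proposition \ref{eq:BGPP_PhiI}, whose product formally excludes one index $i$. I would argue that this excluded index is a placeholder for the serving BS conditioned to lie at distance $\sqrt{u}$ in the general setting of Theorem \ref{eq:exp_BGPP}; placing the excluded distance at the physical exclusion boundary $u = r_e^2$, and recognising that without BF the serving-BS factor $\Phi_S(q|u)$ merges into the product over interferers (since every BS contributes in the same manner), produces the full product over $k \in \mathbb{N}$ required. Once this identification is in place, Gil-Pelaez's inversion theorem yields the claimed expression immediately.
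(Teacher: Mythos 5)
Your proposal is correct and follows essentially the same route as the paper, whose proof of this corollary consists precisely of rerunning the argument of Proposition~\ref{eq:BGPP_PhiI} with $p_g=1$, the product and summation taken over all $k\in\mathbb{N}$ rather than $k\in\mathbb{N}\setminus\{i\}$, and the lower integration limit at $r_e^2$, followed by Gil-Pelaez as in the proof of Theorem~\ref{eq:exp_BGPP} with the serving-BS index ignored. Your ``reconciliation'' step---observing that without beamforming no serving BS need be isolated, so the factor $\Phi_S(q|u)$ simply becomes another interferer-type factor and the full product over $k\in\mathbb{N}$ emerges---is exactly the paper's remark that the index $i$ of the serving BS can be dropped.
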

\begin{proof}
The proof of the CF is similar to the proof of Proposition~\ref{eq:BGPP_PhiI} in Appendix~\ref{sec:BGPP_PhiI_Proof} except that $p_g\!=\!1$ and that the summation and product are over $k \in \mathbb{N}$ instead of ${k \in \mathbb{N} \setminus \{i\}}$. The proof of the CDF is similar to the proof of Theorem~\ref{eq:exp_BGPP} in Appendix~\ref{BGPP_emf_proof} except that the index $i$ of the serving BS can be ignored.
\end{proof}

\begin{theorem}\label{eq:BGPP_Cov}The CCDF of the SINR of a $\beta$-GPP for the propagation model in \eqref{eq:model} is given by
{\footnotesizetosmall
\begin{multline*}
    F_{\text{cov}}(T) \triangleq \mathbb E_0\left[\mathbb{P}\left[\text{\normalfont SINR}_0 > T\right]\right] \\
    = \!\beta \!\int_{{r_e^2}}^{{\tau^2}} \!\left(\frac{\Omega(u)}{2}\! +\!\int_0^\infty \! \text{\normalfont Im} \!\left[\sum_{i \in \mathbb{N}} \!f_i(u) \Upsilon_i^\beta(u)\phi_{\text{\normalfont{SINR}},i}(q|u)\!\right]\!\frac{dq}{\pi q}\!\right)\!du
\end{multline*}}where $\phi_{\text{\normalfont{SINR}},i}(q|u) = \phi_S(q|u)  \phi_{I,i}(-Tq|u) e^{-jTq\sigma^2}$.

\end{theorem}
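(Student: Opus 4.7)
The plan is to mirror the structure used for the EMFE CDF in Theorem \ref{eq:exp_BGPP} and the joint CDF in \eqref{eq:BGPP_jointeq}, but applied to the event $\{\text{SINR}_0 > T\}$ instead of the event on $\mathcal{P}$. The coverage event can be rewritten as $\{S_0 - T I_0 - T \sigma^2 > 0\}$, which is a statement about the sign of a single real-valued random variable $W \triangleq S_0 - T I_0 - T \sigma^2$. Since $S_0$ depends only on the serving BS's fading while $I_0$ depends on the marks, distances, and fading of the remaining BSs, $S_0$ and $I_0$ are conditionally independent given the serving-BS location, so the CF factorizes as $\phi_W(q\,|\,u) = \phi_S(q|u)\,\phi_{I,i}(-Tq|u)\,e^{-jTq\sigma^2}$, which is exactly the announced $\phi_{\text{SINR},i}(q|u)$.

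First, I would condition on the identity and squared distance of the serving BS by invoking the $\beta$-GPP construction recalled in Subsection \ref{ssec:motion_invariant_model}: the event $\{X_0 = X_i,\, |X_i|^2 \in du\}$ has ``density'' $\beta\, f_i(u)\,\Upsilon_i^\beta(u)$, where $f_i$ comes from \eqref{eq:BGPP_pdf} and $\Upsilon_i^\beta(u)$ is the probability that every other GPP point $X_j$ is either unmarked or located farther than $\sqrt{u}$. Summed over $i$, this gives the PDF $\beta\,\Omega(u)$ of $R_0^2$ supported on $[r_e^2, \tau^2]$. Next, conditional on the serving BS being $X_i$ at squared distance $u$, I would apply Proposition \ref{eq:BGPP_PhiI} to obtain $\phi_{I,i}(q|u)$, and use the Nakagami-$m$ PDF of $|h_0|^2$ to write $\phi_S(q|u) = (1 - jq\Bar{P}_{r}(u)/m)^{-m}$, exactly as in the statement of Theorem \ref{eq:exp_BGPP}.

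Then I would apply Gil-Pelaez's inversion theorem \cite{gil-pelaez} to the random variable $W$: for an absolutely continuous $W$,
\begin{equation*}
    \mathbb{P}[W > 0] \;=\; \tfrac{1}{2} \;+\; \frac{1}{\pi}\int_0^\infty \frac{\text{Im}\,\phi_W(q)}{q}\,dq,
\end{equation*}
which transforms the conditional CCDF of $\text{SINR}_0$ into $\tfrac{1}{2} + (1/\pi)\int_0^\infty \text{Im}[\phi_{\text{SINR},i}(q|u)]/q\,dq$. Putting everything together and de-conditioning by integrating with respect to the serving-BS distribution gives
\begin{equation*}
    F_{\text{cov}}(T) = \beta\!\int_{r_e^2}^{\tau^2}\!\sum_{i\in\mathbb{N}} f_i(u)\,\Upsilon_i^\beta(u)\!\left(\tfrac{1}{2} + \!\int_0^\infty\!\tfrac{\text{Im}\,\phi_{\text{SINR},i}(q|u)}{\pi q}\,dq\right)du,
\end{equation*}
which is precisely the claimed formula once the constant-in-$i$ factor $\tfrac{1}{2}$ is collected into $\Omega(u)/2$ and the sum is pulled inside the inner integral by Fubini.

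The main obstacle is justifying the interchange of the sum over $i$, the integral over $u$, and the Gil-Pelaez integral over $q$, together with verifying the absolute convergence of the infinite product defining $\phi_{I,i}(q|u)$ uniformly enough for a dominated-convergence argument. Convergence of the product is handled just as in the remark following Proposition \ref{eq:BGPP_PhiI} (the integrand is bounded by $\beta$ times a probability via Cauchy-Schwarz, so each factor is at most $1$ in modulus and the tail contributions decay through $f_k(v)$). The integrability near $q=0$ of $\text{Im}[\phi_{\text{SINR},i}(q|u)]/q$ follows from the smoothness of $\phi_W$ at the origin, so the standard Gil-Pelaez regularity hypotheses are met and the final formula follows.
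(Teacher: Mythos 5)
Your proposal is correct and takes essentially the same route as the paper: the paper proves this theorem by repeating the Appendix~D argument for the EMFE CDF with $F_i(T) = 1-\mathbb{P}\left(S_0(i) - T I_0(i) \leq T\sigma^2, \mathcal{A}_i\right)$ and $\phi_{E,i}$ replaced by $\phi_{\text{SINR},i}$, which is exactly your decomposition over the serving BS with weight $\beta f_i(u)\Upsilon_i^\beta(u)$ followed by Gil-Pelaez inversion applied to $W = S_0 - T I_0 - T\sigma^2$ and the CF factorization $\phi_S(q|u)\,\phi_{I,i}(-Tq|u)\,e^{-jTq\sigma^2}$. Your closing remarks on the convergence of the infinite product and the interchange of sum and integrals are consistent with (and slightly more explicit than) the paper's own treatment.
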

\begin{proof}
The proof is similar to the proof of Theorem~\ref{eq:exp_BGPP} in Appendix~\ref{BGPP_emf_proof}. The only differences are that for the CCDF of the SINR, $F_i(T') = 1-\mathbb P\left(S_0(i) - T I_0(i) \leq T\sigma^2, \mathcal{A}_i\right)$ and that $\phi_{E,i}(q|u)$ must be replaced by $\phi_{\text{\normalfont{SINR}},i}(q|u)$.
\end{proof}
In the case of Rayleigh fading, the CCDF of the SINR has been computed in\cite{GinibreNaoto14} for a GPP and in\cite{Ginibre_theory} for a $\beta$-GPP. Theorem~\ref{eq:BGPP_Cov} is more general since it can be applied to other fading distributions, such as the Nakagami-$m$ fading given as an example, and because it includes dynamic BF. 

\begin{theorem}\label{eq:BGPP_joint}The joint CDF of the EMFE and SINR for a $\beta$-GPP is given in \eqref{eq:BGPP_jointeq} shown at the top of the page.
\end{theorem}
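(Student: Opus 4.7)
The plan is to adapt the Gil-Pelaez strategy of Theorems~\ref{eq:exp_BGPP} and~\ref{eq:BGPP_Cov} to the two-sided event $\{\mathrm{SINR}_0 > T\} \cap \{S_0+I_0 < T'\}$, whose novelty is that the constraint on $I_0$ now becomes a simultaneous upper bound originating from both the SINR and the EMFE conditions. First, I would invoke the $\beta$-GPP construction recalled in Subsection~\ref{ssec:motion_invariant_model}: using $\{X_0 = X_i\} = \{\xi_i = 1\} \cap \mathcal{A}_i$, conditioning on $|X_i|^2 = u$, and summing over $i$ contributes exactly the factor $\beta \sum_{i \in \mathbb{N}} f_i(u) \Upsilon_i^\beta(u)$ and the outer integral $\int_{r_e^2}^{\tau^2} du$, exactly as in Appendix~\ref{BGPP_emf_proof}.

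Second, conditioning additionally on the serving-link fading $h \triangleq |h_0|^2$, the joint event can be rewritten as $\{I_0 < c(h)\}$ with $c(h) = \min(\bar P_r(u) h/T - \sigma^2,\, T' - \bar P_r(u) h)$; this bound is nontrivial only on $h \in (T\sigma^2/\bar P_r(u),\, T'/\bar P_r(u))$. The two arguments of the minimum coincide at $h^\star = T(T'+\sigma^2)/(\bar P_r(u)(T+1))$: below $h^\star$ the SINR constraint is binding, above it the EMFE constraint is. The split value $T'' \triangleq \bar P_r(u) h^\star = T(T'+\sigma^2)/(T+1)$ is precisely the threshold appearing inside $\zeta$.

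Third, exploiting the independence of $h$ from $I_0$, I would apply Gil-Pelaez's inversion to the conditional probability $\mathbb P[I_0 < c(h) \mid X_i, u]$ using the characteristic function $\phi_{I,i}(q|u)$ provided by Proposition~\ref{eq:BGPP_PhiI}. The constant $1/2$ piece, integrated against the gamma density of $h$ over the admissible range, collapses to $\tfrac{1}{2} F_{|h|^2}(T'/\bar P_r(u))$, which is exactly the leading term of~\eqref{eq:BGPP_jointeq}. For the remaining oscillatory contribution, I would use Fubini to exchange the $q$-, $i$-, and $h$-integrals and evaluate the two inner $h$-integrals $\int h^{m-1} e^{-mh} e^{-jq(\bar P_r(u) h/T - \sigma^2)} dh$ on the sub-interval below $h^\star$ and $\int h^{m-1} e^{-mh} e^{-jq(T' - \bar P_r(u) h)} dh$ on the sub-interval above. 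The substitutions $t = h(m + jq\bar P_r(u)/T)$ and $t = h(m - jq\bar P_r(u))$ respectively produce the prefactors $(m + jq\bar P_r(u)/T)^{-m}$ and $(m - jq\bar P_r(u))^{-m}$ and turn the definite integrals into the lower incomplete gamma $\gamma(m,\cdot)$ and the generalized upper incomplete gamma $\Gamma(m;\cdot,\cdot)$ that appear in $\zeta(T, T', \bar P_r(u))$, with the phase factors $e^{jq\sigma^2}$ and $e^{-jqT'}$ arising naturally from the $\sigma^2$ and $T'$ in the exponents.

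The main obstacle I expect is the bookkeeping of the case split at $h^\star$ together with the complex change of variables: one must track the correct branches of $(m \pm j q \bar P_r(u)/T)^m$, check that the incomplete gamma functions with complex arguments are well defined by analytic continuation, and justify the interchanges of the $i$-summation, $h$-integration, and Gil-Pelaez $q$-integration (handled via absolute integrability on compact $q$-ranges combined with the exponential tails of $f_h$ and $f_i$, together with the standard Cesàro-type regularization of the Gil-Pelaez integral used in Appendix~\ref{BGPP_emf_proof}). Once these technicalities are dispatched, combining the piecewise contributions with $\phi_{I,i}(q|u)$, weighting by the $\beta$-GPP factor $\beta f_i(u) \Upsilon_i^\beta(u)$, and integrating over $u \in (r_e^2, \tau^2)$ yields~\eqref{eq:BGPP_jointeq}.
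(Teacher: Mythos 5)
Your proposal is correct and follows essentially the same route as the paper's proof in Appendix~\ref{sec:join_proof}: your min-formulation with crossing point $h^\star$ is exactly the paper's case split at $T'' = T(T'+\sigma^2)/(1+T)$, your two conditional fading integrals are the paper's $T_1$ and $T_2$, and the Gil-Pelaez inversion with $\phi_{I,i}(q|u)$ followed by the closed-form gamma integrals reproduces $\zeta = \zeta_1 + \zeta_2$, with the $\beta$-GPP weighting $\beta\sum_{i} f_i(u)\,\Upsilon_i^\beta(u)$ inserted exactly as in Appendix~\ref{BGPP_emf_proof}. The only cosmetic point is your aside that the bound is nontrivial only for $h > T\sigma^2/\bar{P}_r(u)$: like the paper, your final lower incomplete gamma terms $\gamma(m,\cdot)$ correspond to integrating the Gil-Pelaez representation from $h=0$, which is legitimate since that representation self-consistently vanishes where $c(h)<0$, so no truncation correction is needed.
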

\begin{proof}
The proof is provided in Appendix~\ref{sec:join_proof}.
\end{proof}
It can be noticed that the conditional distributions of the EMFE and coverage can also be calculated from these three CDFs using Bayes' rule.

\subsection{Motion-Variant Networks: Inhomogeneous Poisson Point Process}
\label{ssec:nmi}

\begin{proposition}\label{eq:IPPP_density_fct}
The PDF of the distance from the user located at the origin to the nearest BS, is given by
\begin{align*}
    f_{R_0}(r_0)= \frac{\Lambda^{\!(1)\!}(r_0)\,e^{-\Lambda(r_0)}}{e^{-\Lambda(r_e)}-e^{-\Lambda(\tau)}}
\end{align*}where 
\begin{multline}\label{eq:intensity_function}
    \Lambda(r) = 4\Tilde{a} \int_0^{r} \, |u-\Tilde{\rho}|^{-1}  K\left(u\right) u \,du + \pi \Tilde{b} r^2\\
    +4 \Tilde{c} \int_0^{r} \, |u-\Tilde{\rho}|  E\left(u\right) u \,du\, + \pi \Tilde{d} \left(\frac{r^4}{2}+\Tilde{\rho}^2\, r^2\right)
\end{multline}is the intensity measure,
{\small
\begin{equation}\label{eq:intensity_der}
    \Lambda^{\!(1)\!}(r)\! = \!2\pi r\!\left(\frac{2 \Tilde{a} K\left(r\right)}{\pi|r-\Tilde{\rho}|} +  \Tilde{b}+\frac{2 \Tilde{c}}{\pi}  |r-\Tilde{\rho}|  E\left(r\right) + \Tilde{d} \left(r^2+\Tilde{\rho}^2\right)\right)
\end{equation}}is its derivative,
{\small
\begin{align*}
    K(u) = \int_0^{\pi/2} \frac{1}{\sqrt{1- k(u) \, \sin^2{\varphi}}} \, d\varphi = \frac{\pi}{2}\, _2F_1\left(\frac{1}{2}, \frac{1}{2}, 1 ;k(u)\right)
\end{align*}}is the complete elliptic integral of the first kind,
{\footnotesizetosmall
\begin{align*}
    E(u) = \int_0^{\frac{\pi}{2}} \sqrt{1- k(u)  \sin^2{\varphi}} \, d\varphi = \frac{\pi}{2}\, _2F_1\left(\frac{1}{2}, \frac{-1}{2}, 1 ;k(u)\right),
\end{align*}}is the complete elliptic integral of the second kind and $k(u) = - {4u\,\Tilde{\rho}}{\left(u-\Tilde{\rho}\right)^{-2}}$.
\end{proposition}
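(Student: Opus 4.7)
The plan is to derive the density from the void probability of the I-PPP and then evaluate the intensity measure $\Lambda(r)$ in closed form by separating the four additive terms of $\lambda(\rho,\theta)$ and carrying out the angular integration.

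First, recall that for any I-PPP of intensity $\lambda(u)$, the absence of points in the ball $B(0,r)$ occurs with probability $\exp(-\Lambda(r))$. Restricting to configurations compatible with the exclusion region and with the observation disk, i.e.\ conditioning on $r_e\le R_0\le\tau$, yields
\begin{equation*}
    f_{R_0}(r_0) = \frac{\Lambda^{(1)}(r_0)\,e^{-\Lambda(r_0)}}{\mathbb P(r_e\le R_0\le\tau)} = \frac{\Lambda^{(1)}(r_0)\,e^{-\Lambda(r_0)}}{e^{-\Lambda(r_e)}-e^{-\Lambda(\tau)}},
\end{equation*}
which is the stated expression. It therefore remains to compute $\Lambda(r)=\int_0^{2\pi}\!\int_0^{r}\lambda(\rho,\theta)\,\rho\,d\rho\,d\theta$ for the density in~\eqref{eq:IPPP_density}.

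By a shift $\theta\mapsto\theta-\tilde\theta$ (whose Jacobian is unity and which preserves the range of integration by $2\pi$-periodicity), the angular dependence reduces to integrals of $\Delta^{-1}$, $1$, $\Delta$, $\Delta^2$ with $\Delta^2=(\tilde\rho-\rho)^2+4\tilde\rho\rho\sin^2(\theta/2)$ after using $1-\cos\theta=2\sin^2(\theta/2)$. The $\tilde b$ and $\tilde d$ contributions are elementary: the cosine cross term integrates to zero, giving $\pi\tilde b r^2$ and $\pi\tilde d(r^4/2+\tilde\rho^2 r^2)$ respectively. For the $\tilde a/\Delta$ and $\tilde c\Delta$ terms, the half-angle substitution $\varphi=\theta/2$ followed by symmetry around $\pi/2$ yields
\begin{equation*}
    \int_0^{2\pi}\!\frac{d\theta}{\Delta}=\frac{4}{|\rho-\tilde\rho|}\!\int_0^{\pi/2}\!\frac{d\varphi}{\sqrt{1-k(\rho)\sin^2\varphi}},\quad \int_0^{2\pi}\!\Delta\,d\theta=4|\rho-\tilde\rho|\!\int_0^{\pi/2}\!\sqrt{1-k(\rho)\sin^2\varphi}\,d\varphi,
\end{equation*}
with $k(\rho)=-4\rho\tilde\rho/(\rho-\tilde\rho)^2$. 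These are precisely $4K(\rho)/|\rho-\tilde\rho|$ and $4|\rho-\tilde\rho|E(\rho)$ in the notation of the proposition. Multiplying by $\rho$ and integrating in $\rho$ produces the two elliptic-integral terms of~\eqref{eq:intensity_function}. Finally, differentiating $\Lambda(r)$ via the fundamental theorem of calculus gives $\Lambda^{(1)}(r)=r\!\int_0^{2\pi}\lambda(r,\theta)\,d\theta$, and substituting the same four angular integrals yields~\eqref{eq:intensity_der}.

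The main technical step is the identification of the angular integrals with the complete elliptic integrals $K$ and $E$; the half-angle substitution and the proper extraction of the $|\rho-\tilde\rho|$ factor (which requires no sign assumption since $k(\rho)\le 0$ everywhere) is what allows the match with the standard Legendre form. The remaining algebra is bookkeeping of the four additive contributions.
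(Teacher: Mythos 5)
Your proposal is correct and follows essentially the same route as the paper: the void probability $e^{-\Lambda(r_0)}$ of the I-PPP, normalization over the ring $[r_e,\tau]$ (equivalently, conditioning on at least one BS there), and differentiation to get the PDF. Your explicit evaluation of $\Lambda(r)$ — the shift by $\tilde\theta$, the half-angle reduction $\Delta^2=(\tilde\rho-\rho)^2+4\tilde\rho\rho\sin^2(\theta/2)$, and the identification of the angular integrals of $\Delta^{-1}$ and $\Delta$ with $4K(\rho)/|\rho-\tilde\rho|$ and $4|\rho-\tilde\rho|E(\rho)$ with parameter $k(\rho)=-4\rho\tilde\rho/(\rho-\tilde\rho)^2$ — is a correct filling-in of a computation the paper compresses into ``the intensity measure is obtained from \eqref{eq:intensity_measure},'' and your phrasing in terms of the survival function even avoids the paper's small slip of calling the (scaled) CCDF a CDF.
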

\begin{proof}
    From the definition of Poisson's law, it follows that the void probability, i.e., the probability of having 0 BS within a disk of radius $r_0$ centered at the origin of an infinite network, is $\exp\left({-\Lambda(r_0)}\right)$ with $\Lambda$ the intensity measure of the PP. The probability of having the nearest BS at a distance $r_0$ is then $1-\exp\left({-\Lambda(r_0)}\right)$. The nearest BS can be located in a ring of inner radius $r_e$ and outer radius $\tau$. Thus, the CDF of the distance to the nearest BS is given by
    \begin{equation*}
        F_{R_0}(r_0)= \frac{e^{-\Lambda(r_0)}}{e^{-\Lambda(r_e)}-e^{-\Lambda(\tau)}}
    \end{equation*}
    so that the integration over the ring gives 1. Proposition~\ref{eq:IPPP_density_fct} is then obtained by differentiating this CDF with respect to $r_0$. The intensity measure is obtained from \eqref{eq:intensity_measure}.
\end{proof}

We calculate hereafter the mean and variance of the EMFE.

\begin{figure*}[!t]
\begin{multline}\label{eq:IPPP_meanexpeq_nobf}
    \Bar{P}_{I_{r}}(r_0) = p_g \int_{r_0}^{\tau} \Bar{P}_{r}(r) \Lambda^{\!(1)\!}(r) dr= 
    - \tfrac{2\pi \Tilde{d}\,p_g }{(4-\alpha)(\alpha-2)} \left[\left((\alpha-2) r^2+2z^2\right) \Bar{P}_{r}(r) \left(r^2+z^2\right)\right]_{r = \tau}^{r = r_0}\\
    + p_g \left(2\pi \Tilde{b}+2\pi \Tilde{d} \Tilde{\rho}^2\right)  \frac{\left[\Bar{P}_{r}(r)\left(r^2+z^2\right)\right]_{r = \tau}^{r = r_0}}{\alpha-2} + 4 p_g \int_{r_0}^{\tau} \Bar{P}_{r}(r) r \left(\tfrac{\Tilde{a}}{|r-\Tilde{\rho}|}   K\left(r\right)+\Tilde{c}   |r-\Tilde{\rho}|   E\left(r\right)\right) dr
\end{multline}
{\normalsize
\begin{multline}\label{eq:IPPP_varexpeq_nobf}
    \Bar{P}_{I^2_{r}}(r_0) = p_g \frac{m+1}{m(\alpha-1)} \left(\pi \Tilde{b}+\pi \Tilde{d} \Tilde{\rho}^2\right)  \left[\Bar{P}_{r}^2(r)\left(r^2+z^2\right)\right]_{r = \tau}^{r = r_0}+4 p_g \frac{m+1}{m} \int_{r_0}^{\tau} \Bar{P}_{r}^2(r) r \left(\tfrac{\Tilde{a}}{|r-\Tilde{\rho}|}   K\left(r\right)+\Tilde{c}   |r-\Tilde{\rho}|   E\left(r\right)\right) dr\\
    - p_g \tfrac{(m+1) \pi \Tilde{d}}{m(\alpha-2)(\alpha-1)} \left[\left((\alpha-1) r^2+z^2\right) \Bar{P}_{r}^2(r) \left(r^2+z^2\right)\right]_{r = \tau}^{r = r_0} + p_g^2 \Bar{P}_{I_{r}}^2(r_0).
\end{multline}
}
\hrulefill
\end{figure*}
\begin{theorem}\label{eq:IPPP_meanexp}
The mean of the EMFE for an I-PPP network with the intensity measure in~\eqref{eq:intensity_function} for the propagation model in~\eqref{eq:model} is given by 
\begin{equation*}
    \mathbb E\left[\mathcal{P}\right] = \int_{r_e}^{\tau} \left(\Bar{P}_{r}(r_0) + \Bar{P}_{I_{r}}(r_0)\right)f_{R_0}(r_0)dr_0
\end{equation*}
where $\Bar{P}_{I_{r}}(r_0)$ is given in 
\eqref{eq:IPPP_meanexpeq_nobf} shown at the top of the page. The associated variance is given by $\mathbb V\left[\mathcal{P}\right] = \mathbb E\left[\mathcal{P}^2\right] - \left(\mathbb E\left[\mathcal{P}\right]\right)^2$, where  $\mathbb E\left[\mathcal{P}^2\right]$ is given by {\footnotesizetosmall
\begin{equation}\label{eq:IPPP_varexpeq}
    \mathbb E\left[\mathcal{P}^2\right] \!= \!\int_{r_e}^{\tau} \!\!\left(\tfrac{m+1}{m} \Bar{P}_{r}^2(r_0) + \Bar{P}_{r}(r_0) \Bar{P}_{I_{r}}(r_0)+ \Bar{P}_{I^2_{r}}(r_0)\right)f_{R_0}\!(r_0)dr_0
\end{equation}} with $\Bar{P}_{I^2_{r}}(r_0)$ given in \eqref{eq:IPPP_varexpeq_nobf} shown at the top of the page.
\end{theorem}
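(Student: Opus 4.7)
The plan is to condition on the distance $R_0$ from the typical user at the origin to the nearest (serving) BS, whose PDF $f_{R_0}$ has already been obtained in Proposition~\ref{eq:IPPP_density_fct}. Writing $\mathcal{P}=S_0+I_0$ as in \eqref{eq:expP}, the law of total expectation gives
\begin{equation*}
    \mathbb{E}[\mathcal{P}]=\int_{r_e}^{\tau}\bigl(\mathbb{E}[S_0\mid r_0]+\mathbb{E}[I_0\mid r_0]\bigr)f_{R_0}(r_0)\,dr_0,
\end{equation*}
and an analogous decomposition for $\mathbb{E}[\mathcal{P}^2]$ after expanding $(S_0+I_0)^2$. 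Since the serving BS is aligned on the user, $G_0=1$ with probability one, so $\mathbb{E}[S_0\mid r_0]=\bar{P}_r(r_0)$ and $\mathbb{E}[S_0^2\mid r_0]=\bar{P}_r^2(r_0)\,\mathbb{E}[|h|^4]=\tfrac{m+1}{m}\bar{P}_r^2(r_0)$, using the Nakagami-$m$ moment formula. Independence of $S_0$ and $I_0$ given $r_0$ yields the cross term $\mathbb{E}[S_0 I_0\mid r_0]=\bar{P}_r(r_0)\bar{P}_{I_r}(r_0)$.

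For the interference, I would invoke the Slivnyak-Mecke theorem: conditioned on the nearest BS being at distance $r_0$, the remaining BSs still form an I-PPP with the same intensity measure, restricted to the annulus $\{r_0<r<\tau\}$. By Campbell's theorem for marked PPPs, with independent marks $G_i|h_i|^2$ satisfying $\mathbb{E}[G|h|^2]=p_g$ and $\mathbb{E}[(G|h|^2)^2]=\mathbb{E}[G^2]\mathbb{E}[|h|^4]=p_g\tfrac{m+1}{m}$ (because $G$ is Bernoulli so $G^2=G$), one obtains
\begin{equation*}
    \mathbb{E}[I_0\mid r_0]=p_g\int_{r_0}^{\tau}\bar{P}_r(r)\,\Lambda^{(1)}(r)\,dr,
\end{equation*}
and, using $\mathbb{E}[I_0^2\mid r_0]=\operatorname{Var}[I_0\mid r_0]+(\mathbb{E}[I_0\mid r_0])^2$ together with Campbell's second-moment formula for the variance,
\begin{equation*}
    \mathbb{E}[I_0^2\mid r_0]=p_g\tfrac{m+1}{m}\int_{r_0}^{\tau}\bar{P}_r^2(r)\,\Lambda^{(1)}(r)\,dr+p_g^2\bar{P}_{I_r}^2(r_0).
\end{equation*}

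The remaining work is to plug in the explicit form of $\Lambda^{(1)}(r)$ given in \eqref{eq:intensity_der} and split the integrals into four pieces according to the parameters $\tilde{a},\tilde{b},\tilde{c},\tilde{d}$. The pieces proportional to $\tilde{b}$ and $\tilde{d}$ are polynomial in $r$ against $\bar{P}_r(r)=P_t\kappa^{-1}(r^2+z^2)^{-\alpha/2}$, so the substitution $u=r^2+z^2$ reduces them to elementary power-law integrals; in particular, for the $\tilde{d}(r^2+\tilde{\rho}^2)$ term one has to treat the $r^2$ factor separately (writing $r^2=u-z^2$) to obtain the combination $((\alpha-2)r^2+2z^2)\bar{P}_r(r)(r^2+z^2)$ that appears in \eqref{eq:IPPP_meanexpeq_nobf} and the analogous $((\alpha-1)r^2+z^2)\bar{P}_r^2(r)(r^2+z^2)$ in \eqref{eq:IPPP_varexpeq_nobf}. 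The pieces proportional to $\tilde{a}$ and $\tilde{c}$ carry the elliptic kernels $K(r)/|r-\tilde{\rho}|$ and $|r-\tilde{\rho}|E(r)$, which do not admit elementary antiderivatives, so they are kept as integrals.

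The main obstacle is purely bookkeeping: carefully combining the constant-in-$r$ contribution $\tilde{b}$ with the $\tilde{d}\tilde{\rho}^2$ part and separating the genuine $\tilde{d}r^2$ part, and doing the same substitution with $\bar{P}_r^2$ for the variance, so that the boundary terms match the bracketed expressions $[\,\cdot\,]_{r=\tau}^{r=r_0}$ displayed in \eqref{eq:IPPP_meanexpeq_nobf} and \eqref{eq:IPPP_varexpeq_nobf}. No step requires new stochastic-geometric machinery beyond Slivnyak--Mecke and Campbell's theorems, which is why the result reduces to these deterministic integrals once the conditioning on $R_0$ is performed.
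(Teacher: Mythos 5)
Your proposal is correct and follows essentially the same route as the paper's own proofs in Appendices~\ref{sec:IPPP_mean_proof} and \ref{sec:IPPP_var_proof}: decompose $\mathcal{P}=S_0+I_0$, condition on the nearest-BS distance via Proposition~\ref{eq:IPPP_density_fct}, and apply Campbell's theorem (together with the second-order product density, which is your Campbell variance formula) to the interference on the annulus $(r_0,\tau)$, with the same $u=r^2+z^2$ reduction of the $\Tilde{b}$, $\Tilde{d}$ terms. One remark: your expansion of $(S_0+I_0)^2$, exactly like the paper's own \eqref{eq:var_IPPP_1} and \eqref{eq:var_IPPP_3}, actually yields the cross term $2\,\Bar{P}_{r}(r_0)\,\Bar{P}_{I_{r}}(r_0)$, so the coefficient-one cross term displayed in \eqref{eq:IPPP_varexpeq} is a discrepancy inherited from the statement itself (as is the extra $p_g^2$ multiplying $\Bar{P}_{I_{r}}^2(r_0)$ in \eqref{eq:IPPP_varexpeq_nobf}, since $\Bar{P}_{I_{r}}$ already contains $p_g$), not a flaw in your argument.
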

\begin{proof}
    The proof is provided in Appendix~\ref{sec:IPPP_mean_proof}.
\end{proof}
\begin{corollary}
\label{eq:IPPP_meanexp_nobf}
The mean of the EMFE for an I-PPP network with the intensity measure in~\eqref{eq:intensity_function} when no dynamic BF is employed is given by $\mathbb E\left[\mathcal{P^*}\right] = \Bar{P}_{I_{r}}(r_e)$ with $p_g = 1$. The associated variance is given by $\mathbb V\left[\mathcal{P^*}\right] = \mathbb E\left[\mathcal{P^*}^2\right] - \left(\mathbb E\left[\mathcal{P^*}\right]\right)^2$, where  $\mathbb E\left[\mathcal{P^*}^2\right]$ is given by $\Bar{P}_{I^2_{r}}(r_e)$.
\end{corollary}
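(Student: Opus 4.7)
The plan is to observe that when no dynamic beamforming is employed, every base station contributes to the received power with the deterministic maximal gain $G_i = G_{\text{max}} = 1$, so the Bernoulli factor of probability $p_g$ disappears and there is no longer any stochastic distinction between the serving BS and the interferers. The total exposure then collapses to the shot-noise functional $\mathcal{P}^* = \sum_{X_i \in \Psi} \bar{P}_r(|X_i|)\, |h_i|^2$ of the I-PPP $\Psi$ restricted to the annulus $\{r_e \le |X_i| \le \tau\}$, with IID Gamma$(m, 1/m)$ marks $\{|h_i|^2\}$. This removes the need to isolate the serving-BS term, which was the source of the $f_{R_0}$ weighting in Theorem~\ref{eq:IPPP_meanexp}.

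For the mean, I would apply Campbell's theorem for a PPP directly. Using $\mathbb{E}[|h|^2] = 1$ and the radial identity $\Lambda^{(1)}(r)\, dr = 2\pi r\, \lambda(r)\, dr$, one immediately obtains $\mathbb{E}[\mathcal{P}^*] = \int_{r_e}^{\tau} \bar{P}_r(r)\, \Lambda^{(1)}(r)\, dr$, which by construction equals $\bar{P}_{I_r}(r_e)$ evaluated at $p_g = 1$. The closed-form reduction of this integral to the elliptic-integral expression displayed in \eqref{eq:IPPP_meanexpeq_nobf} is inherited from the derivation carried out in Appendix~\ref{sec:IPPP_mean_proof}, so no new analytic manipulation is required at this step.

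For the second moment, I would apply the second-order Campbell formula for a PPP with IID marks. Splitting $\mathbb{E}\bigl[(\sum_i \bar{P}_r(|X_i|)|h_i|^2)^2\bigr]$ into diagonal and off-diagonal contributions and using that the reduced second factorial moment measure of a PPP factorizes as $\Lambda \otimes \Lambda$ yields
\begin{equation*}
\mathbb{E}[\mathcal{P}^{*2}] = \mathbb{E}[|h|^4] \int_{r_e}^{\tau} \bar{P}_r^2(r)\, \Lambda^{(1)}(r)\, dr + \left(\int_{r_e}^{\tau} \bar{P}_r(r)\, \Lambda^{(1)}(r)\, dr\right)^{\!2}.
\end{equation*}
Substituting the Nakagami-$m$ moments $\mathbb{E}[|h|^2]=1$ and $\mathbb{E}[|h|^4]=(m+1)/m$, the first term reproduces the three elliptic-integral expressions appearing in \eqref{eq:IPPP_varexpeq_nobf} and the second reproduces $\bar{P}_{I_r}^2(r_e)$, so that $\mathbb{E}[\mathcal{P}^{*2}] = \bar{P}_{I^2_r}(r_e)$ with $p_g=1$. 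The variance then follows from $\mathbb{V}[\mathcal{P}^*] = \mathbb{E}[\mathcal{P}^{*2}] - (\mathbb{E}[\mathcal{P}^*])^2$. The only real difficulty is bookkeeping: confirming that the already-established definitions of $\bar{P}_{I_r}$ and $\bar{P}_{I^2_r}$ absorb all the elliptic-integral and path-loss primitives derived in Appendix~\ref{sec:IPPP_mean_proof}, so that the substitutions $p_g \to 1$ and $r_0 \to r_e$ in those closed forms deliver the corollary without any further computation.
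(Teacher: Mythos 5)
Your proof is correct and takes essentially the same route as the paper: the paper's own (one-line) proof likewise notes that without beamforming there is no need to separate $S_0$ and $I_0$ and applies Campbell's theorem to the full sum, with the second moment obtained exactly as in your diagonal/off-diagonal split via the PPP second-order product density (cf.\ Appendix~\ref{sec:IPPP_var_proof}, step $(a)$). Your write-up merely makes explicit the bookkeeping ($\mathbb{E}[|h|^4]=(m+1)/m$, $p_g\to 1$, $r_0\to r_e$) that the paper leaves implicit.
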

\begin{proof}
    If no dynamic BF is used, there is no need to analyze the terms $S_0$ and $I_0$ separately. Campbell's theorem can therefore be applied to the sum $S_0+I_0$.
\end{proof}

It is worth noting that the mean of the EMFE is proportional to the BS density. The calculation of the higher-order moments of the EMFE and the moments of the interference or SINR can be obtained similarly. The CF of the interference is provided in Proposition~\ref{eq:IPPP_PhiI}.
\begin{proposition}\label{eq:IPPP_PhiI}
The CF of the interference for an I-PPP with the intensity measure in~\eqref{eq:intensity_function} for the propagation model in~\eqref{eq:model}, conditioned on the distance between the typical user at the origin and the nearest BS, is given in~\eqref{eq:IPPP_PhiIeq} shown at the top of the next page.

\begin{figure*}[!t]
{\footnotesizetosmall
\begin{multline}\label{eq:IPPP_PhiIeq}
    \phi_{I}(q|r_0) = \exp\left[4 p_g\int_{r_0}^\tau \frac{\Tilde{a} |\Tilde{\rho}-u|^{-1}\, K(u)+\Tilde{c}|\Tilde{\rho}-u|\, E(u)}{\left(1-j q \Bar{P}_{r}(u)/m\right)^m} \,u du-\left[\Lambda(r)\right]_{r = r_0}^{r = \tau}\right.\\
    + \left.p_g\!\left[\tfrac{\pi\Tilde{d}}{\alpha-2} \left(r^2+z^2\right)^{2-\alpha} \! _2F_1\left(m, 2-\tfrac{4}{\alpha}, 3-\tfrac{4}{\alpha}; \frac{j q \Bar{P}_{r}(r)}{m}\right)+\pi\tfrac{\Tilde{b}+\Tilde{d}(\Tilde{\rho}^2\!-\!z^2)}{\alpha-1}  \left(r^2+z^2\right)^{1-\alpha} \! _2F_1\!\left(m, 2\!-\!\tfrac{2}{\alpha}, 3\!-\!\tfrac{2}{\alpha}; \frac{j q \Bar{P}_{r}(r)}{m}\right)\right]_{r = r_0}^{r = \tau}\right]
\end{multline}
}
{\small
\begin{multline}\label{eq:joint_eq}
    G(T,T') \triangleq \mathbb E_0\left[ \mathbb{P}\left[\frac{S_0}{I_0+\sigma^2} > T,S_0+I_0 < T'\right]\right] = \int_{r_0} \left[ \frac{1}{2}  F_{|h|^2}\left(\tfrac{T'}{\Bar{P}_{r,0}}\right) - \!\int_{0}^{\infty} \!\frac{1}{\pi q}\text{\normalfont Im}\left[ \phi_I(q|r_0)\,\zeta(T,T',\Bar{P}_{r})\right]dq\right] f_{R_0}(r_0) dr_0
\end{multline}}
where $\zeta(T,T',\Bar{P}_{r, 0})$ is given by \eqref{eq:zeta}.

\hrulefill
\end{figure*}
\end{proposition}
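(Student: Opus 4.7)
The plan is to apply the characteristic functional of the I-PPP, conditioned on the nearest-BS distance being $r_0$. First, by the Poisson void-probability argument already invoked in Proposition~\ref{eq:IPPP_density_fct}, conditioning on $R_0 = r_0$ amounts to restricting the unchanged intensity $\lambda(u)$ to the annulus $\{u : r_0 < \|u\| < \tau\}$ and imposing no points inside $B(0,r_0)$. Since the fading variables $|h_i|^2$ and the beamforming indicators $G_i$ are i.i.d.\ and independent of the point locations, the standard shot-noise identity for PPPs yields
\begin{equation*}
    \phi_I(q|r_0) = \exp\!\left(-\!\int_{r_0 < \|u\| < \tau}\!\!\bigl(1 - \mathbb{E}\bigl[e^{jq\Bar{P}_r(\|u\|) G|h|^2}\bigr]\bigr)\lambda(u)\,du\right).
\end{equation*}

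Next, I would compute the single-mark CF: because $G \sim \mathrm{Bernoulli}(p_g)$ and $|h|^2 \sim \mathrm{Gamma}(m,1/m)$ are independent, the mark-averaged exponent reduces to $p_g\bigl(1-(1-jq\Bar{P}_r(r)/m)^{-m}\bigr)\lambda(r,\theta)$, which depends on $\theta$ only through $\lambda$.

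The third step is the angular integration. The density~\eqref{eq:IPPP_density} is built from the four radial blocks $1/\Delta$, $1$, $\Delta$, $\Delta^2$, and setting $\varphi = (\theta - \Tilde{\theta})/2$ together with $1-\cos(2\varphi) = 2\sin^2\varphi$ produces $4K(r)/|r-\Tilde{\rho}|$ and $4|r-\Tilde{\rho}|E(r)$ from the $1/\Delta$ and $\Delta$ blocks (with modulus $k(r)$ as in~\eqref{eq:intensity_der}), and $2\pi$ and $2\pi(r^2+\Tilde{\rho}^2)$ from the $1$ and $\Delta^2$ blocks. This is precisely the algebra underpinning $\Lambda^{(1)}(r)$, so inserting it into the ``$1$'' half of $p_g\bigl(1-(1-jq\Bar{P}_r/m)^{-m}\bigr)$ immediately produces the contribution $-[\Lambda(r)]_{r_0}^\tau$ visible in~\eqref{eq:IPPP_PhiIeq}, while the other half leaves a correction of the form $p_g\int_{r_0}^\tau(1-jq\Bar{P}_r(r)/m)^{-m}\Lambda^{(1)}(r)\,dr$.

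Finally, I would split this correction along the four blocks of $\Lambda^{(1)}$. The two elliptic pieces ($\Tilde{a}$ and $\Tilde{c}$) have no elementary antiderivative and survive as the one-dimensional integral on the first line of~\eqref{eq:IPPP_PhiIeq}, whereas the two polynomial pieces ($\Tilde{b}$-linear in $r$ and $\Tilde{d}$-cubic in $r$) can be integrated in closed form. The main obstacle is this closed-form step: after the substitution $x = r^2+z^2$ and the regrouping $r(r^2+\Tilde{\rho}^2) = r(r^2+z^2) + r(\Tilde{\rho}^2-z^2)$, which aligns the cubic $\Tilde{d}$ contribution with the linear $\Tilde{b}$ one, I would invoke the hypergeometric antiderivative $\int x^{c-1}(1-ax)^{-m}\,dx = (x^c/c)\,{}_2F_1(m, c, c+1; ax)$ after a further change of variables $t = x^{-\alpha/2}$. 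The linear piece then yields the ${}_2F_1(m, 2-2/\alpha, 3-2/\alpha;\cdot)$ term with prefactor $\pi(\Tilde{b}+\Tilde{d}(\Tilde{\rho}^2-z^2))/(\alpha-1)$ and $(r^2+z^2)^{1-\alpha}$, while the cubic piece yields the ${}_2F_1(m, 2-4/\alpha, 3-4/\alpha;\cdot)$ term with prefactor $\pi\Tilde{d}/(\alpha-2)$ and $(r^2+z^2)^{2-\alpha}$, matching~\eqref{eq:IPPP_PhiIeq}.
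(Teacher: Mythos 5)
Your route is the same as the paper's Appendix~\ref{proof_IPPP_PhiI}: the PGFL of the I-PPP restricted to the annulus $r_0<\|u\|<\tau$ via the void-ball conditioning, the Bernoulli--Gamma mark average giving the exponent $p_g\left[(1-jq\Bar{P}_{r}/m)^{-m}-1\right]\lambda$, the angular integration that reassembles the four blocks of $\Lambda^{\!(1)\!}$ in \eqref{eq:intensity_der} (your step here is actually more explicit than the paper's), the split into a $-[\Lambda]$ term plus a correction $p_g\int_{r_0}^{\tau}(1-jq\Bar{P}_{r}(r)/m)^{-m}\Lambda^{\!(1)\!}(r)\,dr$, the survival of the elliptic ($\Tilde{a}$, $\Tilde{c}$) blocks as a one-dimensional integral, and the regrouping $r(r^2+\Tilde{\rho}^2)=r(r^2+z^2)+r(\Tilde{\rho}^2-z^2)$ before the closed-form step --- all identical to the paper's proof.

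The gap is in your final step, and it is concrete: the substitutions you prescribe do not produce the ${}_2F_1$ parameters you assert. With $x=r^2+z^2$ and $t=x^{-\alpha/2}$ the Jacobian is $dx=-\tfrac{2}{\alpha}\,t^{-2/\alpha-1}dt$, so the linear ($\Tilde{b}$-type) block becomes proportional to $t^{-2/\alpha-1}(1-jqP_t\kappa^{-1}t/m)^{-m}$ and the cubic ($\Tilde{d}$) block to $t^{-4/\alpha-1}(1-jqP_t\kappa^{-1}t/m)^{-m}$; in your antiderivative this means $c=-2/\alpha$ and $c=-4/\alpha$ (the identity remains a valid formal antiderivative for negative non-integer $c$), yielding ${}_2F_1\left(m,-\tfrac{2}{\alpha},1-\tfrac{2}{\alpha};\cdot\right)$ with prefactor proportional to $(r^2+z^2)$ and ${}_2F_1\left(m,-\tfrac{4}{\alpha},1-\tfrac{4}{\alpha};\cdot\right)$ with prefactor proportional to $(r^2+z^2)^2$ --- not the $(m,2-\tfrac{2}{\alpha},3-\tfrac{2}{\alpha})$ and $(m,2-\tfrac{4}{\alpha},3-\tfrac{4}{\alpha})$ terms with decaying prefactors $(r^2+z^2)^{1-\alpha}$ and $(r^2+z^2)^{2-\alpha}$ that you read off. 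Reading $c=2-2/\alpha$ corresponds to an integrand $t^{1-2/\alpha}$, which is precisely the exponent slip in the paper's own appendix (its displayed $\upsilon$-integral shows $\upsilon^{1-2/\alpha}$ and $\upsilon^{1-4/\alpha}$ where the Jacobian forces $\upsilon^{-2/\alpha-1}$ and $\upsilon^{-4/\alpha-1}$), so your last sentence reproduces the printed \eqref{eq:IPPP_PhiIeq} by matching it, not by deriving it. A $q=0$ sanity check settles which form your steps actually give: the exponent of $\phi_I(0|r_0)$ must vanish, and a term such as $\pi p_g\bigl(\Tilde{b}+\Tilde{d}(\Tilde{\rho}^2-z^2)\bigr)(r^2+z^2)\left[{}_2F_1\left(m,-\tfrac{2}{\alpha},1-\tfrac{2}{\alpha};\cdot\right)-1\right]$ does cancel the polynomial part of $p_g[\Lambda(r)]_{r=r_0}^{r=\tau}$ at $q=0$, whereas the decaying $(r^2+z^2)^{1-\alpha}$ terms of \eqref{eq:IPPP_PhiIeq} cannot. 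The same bookkeeping applies to the Poisson term: the ``$1$''-half of $p_g\bigl(1-(1-jq\Bar{P}_{r}/m)^{-m}\bigr)$ yields $-p_g[\Lambda(r)]_{r=r_0}^{r=\tau}$, not $-[\Lambda(r)]_{r=r_0}^{r=\tau}$ as you state (invisible in the paper's validation because the Brussels dataset has $p_g=1$). In short: your steps 1--3 are sound and identical to the paper's; step 4 must be redone carrying the exponents through honestly, and doing so lands on the $(m,-2/\alpha,1-2/\alpha)$/$(m,-4/\alpha,1-4/\alpha)$ form rather than the formula as printed.
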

\begin{proof}
The proof is provided in Appendix~\ref{proof_IPPP_PhiI}.
\end{proof}

Based on these propositions, the CDF of the EMFE, the CCDF of the coverage and the joint CDF are provided as Theorems~\ref{eq:exp_IPPP}, \ref{eq:cov} and \ref{eq:joint}, respectively.

\begin{theorem}\label{eq:exp_IPPP}\textit{The CDF of the EMFE for an I-PPP with the intensity measure in~\eqref{eq:intensity_function} for the propagation model in~\eqref{eq:model} is}
\begin{multline*}
        F_{\text{emf}}(T') \triangleq \mathbb{P}\left[\mathcal{P} < T'\right]\\ 
        =  \int_{r_0}\left(\frac{1}{2}- \int_{0}^{\infty} \frac{1}{\pi q}\text{\normalfont Im}\left[\phi_{E}(q|r_0)\, e^{-jqT'} \right] dq\right)f_{R_0}(r_0) dr_0
\end{multline*}where $\phi_{E}(q|r_0) = \Phi_S(q|r_0)\, \Phi_{I}(q|r_0)$ and
\begin{align*}
    \phi_S(q|r_0) = \mathbb{E}_{|h|^2}\left[\exp\left(j q S_0\right)\right] = \left(1-jq\Bar{P}_{r,0}/m\right)^{-m}.
\end{align*}
\end{theorem}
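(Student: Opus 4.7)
The plan is to apply Gil-Pelaez's inversion theorem to the conditional distribution of $\mathcal{P}=S_0+I_0$ given the distance $R_0=r_0$ to the serving BS, then marginalize over $R_0$ using the density $f_{R_0}$ obtained in Proposition~\ref{eq:IPPP_density_fct}. The first step is to write
\begin{equation*}
F_{\text{emf}}(T') = \int_{r_0} \mathbb{P}\!\left[\mathcal{P}<T'\,\big|\,R_0=r_0\right] f_{R_0}(r_0)\,dr_0.
\end{equation*}

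Next, I would establish that, conditionally on $R_0=r_0$, the useful signal $S_0=\Bar{P}_{r,0}|h_0|^2$ and the interference $I_0=\sum_{X_i\neq X_0}\Bar{P}_{r,i}G_i|h_i|^2$ are independent. This follows from two facts specific to the I-PPP model: (i) by the void/restriction property of Poisson processes, the interfering BSs outside the disk of radius $r_0$ form an I-PPP on $\{r>r_0\}$ that is independent of the event $\{R_0=r_0\}$, and (ii) the fadings $\{|h_i|^2\}$ and beamforming indicators $\{G_i\}$ are i.i.d.\ and independent across BSs. Conditional independence then yields the factorization
\begin{equation*}
\phi_{E}(q|r_0) = \phi_S(q|r_0)\,\phi_I(q|r_0).
\end{equation*}

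The CF of $S_0$ is immediate: since $|h_0|^2\sim\text{Gamma}(m,1/m)$, the CF of $\Bar{P}_{r,0}|h_0|^2$ is the stated $(1-jq\Bar{P}_{r,0}/m)^{-m}$. The CF $\phi_I(q|r_0)$ is exactly the object computed in Proposition~\ref{eq:IPPP_PhiI}, so no additional derivation is needed.

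Finally, I would apply Gil-Pelaez's theorem \cite{gil-pelaez} in the form
\begin{equation*}
\mathbb{P}[\mathcal{P}<T'\,|\,R_0=r_0] = \frac{1}{2}-\frac{1}{\pi}\int_0^\infty \frac{\text{Im}\!\left[\phi_E(q|r_0)\,e^{-jqT'}\right]}{q}\,dq,
\end{equation*}
and insert this back into the outer integral over $r_0$ to obtain the claimed expression. The one technical point that needs a brief justification is the validity of the conditional independence of $S_0$ and $I_0$ given $R_0$ under the I-PPP with radial intensity; the rest reduces to stitching together Proposition~\ref{eq:IPPP_density_fct}, Proposition~\ref{eq:IPPP_PhiI}, and the standard inversion formula.
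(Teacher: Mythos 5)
Your proposal is correct and follows essentially the same route as the paper, whose proof of this theorem is simply the one-line remark that it ``follows from Gil-Pelaez's theorem'': condition on $R_0=r_0$, factor $\phi_E(q|r_0)=\phi_S(q|r_0)\,\phi_I(q|r_0)$ using Proposition~\ref{eq:IPPP_PhiI}, invert, and average with $f_{R_0}$ from Proposition~\ref{eq:IPPP_density_fct}. Your explicit justification of the conditional independence of $S_0$ and $I_0$ via the Poisson restriction property and the i.i.d.\ fading and gain marks is a detail the paper leaves implicit, and it is stated correctly.
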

\begin{proof}
    The proof follows from Gil-Pelaez’s theorem.
\end{proof}
In a network of BSs not employing BF, the expression of $F_{emf}(T')$ simplifies as stated in Corollary~\ref{eq:IPPP_PhiEmf_nobf}.

\begin{corollary}\label{eq:IPPP_PhiEmf_nobf}The CF of the EMFE of an IPPP where no dynamic BF is employed is given by
\begin{equation*}
        F_{\text{emf}}^*(T') =  \frac{1}{2}- \int_{0}^{\infty} \frac{1}{\pi q}\text{\normalfont Im}\left[\phi_{E}^*(q)\, e^{-jqT'} \right] dq.
\end{equation*}with $\phi_{E}^*(q) = \Phi_I(q|r_e^2)$ with $p_g = 1$.
\end{corollary}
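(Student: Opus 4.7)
The plan is to apply Gil-Pelaez's inversion theorem to the characteristic function of $\mathcal{P}$, exactly as in Theorem~\ref{eq:exp_IPPP}, while taking advantage of the simplifications that occur when no dynamic BF is employed. First I would observe that with $G_i \equiv G_{\max} = 1$ for every BS, the EMFE $\mathcal{P} = \sum_{i \in \Psi} \bar{P}_{r,i}|h_i|^2$ no longer singles out the serving BS from the rest: every BS in the annulus $\{u \in \mathbb{R}^2 : r_e \leq |u| \leq \tau\}$ contributes with the same statistical weight, and the Bernoulli beam factor reduces to $p_g = 1$. Consequently there is no need to condition on the location of the nearest BS nor to split $\mathcal{P} = S_0 + I_0$.

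Second, I would derive $\phi_E^*(q)$ directly from the PGFL of the I-PPP restricted to the annulus $[r_e,\tau]$. The computation is identical to the one carried out for Proposition~\ref{eq:IPPP_PhiI} in Appendix~\ref{proof_IPPP_PhiI}: expectation over the i.i.d.\ Nakagami-$m$ fading produces the factor $(1 - jq\bar{P}_r(u)/m)^{-m}$ inside the PGFL integrand, and then the PGFL of the I-PPP with radial intensity measure~\eqref{eq:intensity_function} yields an exponential of an integral over $[r_e,\tau]$. The only differences with~\eqref{eq:IPPP_PhiIeq} are that the lower limit of the radial integral is the deterministic $r_e$ instead of the random $r_0$, and that $p_g$ is fixed to $1$. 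This is precisely the content of the identity $\phi_E^*(q) = \Phi_I(q\mid r_e^2)\big|_{p_g=1}$ stated in the corollary.

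Third, I would invoke Gil-Pelaez's inversion formula for a non-negative random variable $X$ with CF $\phi_X$, namely $\mathbb{P}[X < T'] = \tfrac{1}{2} - \tfrac{1}{\pi}\int_0^\infty \tfrac{1}{q}\,\text{Im}\bigl[\phi_X(q)\, e^{-jqT'}\bigr]\,dq$, applied with $X = \mathcal{P}$. Substituting the expression for $\phi_E^*$ obtained in the previous step gives exactly the claimed formula for $F_{\text{emf}}^*(T')$.

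I do not expect any genuine obstacle: the corollary is a streamlined specialization of Theorem~\ref{eq:exp_IPPP}. The only point requiring care is the book-keeping of integration limits, to make sure that the outer averaging over $f_{R_0}$ present in Theorem~\ref{eq:exp_IPPP} correctly collapses here, because the sum defining $\mathcal{P}$ runs over the entire deterministic annulus $[r_e,\tau]$ rather than over BSs farther than the serving one. This is why the final expression is a bare Gil-Pelaez inversion with no outer integral against $f_{R_0}(r_0)$.
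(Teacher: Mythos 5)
Your proposal is correct and follows essentially the same route the paper intends: the corollary is stated without a separate proof as the direct analogue of Corollary~\ref{eq:BGPP_PhiEmf_nobf}, i.e., the PGFL/CF computation of Proposition~\ref{eq:IPPP_PhiI} with $p_g = 1$ and the radial integral taken over the full deterministic annulus starting at $r_e$, followed by a bare Gil-Pelaez inversion in which the outer averaging against $f_{R_0}$ collapses, exactly as you argue. Your book-keeping remark about why no $f_{R_0}$ integral survives is the right justification, and your use of $r_e$ as the lower integration limit correctly handles the paper's notational slip in writing $\Phi_I(q|r_e^2)$ (a carry-over from the $\beta$-GPP case, where conditioning is on the squared distance) where $\Phi_I(q|r_e)$ is meant.
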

\begin{theorem}\label{eq:cov}The CCDF of the SINR for an I-PPP with the intensity measure in~\eqref{eq:intensity_function} for the propagation model in~\eqref{eq:model} is given by
{\smalltonormalsize
\begin{multline*}\label{eq:coveq}
        F_{\text{cov}}(T) \triangleq \mathbb E_0\left[\mathbb{P}\left[\text{\normalfont SINR}_0 > T\right]\right] \\
        =\int_{r_0}\left(\frac{1}{2} + \int_{0}^{\infty} \text{\normalfont Im}\left[\phi_{\text{\normalfont{SINR}}}(q, T|r_0)\right]\,\frac{1}{\pi q} dq \right)f_{R_0}(r_0) dr_0
\end{multline*}}where $\phi_{\text{\normalfont{SINR}}}(q, T|r_0) = \phi_S(q|r_0)  \phi_I(-Tq|r_0) \exp\left({-jTq\sigma^2}\right)$.
\end{theorem}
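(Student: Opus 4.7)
The plan is to mimic the derivation of Theorem~\ref{eq:exp_IPPP} but applied to the linearized SINR event. Specifically, observe that $\{\text{SINR}_0>T\}=\{S_0-T I_0>T\sigma^2\}$, so conditional on the serving-BS distance $R_0=r_0$ the coverage probability is the tail probability at $T\sigma^2$ of the real random variable $Y(r_0)\triangleq S_0-T I_0$. The proof therefore reduces to (i) computing the conditional characteristic function $\phi_Y(q\mid r_0)$, (ii) applying Gil-Pelaez's inversion theorem to obtain $\mathbb P[Y(r_0)>T\sigma^2\mid r_0]$, and (iii) averaging over the distribution $f_{R_0}$ of the nearest-BS distance given by Proposition~\ref{eq:IPPP_density_fct}.

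For step~(i), condition on $R_0=r_0$ and exploit two independence properties: the small-scale fading $|h_0|^2$ of the tagged BS is independent of everything else, and, by the standard reduction principle for Poisson processes (with an empty ball of radius $r_0$ around the origin), the residual interferer process is an I-PPP on $\{u:|u|>r_0\}$ with the same intensity~$\lambda(\cdot)$, independent of $S_0$. Hence
\begin{equation*}
\phi_Y(q\mid r_0)=\mathbb E\!\left[e^{jq S_0}\mid r_0\right]\mathbb E\!\left[e^{-jqT I_0}\mid r_0\right]=\phi_S(q\mid r_0)\,\phi_I(-Tq\mid r_0),
\end{equation*}
where $\phi_S$ is the Nakagami-$m$ characteristic function already computed in Theorem~\ref{eq:exp_IPPP} and $\phi_I(-Tq\mid r_0)$ is obtained by substituting $q\mapsto -Tq$ in Proposition~\ref{eq:IPPP_PhiI}. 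The deterministic shift $-T\sigma^2$ contributes the factor $e^{-jTq\sigma^2}$, yielding exactly $\phi_{\text{SINR}}(q,T\mid r_0)$.

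For step~(ii), invoking Gil-Pelaez on the real variable $Y(r_0)$ at level $T\sigma^2$ gives
\begin{equation*}
\mathbb P\!\left[Y(r_0)>T\sigma^2\mid r_0\right]=\tfrac{1}{2}+\tfrac{1}{\pi}\!\int_0^\infty\!\text{Im}\!\left[e^{-jqT\sigma^2}\phi_Y(q\mid r_0)\right]\!\tfrac{dq}{q},
\end{equation*}
which, upon folding the shift into $\phi_{\text{SINR}}$, matches the integrand of the theorem. Step~(iii) is the routine de-conditioning $F_{\text{cov}}(T)=\int_{r_e}^{\tau}\mathbb P[\text{SINR}_0>T\mid r_0]\,f_{R_0}(r_0)\,dr_0$.

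The main obstacle is justifying the conditional CF factorization in step~(i): one must argue carefully that conditioning on $R_0=r_0$ (i.e., on the nearest point of the I-PPP being at distance $r_0$) leaves an independent I-PPP on the annulus $r_0<|u|<\tau$ with the original radial intensity, so that the CF derivation of Proposition~\ref{eq:IPPP_PhiI} applies verbatim with the inner radius replaced by $r_0$. Once this reduction is in place, everything else is a direct application of Gil-Pelaez and the already-established expressions for $\phi_S$ and $\phi_I$.
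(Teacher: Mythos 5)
Your proposal is correct and follows essentially the same route as the paper: the paper's own (one-line) proof is exactly Gil-Pelaez applied to the linearized event $\{S_0 - T I_0 > T\sigma^2\}$ followed by de-conditioning over $(R_0,\Theta_0)$, with the conditional factorization $\phi_S(q|r_0)\,\phi_I(-Tq|r_0)\,e^{-jTq\sigma^2}$ you derive being precisely the stated $\phi_{\text{SINR}}$. The ``main obstacle'' you flag — that conditioning on the nearest BS at distance $r_0$ leaves an independent I-PPP on the annulus with the original intensity — is already built into Proposition~\ref{eq:IPPP_PhiI}, whose proof evaluates the PGFL over the ring $\mathcal{B}(r_0,\tau)$, so no additional argument is needed.
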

\begin{proof}
    The proof is obtained by applying Gil-Pelaez’s theorem and by applying the expectation over the distance to the nearest BS $R_0$ and over the angle $\Theta_0$.
\end{proof}

\begin{theorem}\label{eq:joint}Let $T'' = T(T'+\sigma^2)/(1+T)$. The trade-off between the EMFE and network coverage for an I-PPP is provided by the joint CDF of the EMFE and SINR, given in \eqref{eq:joint_eq} shown at the top of the next page.

\end{theorem}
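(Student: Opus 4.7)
The plan is to mirror the proof of Theorem~\ref{eq:BGPP_joint} in Appendix~\ref{sec:join_proof}, replacing the $\beta$-GPP serving-BS distance distribution $\beta\sum_{i} f_i(u)\Upsilon_i^\beta(u)$ with the I-PPP density $f_{R_0}$ from Proposition~\ref{eq:IPPP_density_fct}, and the $\beta$-GPP interference characteristic function $\phi_{I,i}(q|u)$ with the I-PPP one $\phi_I(q|r_0)$ from Proposition~\ref{eq:IPPP_PhiI}. First I would condition on $R_0=r_0$, under which the useful signal $S_0=\Bar{P}_{r,0}|h_0|^2$ (a Gamma variable with shape $m$ and mean $\Bar{P}_{r,0}$) is independent of the aggregate interference $I_0$. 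The conditional joint probability $\mathbb{P}[S_0>T(I_0+\sigma^2),\,S_0+I_0<T'\mid r_0]$ is then an integral of a Gamma PDF against the conditional law of $I_0$.

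Second, I would describe the event as a region of the $(S_0,I_0)$-plane delimited by the lines $S_0=T(I_0+\sigma^2)$ and $S_0=T'-I_0$, which cross precisely at $(T'',\,T'-T'')$ with $T''=T(T'+\sigma^2)/(1+T)$ as in the statement. Switching the order of integration so that $x=S_0$ is the outer variable, the interference constraint becomes $I_0<(x/T)-\sigma^2$ for $x\in[T\sigma^2,T'']$ and $I_0<T'-x$ for $x\in[T'',T']$. Each of the resulting conditional CDFs of $I_0$ would then be expanded via Gil-Pelaez's inversion applied to $\phi_I(q|r_0)$, contributing a constant $\tfrac{1}{2}$ and an oscillatory $q$-integral.

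Third, after swapping the $q$- and $x$-integrations, the inner $x$-integrals of the Gamma PDF against the complex exponentials take the closed form $\int x^{m-1}e^{-\alpha x}\,dx \propto \alpha^{-m}\gamma(m,\alpha\,\cdot)$, with $\alpha=(m+jq\Bar{P}_{r,0}/T)/\Bar{P}_{r,0}$ on $[0,T'']$ and $\alpha=(m-jq\Bar{P}_{r,0})/\Bar{P}_{r,0}$ on $[T'',T']$. The boundary shifts $(x/T)-\sigma^2$ and $T'-x$ factor out as $e^{jq\sigma^2}$ and $e^{-jqT'}$, respectively, and the normalization $m^m/\Gamma(m)$ comes from the Gamma PDF. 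Collecting these ingredients reproduces exactly the two summands of $\zeta(T,T',\Bar{P}_r)$ in~\eqref{eq:zeta}, while the two $\tfrac{1}{2}$ Gil-Pelaez terms integrate against the Gamma PDF over the union $[0,T']$ to give $\tfrac{1}{2}F_{|h|^2}(T'/\Bar{P}_{r,0})$. Finally, averaging over $R_0$ with the density $f_{R_0}$ of Proposition~\ref{eq:IPPP_density_fct} yields~\eqref{eq:joint_eq}.

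The main obstacle will be the careful interchange between the Gil-Pelaez $q$-integral and the inner $x$-integral, which requires a Fubini-type justification despite the conditional convergence of Gil-Pelaez near $q=0$, together with the analytic continuation of $\gamma(m,\cdot)$ to complex arguments and the sign bookkeeping that collapses the contributions from the two sub-intervals $[0,T'']$ and $[T'',T']$ into the single compact expression $\zeta$. Identifying $T''$ as the crossing point of the two boundary lines in the $(S_0,I_0)$-plane is the geometric observation that makes this consolidation possible, and the $[T'',T']$-piece is precisely what produces the generalized incomplete gamma $\Gamma(m;\cdot,\cdot)$ in the second summand of $\zeta$.
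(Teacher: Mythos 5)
Your proposal matches the paper's proof in Appendix~\ref{sec:join_proof} essentially step for step: the same case split of the joint event at $S=T''$ (the paper derives it algebraically by comparing $S/T-\sigma^2$ with $T'-S$, you identify it as the crossing point of the two boundary lines in the $(S_0,I_0)$-plane), the same Gil-Pelaez inversion of the conditional interference CDF from Proposition~\ref{eq:IPPP_PhiI}, the same swap of the $q$- and $x$-integrals with closed-form Gamma integrals yielding $\zeta_1+\zeta_2=\zeta$, and the same final averaging over $R_0$ via Proposition~\ref{eq:IPPP_density_fct}. One small internal inconsistency to fix: in your second step you restrict the first $x$-interval to $[T\sigma^2,T'']$, but in the third step you correctly integrate over $[0,T'']$, which is the paper's choice (the Gil-Pelaez representation vanishes on the zero-probability strip $[0,T\sigma^2]$) and is what produces the lower incomplete gamma $\gamma(m,\cdot)$ rather than a generalized incomplete gamma in the first summand of $\zeta$.
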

\begin{proof}
 The proof is provided in Appendix~\ref{sec:join_proof}.
\end{proof}

The expressions of Theorems~\ref{eq:cov} and \ref{eq:joint} can be simplified because the integral over $\theta_0$ is equal to $2\pi$.

\section{Numerical results}
\label{sec:numerical_results}

\subsection{Motion-Invariant Networks: $\beta$-Ginibre Point Process}
\label{ssec:analysis_BGPP}

In this section, the performance of a $\beta$-GPP network is analyzed based on the system model described in Section~\ref{sec:system_model}. In order to have a realistic set of parameters, the 5G~NR~2100 network of a major network provider in Paris (France) is studied. The BSs located inside a disk of radius $\tau \!= \!\numprint[m]{6000}$ and center (652, 6862)~km in the Lambert~93 conformal conic projection\cite{L93} are considered. The $\beta$-GPP with $\beta \!= \!0.75$ was found to be the best fit for this MI network by applying the methodology proposed in\cite{cost2022}. The system parameters extracted from the operator's database \cite{Cartoradio} are given in Table~\ref{tab:BGPP_study}. The half-power beamwidth of the observed antenna patterns, 0.0982~rad, is taken as the value of $\omega$, leading to $p_g\! = \!0.0469$. Based on propagation models used in similar urban environments\cite{GontierAccess, Ichitsubo00, COST231}, we set the path loss exponent to $\alpha\! =\! 3.2$. The noise power is $\sigma^2 \!= \!10 \log_{10}(k\,T_0\,B_w) + 30 + \mathcal{F}_{dB}$ in dBm where $k$ is the Boltzmann constant, $T_0$ is the standard temperature (290~K), $B_w$ is the bandwidth and $\mathcal{F}_{dB} = \numprint[dB]{6}$ is the receiver noise figure\cite{10.5555/3294673}. The carrier frequency $f$ and the bandwidth $B_w$ correspond to the official band allocated to the operator in DL. Numerical results are presented for Rayleigh fading ($m \!= \!1$), justified by the considered frequency band for non line-of-sight propagation environments.
{\small
\begin{table}[h!]
\begin{center}
\caption{System parameters used for Subsection~\ref{ssec:analysis_BGPP}}
\begin{tabular}{ |c|c||c|c| } 
 \hline
 $f$ & \numprint[MHz]{2132.7} & $B_w$ & \numprint[MHz]{14.8}\\
 $\lambda$ & $\numprint[BS/km^2]{6.17}$ & $r_e$ & \numprint[m]{0}\\
 $\alpha$ & 3.2 & $\tau$ & \numprint[km]{6} \\
 $P_t G_{max}$ & \numprint[dBm]{66} & $\beta$ & 0.75 \\
 $z$ & \numprint[m]{33} & $\sigma^2$ & \numprint[dBm]{-96.27} \\
 $p_{g}$ & 0.0469 &  & \\ 
 \hline
\end{tabular}
\end{center}
\label{tab:BGPP_study}
\end{table}}

\subsubsection{Numerical validation}

The mean EMFE is found to be $\numprint[W/m^2]{1.38e-4}$ ($\numprint[V/m]{0.23}$) with a variance of $\numprint[W^2/m^4]{3.92e-7}$ using Theorem~\ref{eq:BGPP_meanexp}, \eqref{eq:ExpWM2} and \eqref{eq:ExpVM}. The value of the mean EMFE obtained using $10^8$ realizations of Monte Carlo (MC) simulations shows a difference of $\numprint[W/m^2]{2e-8}$, showing that the approximation made in the proof in Appendix~\ref{sec:BGPP_mean_proof} is insignificant. The marginal distributions of the EMFE (Theorem~\ref{eq:exp_BGPP}) and SINR (Theorem~\ref{eq:BGPP_Cov}) are validated in Figs.~\ref{fig:Exp_BGPP_Paris_BOUYGUES_5G_NR_2100} and~\ref{fig:Cov_BGPP_Paris_BOUYGUES_5G_NR_2100} via MC simulations. The convergence of the corresponding expressions is also illustrated as a function of $N$, representing the number of terms considered within the summation in $F_{\textit{emf}}(T')$ and $F_{\textit{cov}}(T)$. Figs.~\ref{fig:Exp_BGPP_Paris_BOUYGUES_5G_NR_2100} and~\ref{fig:Cov_BGPP_Paris_BOUYGUES_5G_NR_2100} illustrate that truncating the infinite sums and products to $N\!=\! 10$ yields a highly accurate approximation of the CDFs. Extending the truncation to $N \!=\! 50$ results in a maximal absolute error of 0.4\% for the CDF of the EMFE in the head of the distribution and an absolute error of 0.2\% in the CCDF of the SINR. The area around the 95th percentile has been magnified in Fig.~\ref{fig:Exp_BGPP_Paris_BOUYGUES_5G_NR_2100} as it is an important statistical measure for EMFE assessment. By way of comparison, the Paris authorities set the maximum cumulative EMFE threshold at 5~V/m 900MHz equivalent\cite{charte_paris}, i.e. 7.44~V/m at \numprint[MHz]{2132.7}, which corresponds to -6.36~dBm using the formulas in \eqref{eq:ExpWM2} and \eqref{eq:ExpVM}. The values of EMFE obtained in this section are well below the legal threshold but we draw the reader's attention to the fact that the legal limit corresponds to a cumulative sum of EMFE caused by all operators and all cellular frequency bands, whereas this study deals with a single frequency band of a single operator at a time.

\begin{figure}[h!]
\centering
    \includegraphics[width=0.7\linewidth, trim={3cm, 9cm, 4cm, 10cm}, clip]{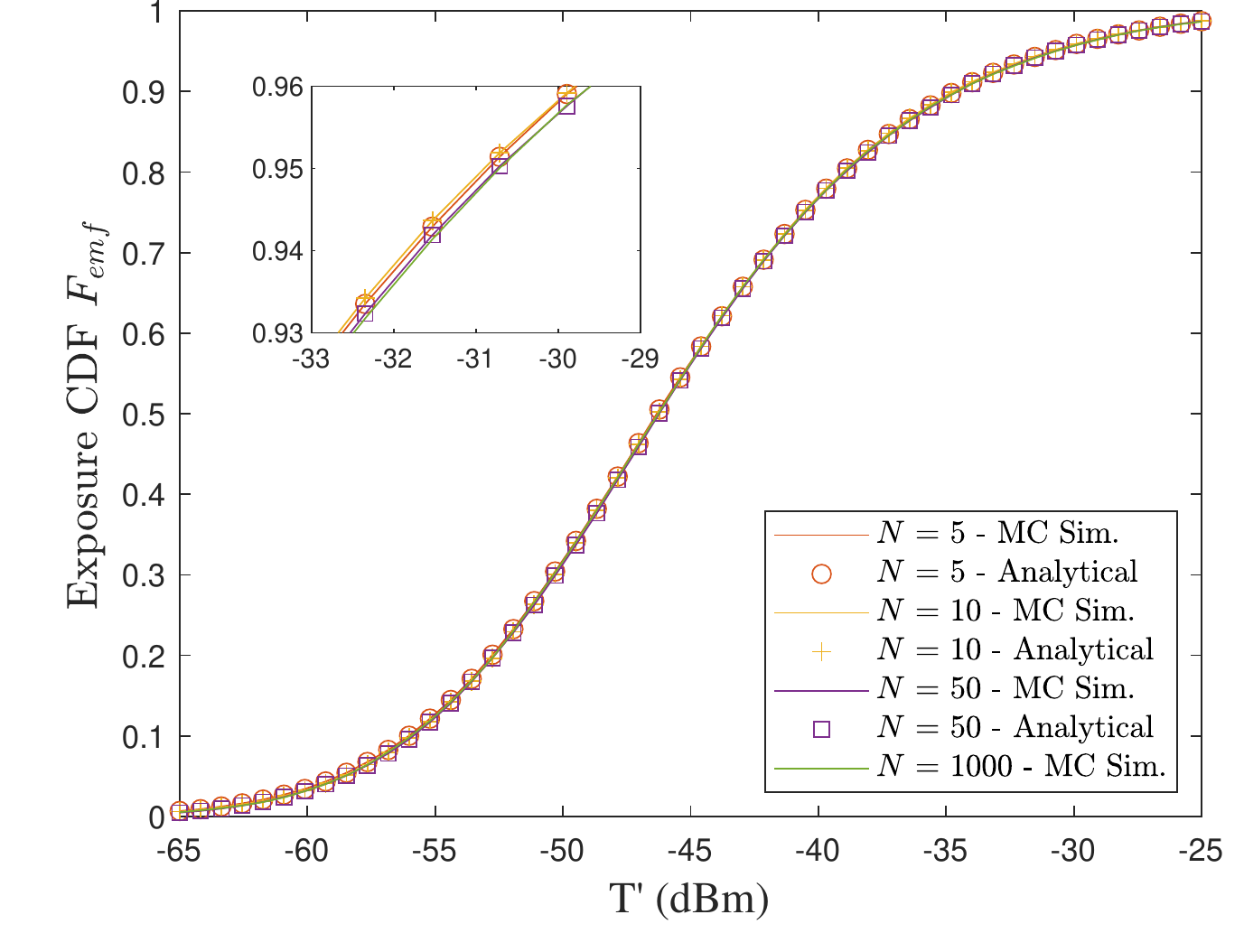}
    \caption{CDF of EMFE with the model parameters from Table~\ref{tab:BGPP_study}
    }
    \label{fig:Exp_BGPP_Paris_BOUYGUES_5G_NR_2100}
\end{figure}
\begin{figure}[h!]
  \centering
    \includegraphics[width=0.7\linewidth, trim={3cm, 9cm, 4cm, 10cm}, clip]{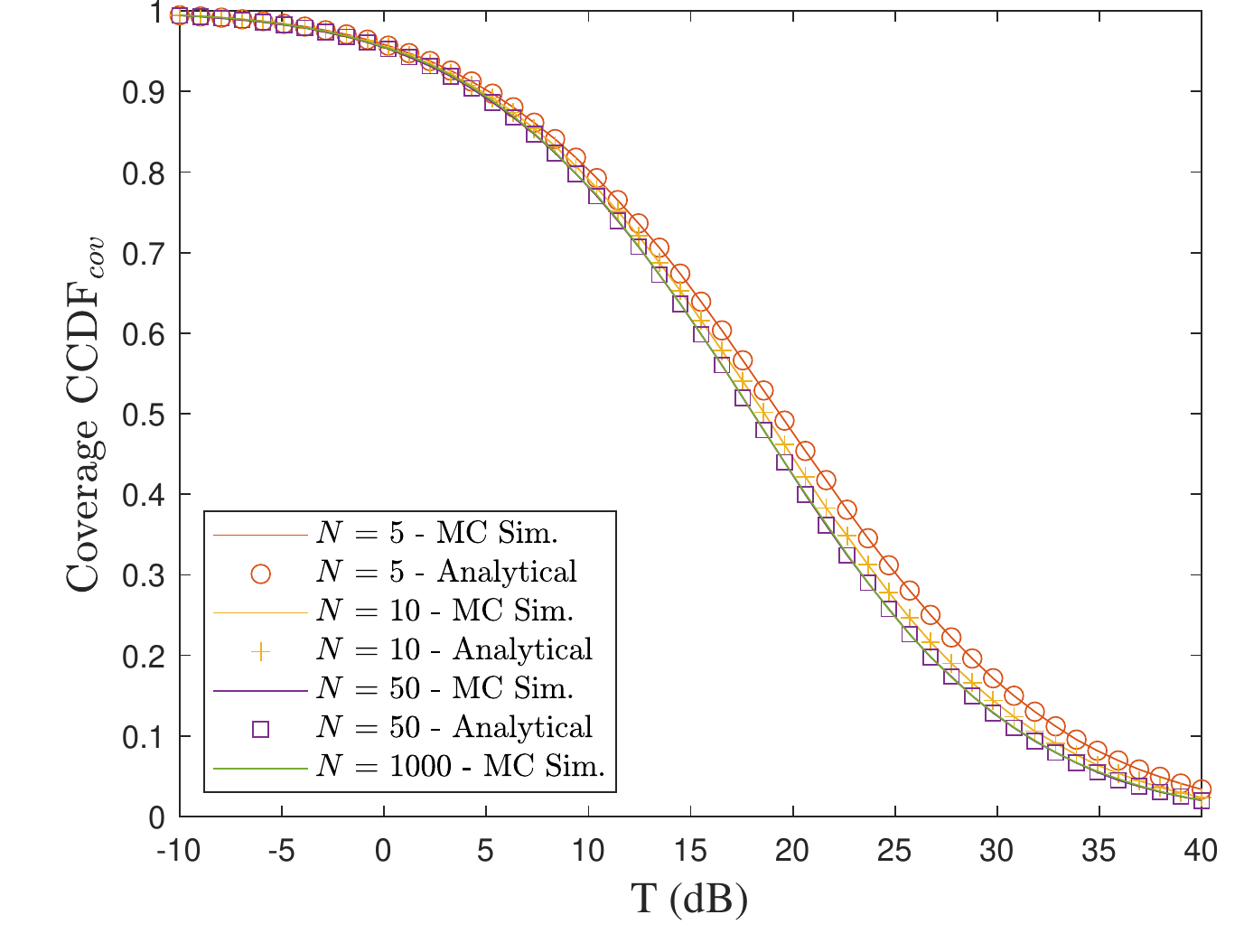}
    \caption{CCDF of the SINR with the model parameters from Table~\ref{tab:BGPP_study}
    }
    \label{fig:Cov_BGPP_Paris_BOUYGUES_5G_NR_2100}
\end{figure}

Isocurves depicting the joint CDF of the EMFE and SINR are presented in Fig.~\ref{fig:Joint_iso_BGPP_Paris_BOUYGUES_5G_NR_2100} for different values for the probability $p$ such that $G(T,T') = p$. Due to the intricate nature of the equations, the analytical computation of $G(T,T')$ can be time-consuming. To address this, an efficient solution is to determine the Fréchet lower bound (FLB) and the Fréchet upper bound (FUB), which are respectively defined by
\begin{equation}
       \text{FLB} = \max\left(0, F_{\text{cov}}(T) + F_{\text{emf}}(T') - 1\right)
\end{equation}and
\begin{equation}
       \text{FUB} = \min\left(F_{\text{cov}}(T), F_{\text{emf}}(T')\right). 
\end{equation}
such that $\text{FLB} \leq G(T,T') \leq \text{FUB}$.
Observing Fig.~\ref{fig:Joint_iso_BGPP_Paris_BOUYGUES_5G_NR_2100}, it is evident that the lower bound is close to $G(T,T')$ and becomes even closer as $p$ increases. In the pursuit of a conservative analysis, the lower bound can be employed as a substitute for the more intricate expression of $G(T,T')$.

\begin{figure}[t!]
    \centering
    \includegraphics[width=0.8\linewidth, trim={3cm, 9cm, 3cm, 10cm}, clip]{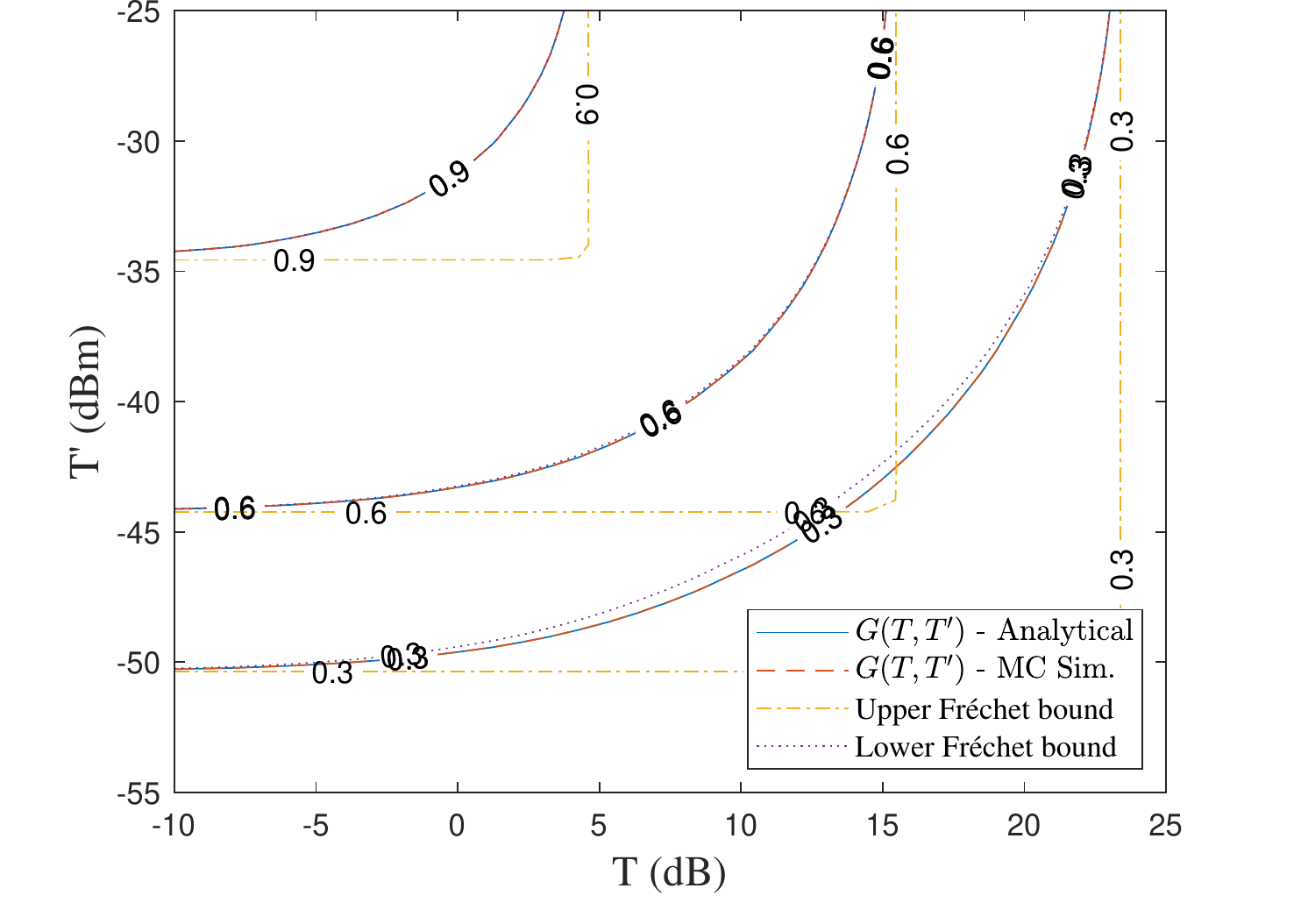}
    \caption{Isocurves of the joint CDF of the EMFE and SINR with the model parameters from Table~\ref{tab:BGPP_study}.
    }
    \label{fig:Joint_iso_BGPP_Paris_BOUYGUES_5G_NR_2100}
\end{figure}

\subsubsection{Impact of the $\beta$-parameter}
The marginal distributions of the EMFE (Theorem~\ref{eq:exp_BGPP}) and SINR (Theorem~\ref{eq:BGPP_Cov}) for several values of $\beta$ are shown in Fig.~\ref{fig:Exp_BGPP_study_beta} and \ref{fig:Cov_BGPP_study_beta}, respectively. Recalling that the limiting case $\beta=0$ corresponds to a H-PPP and $\beta=1$ corresponds to a GPP with more regularity, the EMFE is lower when the distributions of points is more random while the coverage is improved in networks that are more regularly deployed. This can be explained from three observations. First, in a more regular network, the typical user is on average closer to the serving BS but also to the most interfering BSs, as can be observed from the simulations in Fig.~\ref{fig:BGPP_study_beta_interpretation}. Second, the $n$th nearest BS gets closer on average as $n$ is smaller. Third, the signal power decreases as $r^{-\alpha}$. As a consequence, the more regular the network, the larger the power of the useful signal and the larger this latter power in comparison with the power of the interfering signals. To maximize both the SINR and the EMFE, the joint CDF of the EMFE and SINR can be analyzed.

\begin{figure}[h!]
  \centering
    \includegraphics[width=0.65\linewidth, trim={3cm, 8.5cm, 4cm, 10cm}, clip]{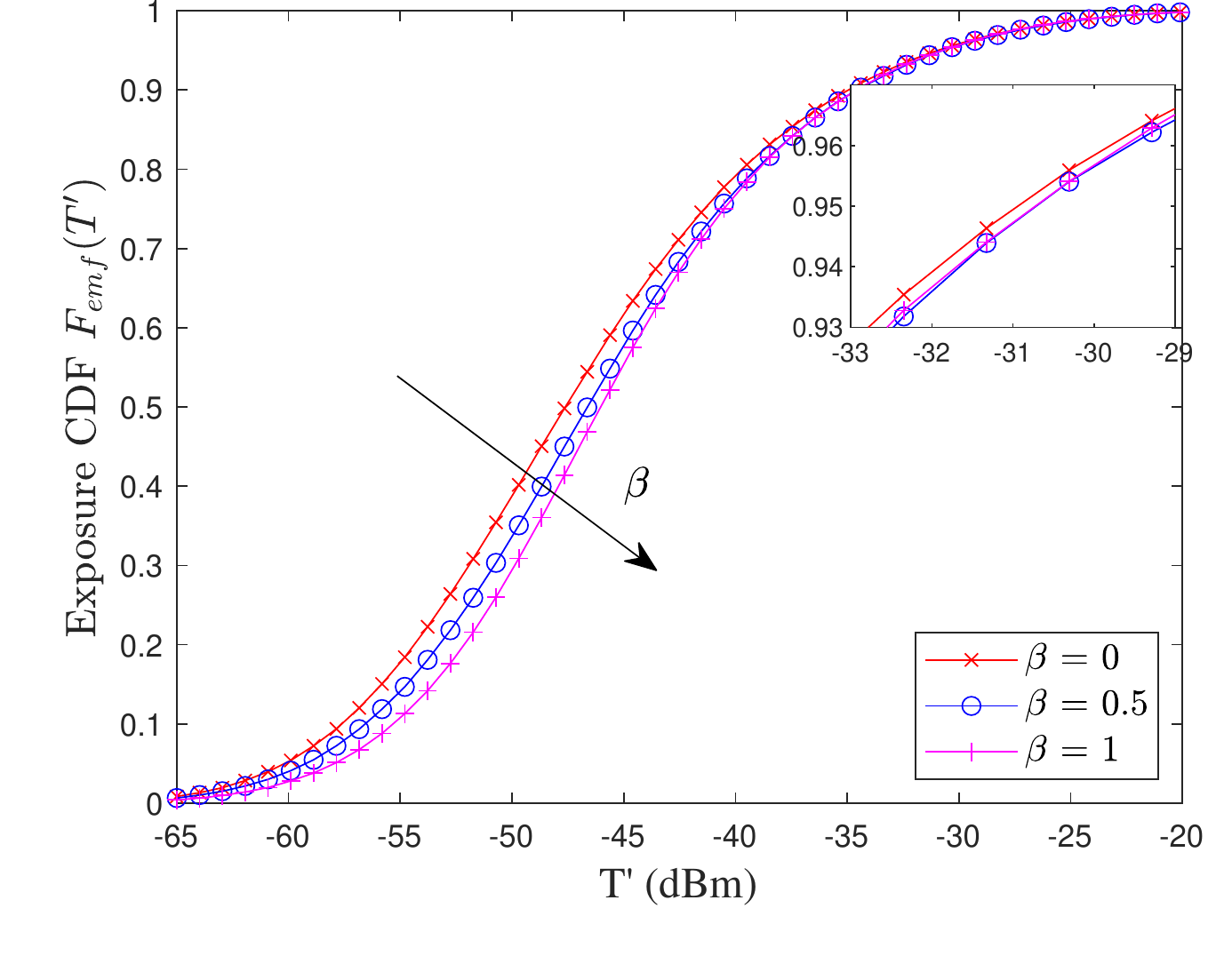}
    \caption{CDF of the EMFE for different values of $\beta$ in a $\beta$-GPP}
    \label{fig:Exp_BGPP_study_beta}
\end{figure}
\begin{figure}[h!]
  \centering
    \includegraphics[width=0.65\linewidth, trim={3cm, 8.5cm, 4cm, 10cm}, clip]{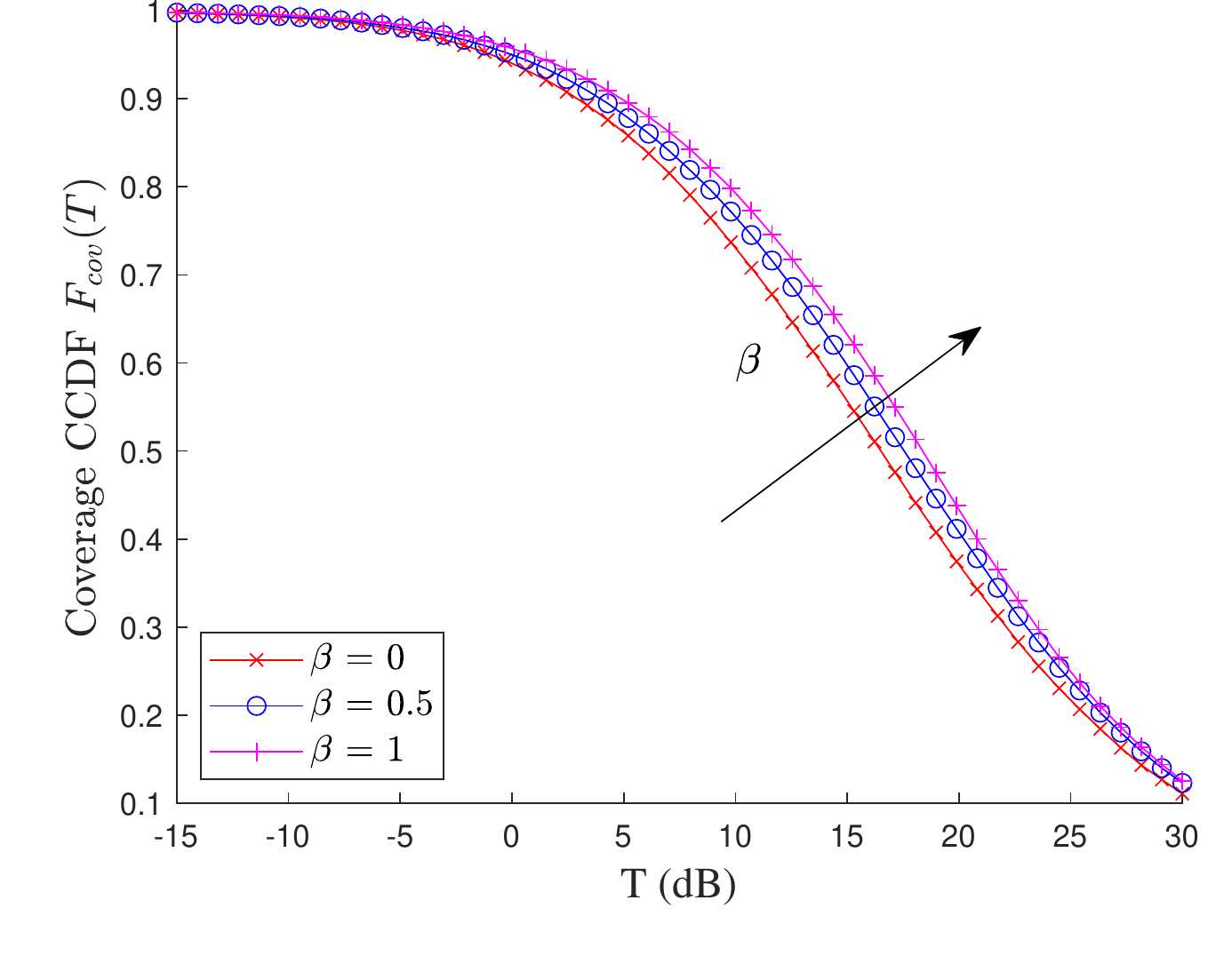}
    \caption{CCDF of the SINR for different values of $\beta$ in a $\beta$-GPP}
    \label{fig:Cov_BGPP_study_beta}
\end{figure}
\begin{figure}[h!]
    \centering
    \includegraphics[width=0.65\linewidth, trim={3cm, 9cm, 3cm, 10cm}, clip]{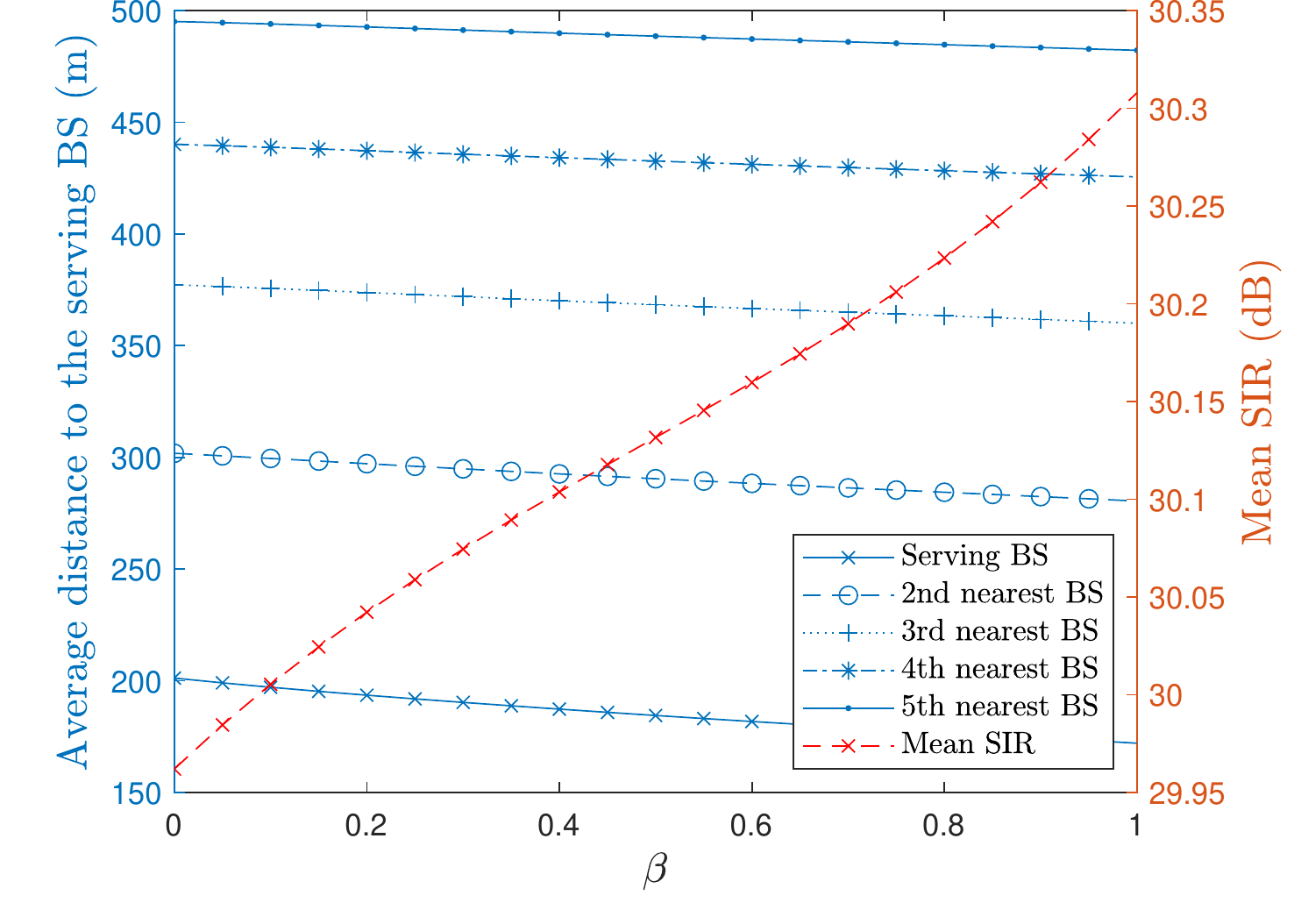}
    \caption{Average distance to the serving BS and average SINR as a function of $\beta$}
    \label{fig:BGPP_study_beta_interpretation}
\end{figure}
\begin{figure}[h!]
    \centering
    \includegraphics[width=0.65\linewidth, trim={3cm, 8.5cm, 3cm, 10cm}, clip]{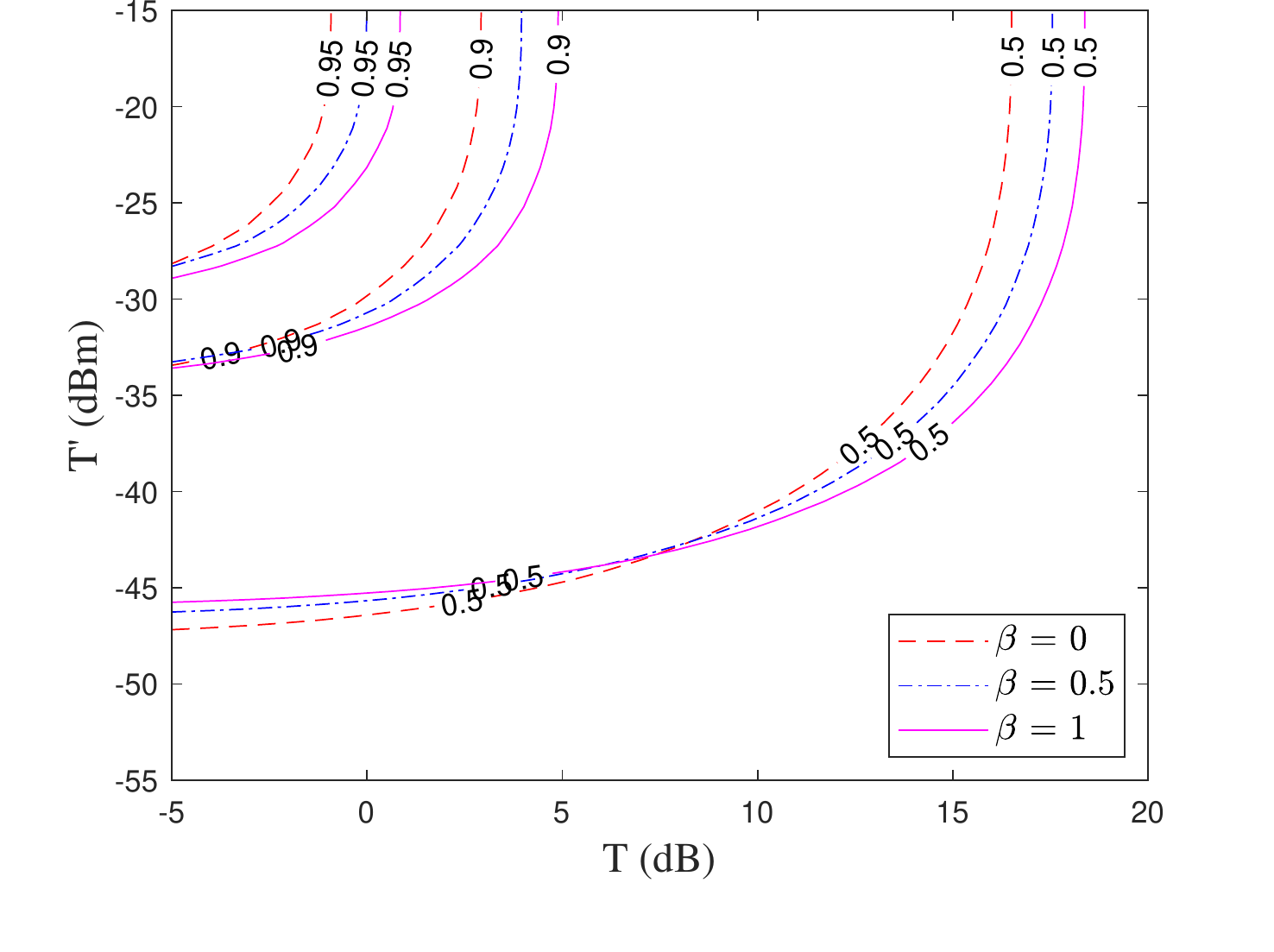}
    \caption{Isocurves of the joint CDF of the EMFE and SINR
    }
    \label{fig:Joint_BGPP_study_beta_isocurves}
\end{figure}

Isocurves of the joint CDF of the EMFE and SINR are shown in Fig.~\ref{fig:Joint_BGPP_study_beta_isocurves} for $G(T, T') = 0.5$, $G(T, T') = 0.9$ and $G(T, T') = 0.95$ for a H-PPP, a $\beta$-GPP with $\beta = 0.5$ and a GPP. In general, for a fixed pair $(T, T')$, the higher $\beta$ the better the network performance. For low values of $T'$, the trend is opposite. In other words, the benefit of an increase in $\beta$ in terms of coverage outweighs the negative impact on the EMFE, provided that the EMFE threshold is not too low. Figs. \ref{fig:Exp_BGPP_study_beta}, \ref{fig:Cov_BGPP_study_beta} and \ref{fig:Joint_BGPP_study_beta_isocurves} reveal the error made by modeling the Paris network with a simple H-PPP rather than a $\beta$-GPP. The Kolmogorov-Smirnov distance between the two, indicating the maximal absolute error, is 7.3\% for the CDF of the EMFE, 6.1\% for the CCDF of the SINR and 7.3\% for the joint CDF. These results highlight the importance of an accurate representation of the network topology.

\subsection{Motion-Variant Networks: Inhomogeneous Poisson Point Process}
\label{ssec:analysis_IPPP}

In this section, we illustrate the results for the I-PPP based on the density model in~\eqref{eq:IPPP_density}. As an example, we consider a MV network that corresponds to a cellular network in Brussels, Belgium. At the time of writing, no 5G network has been deployed in Brussels. We decided to use the LTE~1800 BS network of a major telecommunications operator whose data have been extracted from the database \cite{IBGE}. These BSs do not use dynamic BF. The zone under investigation is a disk centered at $(797,7085)$~km in Lambert~93 projection. It is the origin of the coordinate system in which the calculations are done. In order to cover the whole Brussels-Capital region, the radius of the disk is set at \numprint[km]{12} as can be seen in Fig~\ref{fig:IPPP_comparison_PPP_density}. The fitted parameters and the network parameters are listed in Table~\ref{tab:IPPP_study}. The central carrier frequency $f$ and the bandwidth $B_w$ correspond to the official band allocated to the operator for the DL. Numerical results are presented for Rayleigh fading ($m\! =\! 1$), justified by the considered frequency band for non line-of-sight propagation environments. In the following, the network performance is calculated for several user locations, using the change of variable explained in Subsection~\ref{sssec:min_math}. The area under investigation is each time a disk with arbitrary radius $\tau\! = \!\numprint[km]{7}$, contained in the larger \numprint[km]{12}-radius disk.
{\small
\begin{table}[h!]
\caption{System parameters used for subsection \ref{ssec:analysis_BGPP}}
\begin{center}
\begin{tabular}{ |c|c||c|c| } 
 \hline
 $f$ & \numprint[MHz]{1837.5} & $B_w$ & $\numprint[MHz]{15}$\\ 
 $\alpha$ & 3.2 & $x_0$ & $\numprint[km]{-0.145}$\\ 
 $P_t G_{max}$ & \numprint[dBm]{62.75} & $y_0$ & $\numprint[km]{-0.569}$\\
 $z$ & \numprint[m]{33} & $\Tilde{a}$ & $\numprint[km^{-1}]{0.050}$\\
 $r_e$ & \numprint[m]{0} & $\Tilde{b}$ & $\numprint[km^{-2}]{5.241}$\\
 $\tau$ & \numprint[km]{7} & $\Tilde{c}$ & $\numprint[km^{-3}]{-0.973}$\\
 $\sigma^2$ & \numprint[dBm]{-96.21} & $\Tilde{d}$ & $\numprint[km^{-4}]{0.048}$\\
 $p_{g}$ & 1 &  & \\
 \hline
\end{tabular}
\end{center}
\label{tab:IPPP_study}
\end{table}
}

\subsubsection{Impact of the user location in the network}\label{para:position}
The metrics of the I-PPP network are calculated at the origin and at (\numprint[km]{-3}, \numprint[km]{-3}) in the first coordinate system. For each user location, the network area is limited to a disk of radius $\tau$ centered at the user location. The two disks are illustrated with a red and green border in Fig.~\ref{fig:IPPP_comparison_PPP_density}. Due to the potential time-consuming computation of the mathematical expressions, we suggest local approximations using a H-PPP. For each location, the network performance is further compared with two approximations: (a) a H-PPP network with a BS density equal to the average BS density inside a small disk $\epsilon$ of radius \numprint[m]{150} centered at the user location and (b) a H-PPP network with a density equal to the average density inside the $\numprint[km]{7}$-radius disk centered at the user location. The mathematical expressions for the metrics of these H-PPPs can be obtained by setting $\Tilde{a} = \numprint[km^{-1}]{0}$, $\Tilde{c} = \numprint[km^{-3}]{0}$, $\Tilde{d} = \numprint[km^{-4}]{0}$ and
\begin{equation*}
    \Tilde{b} = \frac{\int_{\epsilon} \lambda(S) dS}{\int_{\epsilon}dS}.
\end{equation*}
The marginal distribution of the EMFE, given by Corollary~\ref{eq:IPPP_PhiEmf_nobf}, is displayed in Fig.~\ref{fig:Exp_IPPP_comparison_PPP}. The solid lines correspond to the results obtained from MC simulations and the markers correspond to values obtained from the mathematical expressions. The user at the origin of the coordinate system experiences a higher EMFE because of its smaller distance to the maximal density $(x_0, y_0)$, leading to a higher number of neighboring BSs. The approximation (a) is very good and gets better as the distance from the maximal density increases, contrarily to the approximation (b). This shows that homogenizing the BS density over a very local area gives a good estimate of the EMFE experienced by the user. When considering large networks however, the homogeneity assumption does not hold anymore. The $95$th quantile is $\numprint[dBm]{-34.77}$ ($\numprint[V/m]{0.24}$) for the user at the origin and $\numprint[dBm]{-39.34}$ ($\numprint[V/m]{0.14}$) for the user at
(\numprint[m]{-3000}, \numprint[m]{-3000}). The same remarks as for Paris apply when comparing these values to the 14.57~V/m 900MHz equivalent\cite{ordonnance} threshold set by the Brussels authorities (i.e. 20.82~V/m at \numprint[MHz]{1837.5}, corresponding to 2.58~dBm using formulas in \eqref{eq:ExpWM2} and \eqref{eq:ExpVM}). The CCDF of the SINR, given by Theorem~\ref{eq:cov} and shown in Fig.~\ref{fig:Cov_IPPP_comparison_PPP}, is on the contrary almost unchanged at the different locations, with only a slight improvement for the user at the origin of the coordinate system. 

\begin{figure}[h!]
  \centering
    \includegraphics[width=0.7\linewidth, trim={3cm, 9cm, 3cm, 10cm}, clip]{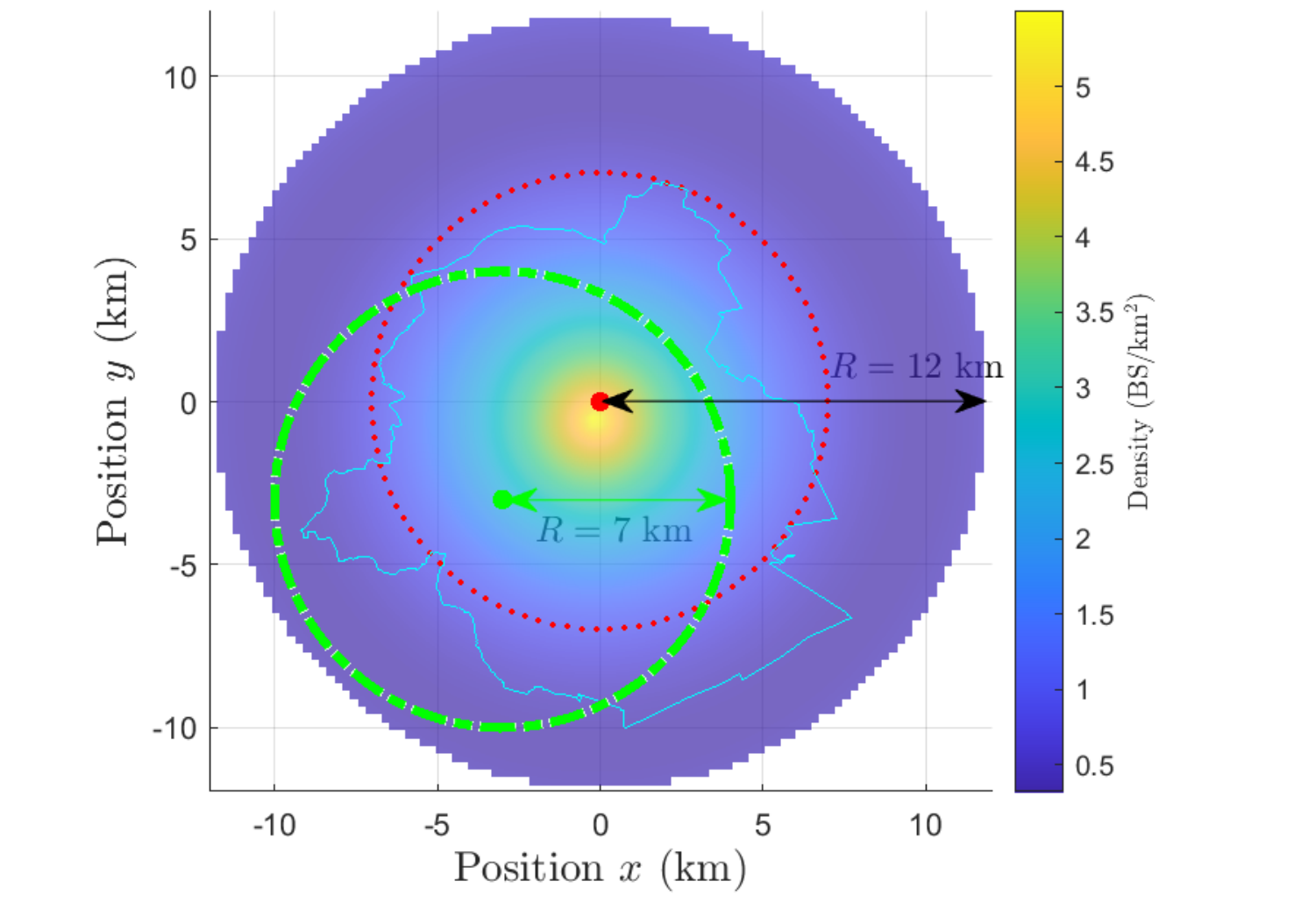}
    \caption{Density function in \eqref{eq:IPPP_density} with model parameters from Table~\ref{tab:IPPP_study}. The two disks are centered at the two calculation points used in Figs.~\ref{fig:Exp_IPPP_comparison_PPP},~\ref{fig:Cov_IPPP_comparison_PPP}. The borders of the Brussels-Capital region are in light blue.}
    \label{fig:IPPP_comparison_PPP_density}
\end{figure}
\begin{figure}[h!]
  \centering
    \includegraphics[width=0.65\linewidth, trim={3cm, 9cm, 4cm, 10cm}, clip]{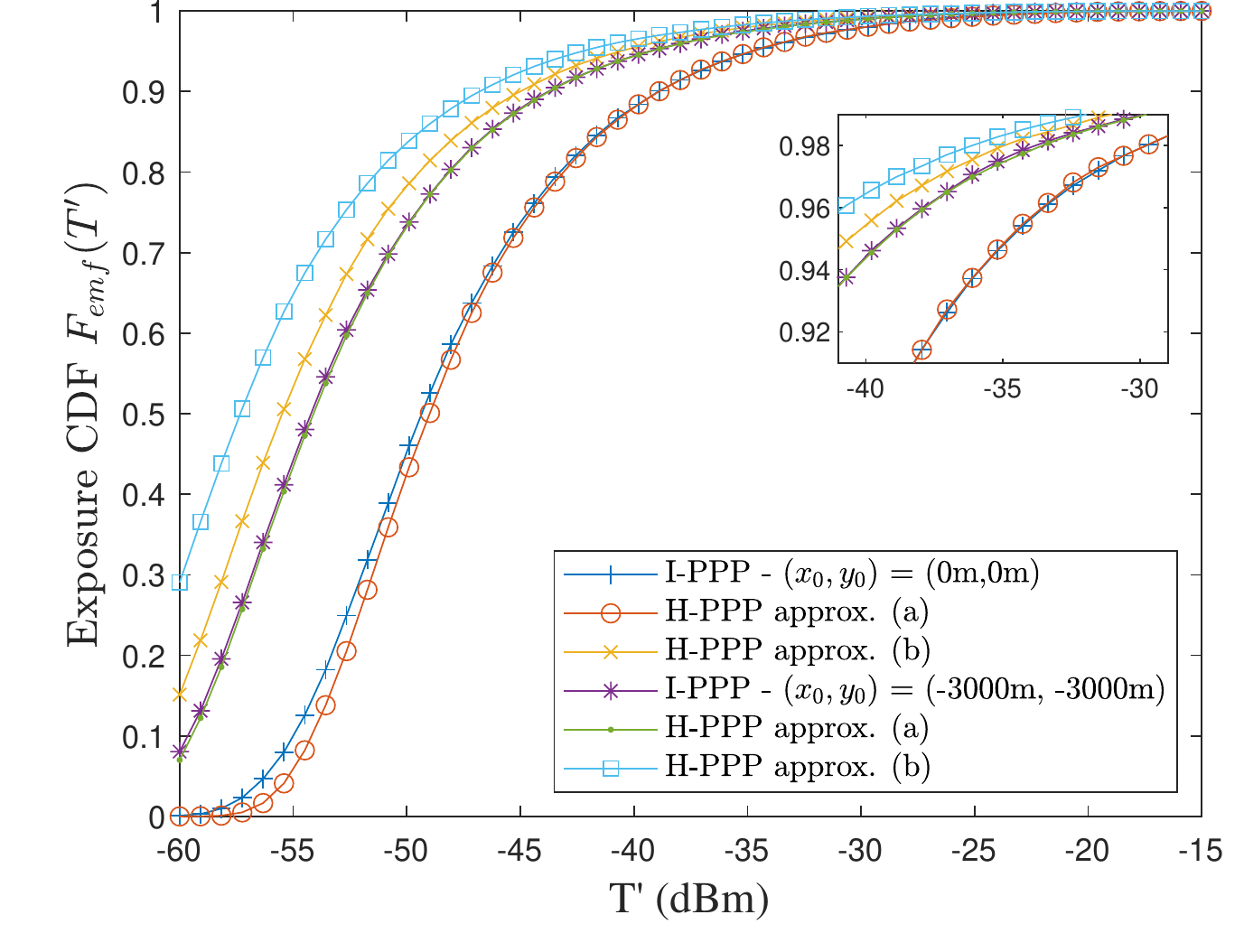}
    \caption{CDF of the EMFE at several locations in the LTE~1800 network of a major Belgian operator in Brussels and comparison with a H-PPP approximation}
    \label{fig:Exp_IPPP_comparison_PPP}
\end{figure}
At last, isocurves of the joint CDF of the EMFE and SINR, $G(T,T') = p$, given by Theorem~\ref{eq:joint} are shown in Fig.~\ref{fig:Joint_IPPP_comparison_PPP_isocurve} for $p = [0.5;0.95;0.99]$ and validated via MC simulations.

\begin{figure}[h!]
\centering
    \includegraphics[width=0.65\linewidth, trim={3cm, 9cm, 4cm, 10cm}, clip]{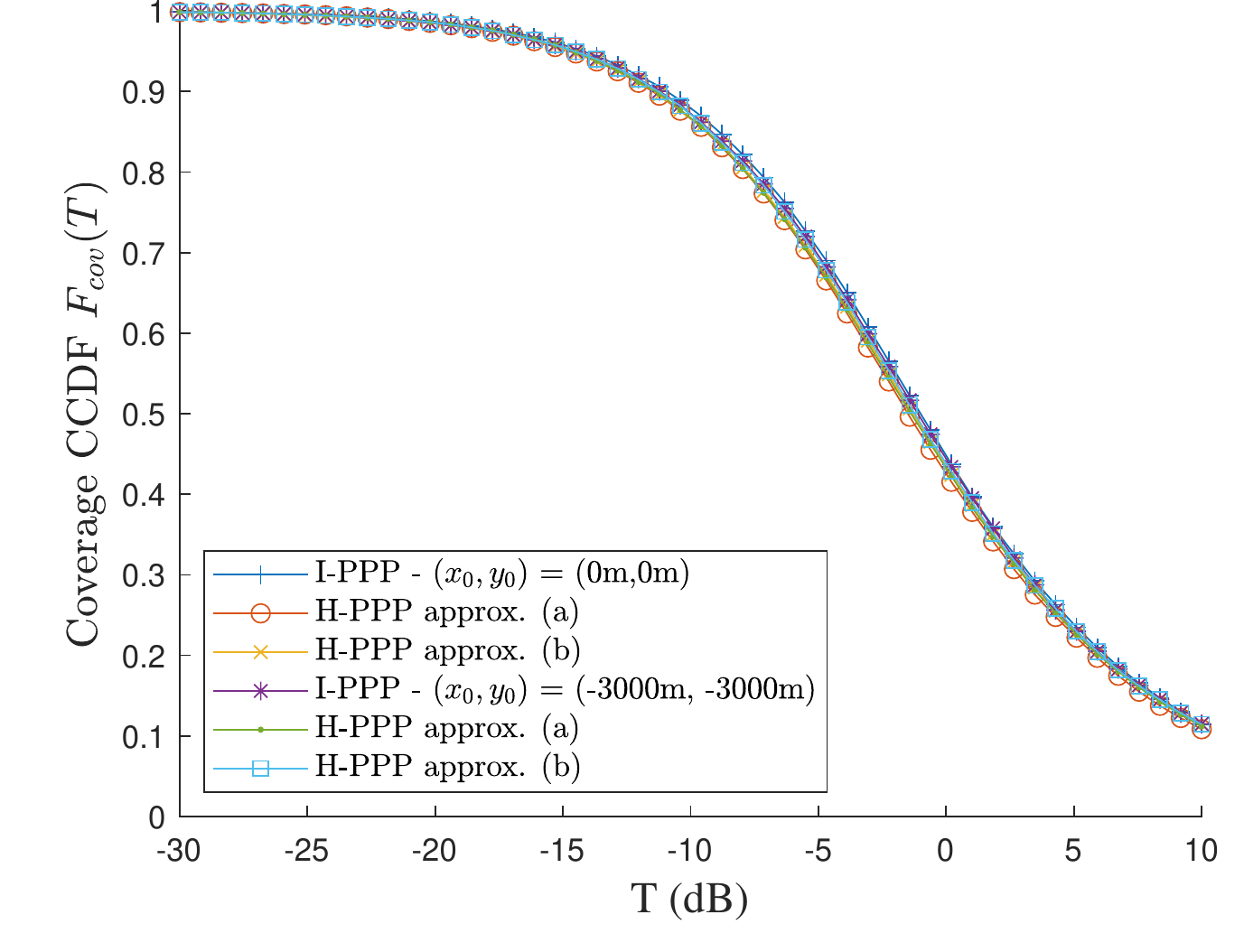}
    \caption{CCDF of the SINR at several locations in the considered LTE~1800 network and comparison with a H-PPP approximation}
    \label{fig:Cov_IPPP_comparison_PPP}
\end{figure}
\begin{figure}[h!]
    \centering
    \includegraphics[width=0.65\linewidth, trim={3cm, 9cm, 3cm, 10cm}, clip]{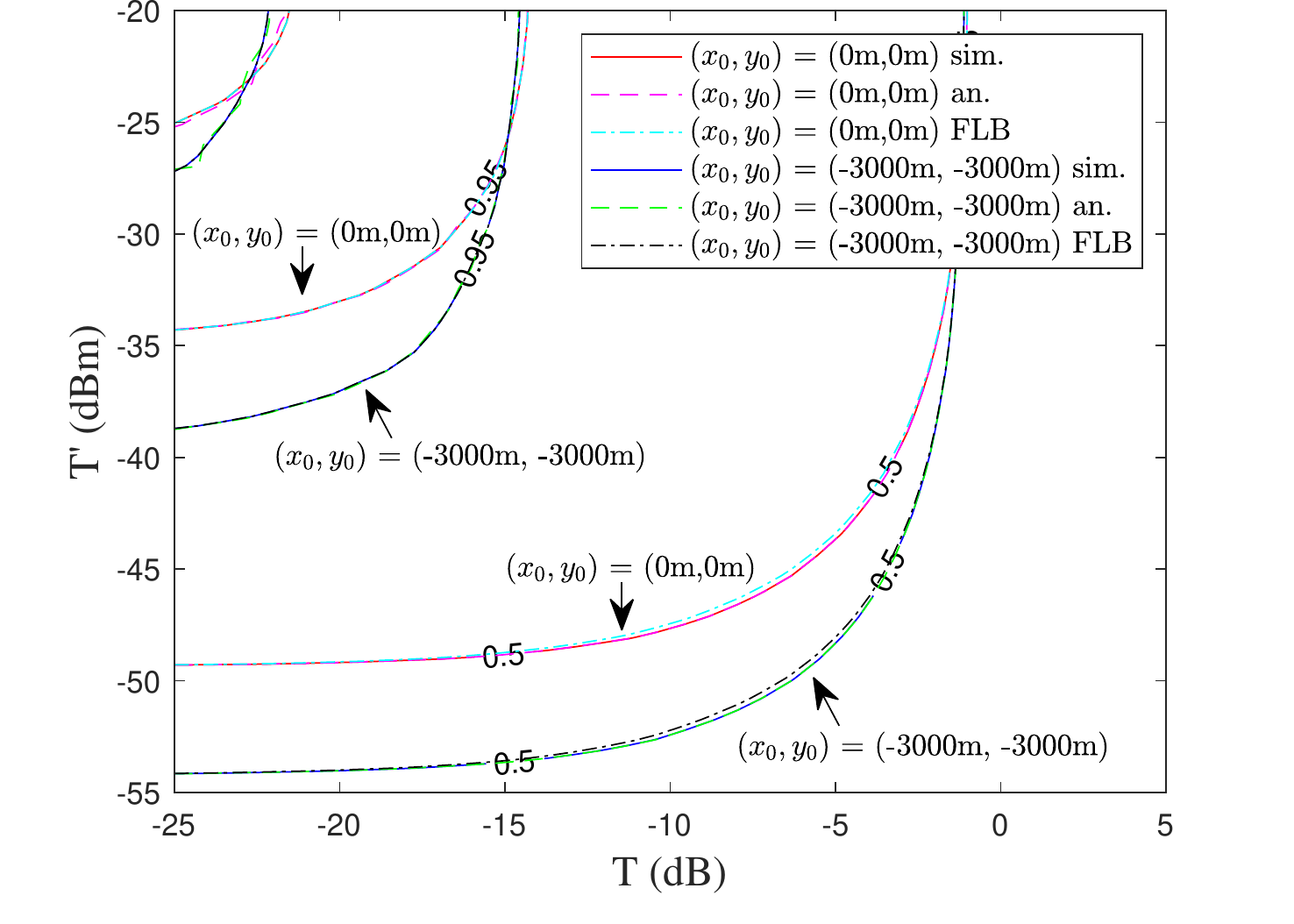}
    \caption{Isocurves of the joint CDF of the EMFE and SINR $G(T,T') = p$ with $p = [0.5;0.95;0.99]$. FLB = Fréchet lower bound.}
    \label{fig:Joint_IPPP_comparison_PPP_isocurve}
\end{figure}

\subsubsection{Overview of the metrics in the network}


In the following, the focus is put on the EMFE since it could be concluded from the previous numerical results that the spatial dependence of the coverage is relatively low. The average EMFE calculated by Theorem~\ref{eq:IPPP_meanexp} at 2500~locations in the center of Brussels is presented in Fig.~\ref{fig:IPPP_Brussels_map_LTE_1800mean_exp}, and is superimposed on the Brussels map. As expected, the EMFE radially decreases as the distance to the center increases. Statistics of the EMFE regardless of the user location are often required. Taking the expected value of the considered network over the two dimensions of space in Theorem~\ref{eq:IPPP_meanexp} gives a value of $\numprint[W/m^2]{3.50e-5}$ ($\numprint[V/m]{0.11}$).

\begin{figure}[h!]
    \centering
    \includegraphics[width=0.65\linewidth, trim={3cm, 9cm, 3cm, 10cm}, clip]{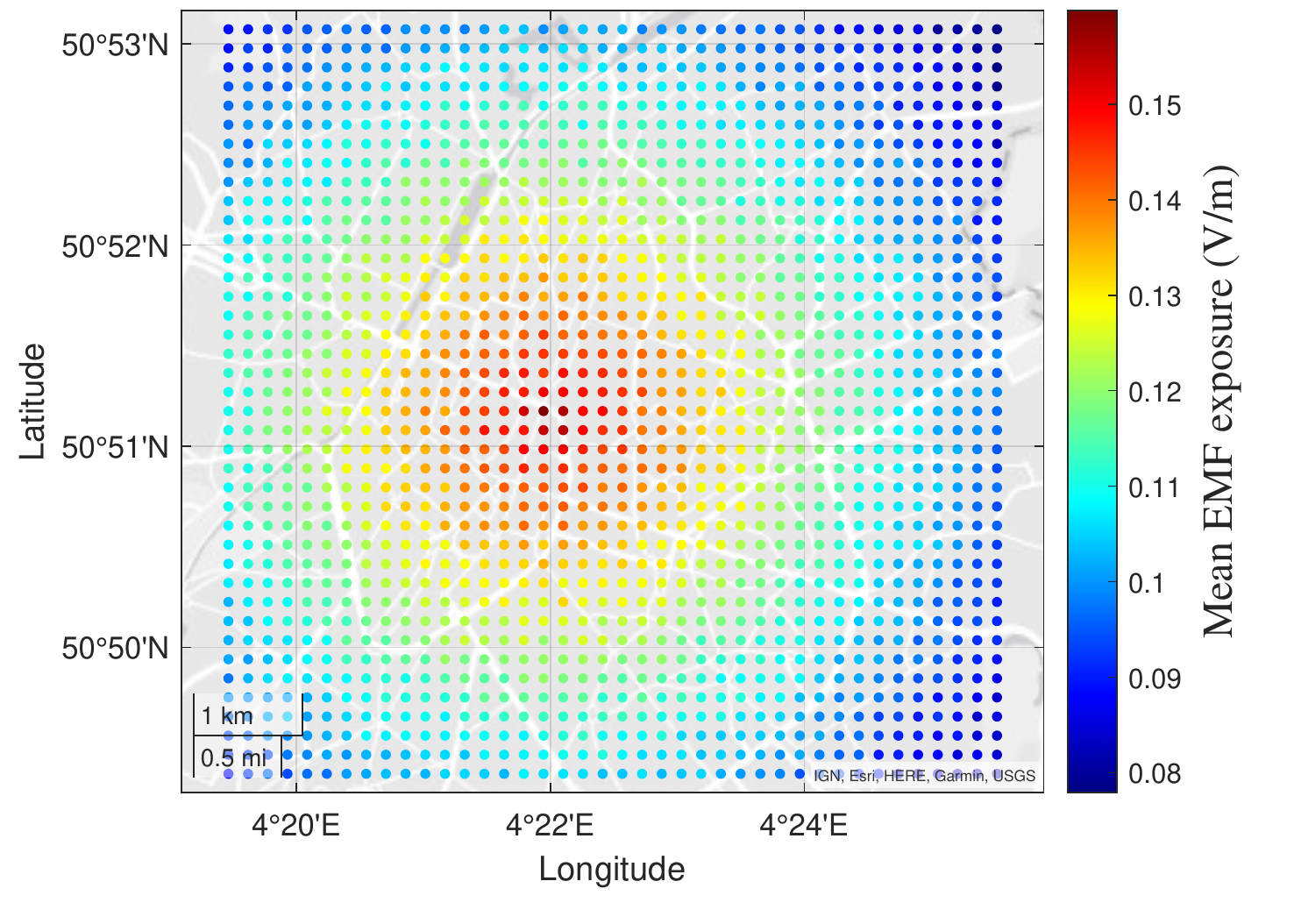}
    \caption{Mean EMFE from the considered LTE~1800 network at 2500 locations in the center of Brussels}
    \label{fig:IPPP_Brussels_map_LTE_1800mean_exp}
\end{figure}
The probability to be below the threshold $T'$ = -35.7~dBm (0.22~V/m) is shown at the same 2500 locations in Fig.~\ref{fig:IPPP_Brussels_map_LTE_1800expcdf}. The relatively low values of the mean EMFE compared to the legal threshold leads to choose a relatively low value of $T'$ for the illustration. Again, taking the expected value of the spatially-dependent CDF of the EMFE in Theorem~\ref{eq:exp_IPPP} provides a CDF of the EMFE experienced by any user, regardless of its location, as shown in Fig.~\ref{fig:IPPP_Brussels_map_LTE_1800expcdf_mean}.

\begin{figure}[h!]
\centering
    \includegraphics[width=0.65\linewidth, trim={3cm, 9cm, 3cm, 10cm}, clip]{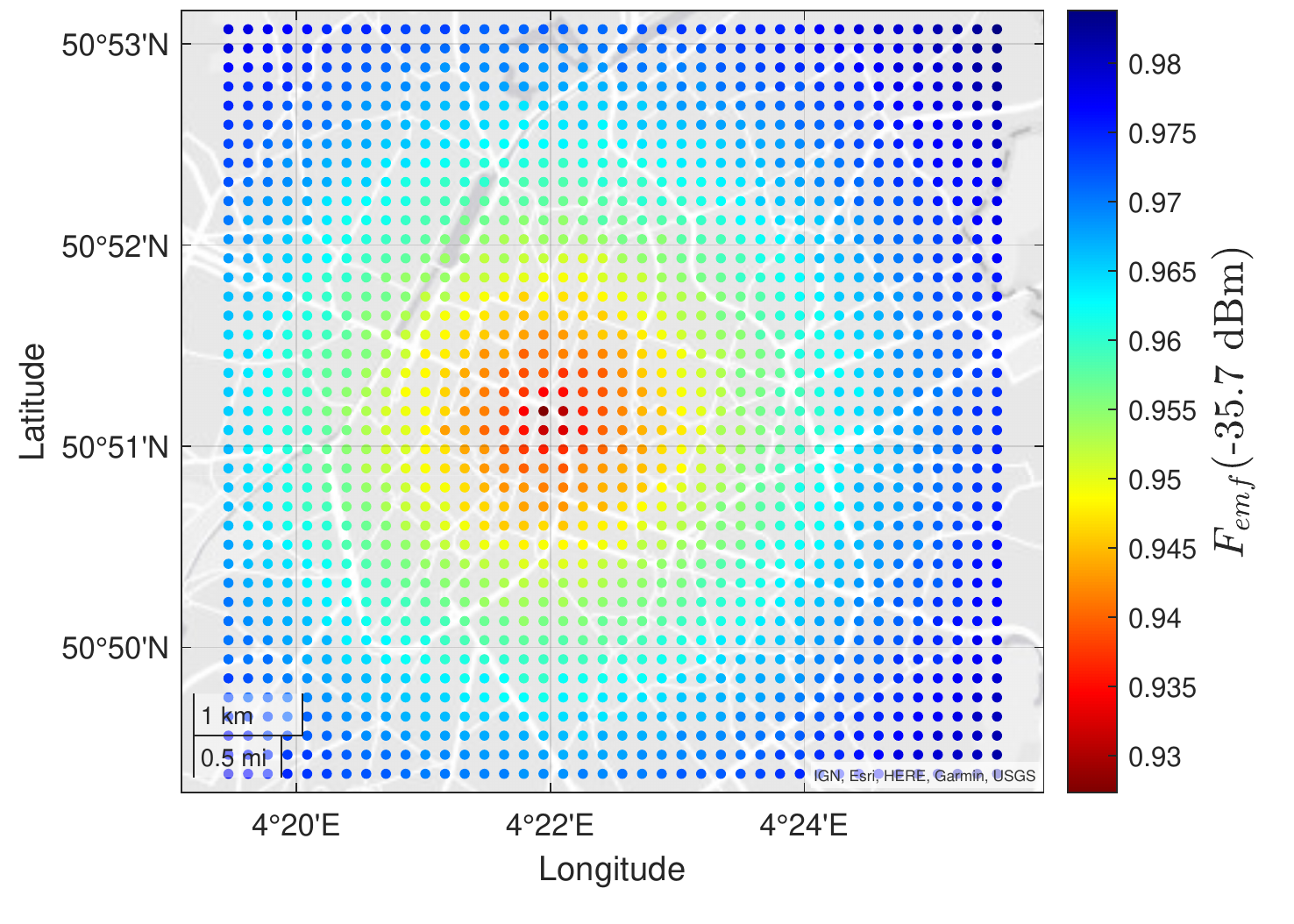}
    \caption{$F_{\text{emf}}$(-35.7~dBm) for the considered LTE~1800 network at 2500 locations in the center of Brussels}
    \label{fig:IPPP_Brussels_map_LTE_1800expcdf}
\end{figure}
\begin{figure}[h!]
  \centering
    \includegraphics[width=0.6\linewidth, trim={3cm, 9cm, 3cm, 10cm}, clip]{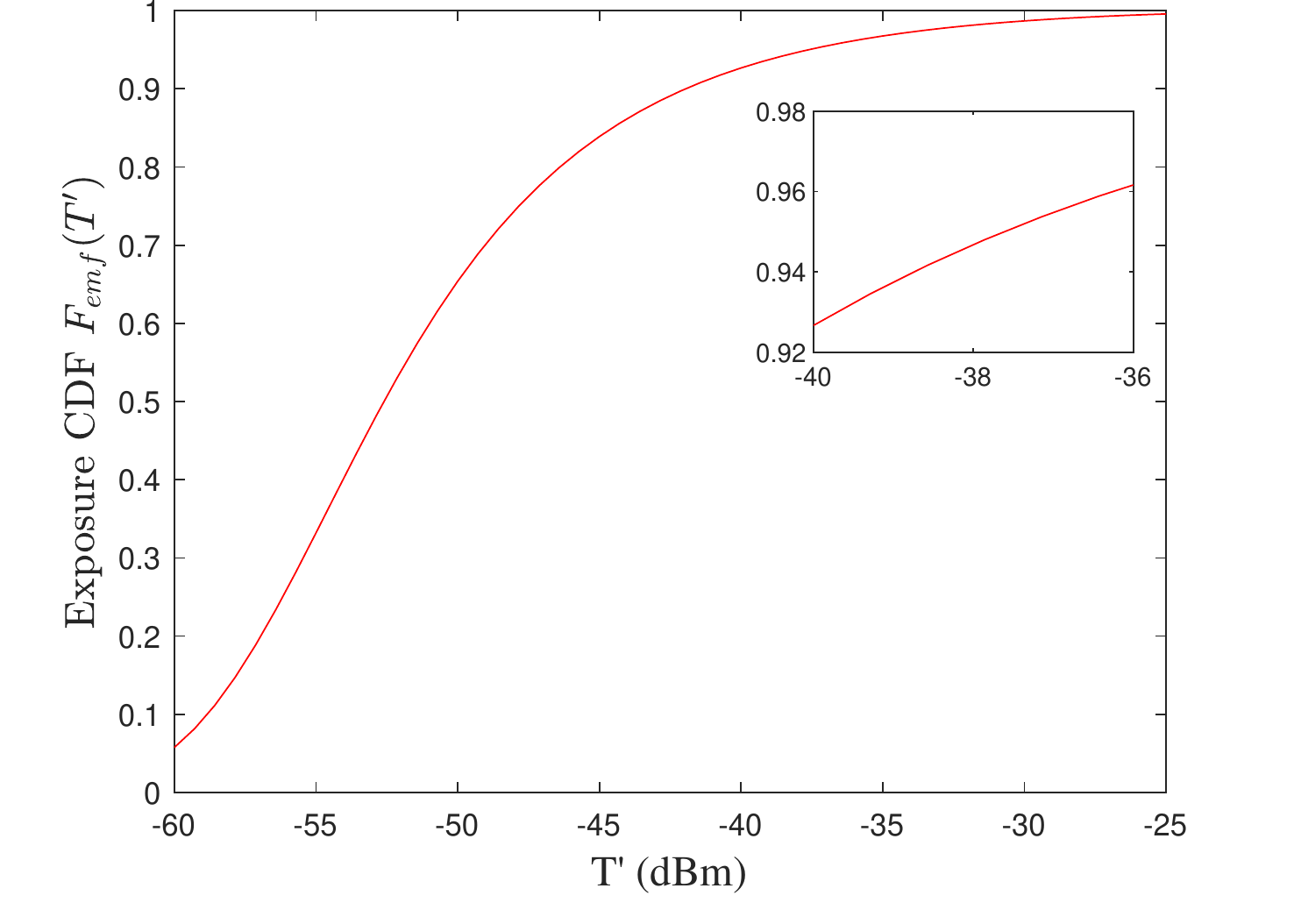}
    \caption{$F_{emf}(T')$ averaged over all 2500 locations in the considered LTE~1800 network.}
    \label{fig:IPPP_Brussels_map_LTE_1800expcdf_mean}
\end{figure}

\section{Conclusion}
\label{sec:conclusion}
In this paper, performance metrics are provided to jointly analyze the EMFE and coverage in MI and MV cellular networks. Based on network topologies usually encountered in European cities, the $\beta$-GPP is used as an example of tractable PP for MI networks. In MV networks, the network is modeled by an I-PPP. The approach is illustrated for the 5G~NR~2100 network in Paris (France) and a LTE~1800 network in Brussels (Belgium) using realistic network parameters. An analysis of the network parameters is also provided and shows that an optimal value of these parameters can be found to maximize the coverage while minimizing the EMFE.
For future works, performance metrics could be easily adapted for other network models, in other cities, using additional features (more complex BF model, blockage, uplink, heterogeneous cellular networks). The adaptation of the proposed framework to evaluate global EMFE, incorporating the correlation between the location of BSs of different technologies and operators, in order to have a faithful way of comparison with the legal thresholds, is another potential avenue for further investigation.


{\appendices

\section{Proof of the Mean of the EMFE in a $\beta$-GPP}\label{sec:BGPP_mean_proof}

As the main beam of the serving BS is assumed to be directed towards the user, while the beams of the interfering BSs are randomly oriented, it is essential to separate the EMFE caused by the serving BS and the EMFE caused by the interfering BSs. As explained in Subsection~\ref{ssec:motion_invariant_model}, since each BS can potentially be the serving BS, we have{\small
\begin{equation}\label{eq:mean_IPD_1}
    \mathbb E\left[\mathcal{P}\right] \!=\! \mathbb E_{\Psi, \xi, |h|, G}\!\Bigg[\sum_{i \in \mathbb{N}} \!\Big(\Bar{P}_{r,i} G_i |h_i|^2 + \sum\limits_{\substack{j \in \mathbb{N} \\ j\neq i}} \Bar{P}_{r,j} G_j |h_j|^2 \xi_j \Big) \mathds{1}_{\xi_i} \mathds 1_{\mathcal{A}_i}\Bigg]
\end{equation}}where $\mathds{1}_{\mathcal{A}_i}\! = \! \{\mathcal{A}_i \!=\! 1\}$ is the indicator function.
The expectation operators and the sum over $i$ can be interchanged. Moreover, $\Psi$, $\xi$, $|h|$, and $G$ are pairwise independent. Given the model, the normalized gain of the serving BS is $G_i \!= \!1$, and $\mathbb E\left[|h|^2\right] \!= \!1$. With these observations and the application of the expectation operators $\mathbb E_{\xi}$ and $\mathbb E_{G_i}$, \eqref{eq:mean_IPD_1} is expressed as
\begin{equation*}
    \mathbb E\left[\mathcal{P}\right] = \beta \sum_{i \in \mathbb{N}} \mathbb E_{\Psi, \xi}\Bigg[\Bigg(\Bar{P}_{r,i} + \mathbb E_{G}\Big[\sum\limits_{j \in \mathbb{N} \setminus \{i\}} \Bar{P}_{r,j} G_j \xi_j\Big] \Bigg)\mathds 1_{\mathcal{A}_i}\Bigg].
\end{equation*}
The expectation operator $\mathds E_\Psi$ can be split into the expectation operators $\mathds E_{|X_i|}\!=\!\mathds E_{R_i}$ and $\mathds E_{\Psi \setminus \{\!X_i\!\}}$. We make a small approximation by applying the expectation operators $\mathds E_{\Psi \setminus \{\!X_i\!\}}$ and $\mathds E_\xi$ to the sum of the interference and to $\mathds 1_{\mathcal{A}_i}$ independently. The impact of this approximation is shown to be insignificant in the numerical results section. Given that $\mathbb E\left[\xi_i\right]\! = \!\beta$, this leads~to
\begin{multline}\label{eq:mean_IPD_2}
    \mathbb E\left[\mathcal{P}\right] = \beta \sum_{i \in \mathbb{N}} \mathbb E_{R_i}\Bigg[\Bigg(\Bar{P}_{r,i} + \beta \mathbb E_{G, \Psi \setminus \{\!X_i\!\}}\Big[\sum\limits_{j \in \mathbb{N} \setminus \{i\}} \Bar{P}_{r,j} G_j\Big] \Bigg)\Bigg.\\
    \Bigg.\times \underbrace{\mathbb E_{\Psi \setminus \{\!X_i\!\}}\Big[\prod_{j \in \mathbb{N} \setminus \{i\}}\mathbb E_{\xi_j}\left[\mathds{1}_{\left\{\xi_j=1, |X_j| > |X_i|\right\} \cup \{\xi_j = 0\}}\right]\Big]}_{\Upsilon_i^\beta(R_i^2)}\Bigg].
\end{multline}
Focusing first on $\Upsilon_i^\beta(R_i^2)$, upon applying the expectation operators, the resulting expression is obtained:
\begin{equation}\label{eq:mean_IPD_3}
    \Upsilon_i^\beta(R_i^2) = \prod_{j \in \mathbb{N} \setminus \{i\}}\left(\beta \,\mathbb E_{R_j}\left[\mathds{1}_{\left\{R_j^2 > R_i^2\right\}}\right]+1-\beta\right).
\end{equation}
Then using the PDF in \eqref{eq:BGPP_pdf} for the inner expectation results in
\begin{equation}\label{eq:E_Xi}
    \mathbb E_{R_j}\left[\mathds{1}_{\left\{R_j^2 > R_i^2\right\}}\right] = \int_{r_e^2}^{\tau^2} \mathds{1}_{\left\{v > R_i^2\right\}} \frac{v^{j-1}\, e^{-\frac{cv}{\beta}}}{(j-1)!\, \left(\beta/c\right)^j}\,dv.
\end{equation}
Using the generalized incomplete gamma function $\Gamma(z; a, b) = \int_a^b  t^{j-1}\, e^{-t}\,dt$ and replacing \eqref{eq:E_Xi} in \eqref{eq:mean_IPD_3} leads to
\begin{equation}\label{eq:upsilon}
    \Upsilon_i^\beta(R_i^2) = \prod_{j \in \mathbb{N} \setminus \{i\}}\Bigg(1-\beta+\beta \,\frac{\Gamma\left(j, \frac{c R_i^2}{\beta}, \frac{c\tau^2}{\beta}\right)}{(j-1)!}\Bigg).
\end{equation}

Then in \eqref{eq:mean_IPD_2}, the sum over $j$, excluding the index $i$, can be rewritten as the difference between the sum over $j$ and the term with index $i$. As $\mathbb E_G\left[G\right] = p_g$ and as the $G_i$'s are pairwise independent, by using again the PDF in \eqref{eq:BGPP_pdf}, we have
\begin{equation}\label{eq:mean_IPD_4}
\begin{split}
    &\mathbb E_{G, \Psi \setminus \{\!X_i\!\}}\Big[\sum\limits_{j \in \mathbb{N} \setminus \{i\}} \Bar{P}_{r,j} G_j\Big] = \mathbb E_{G, \Psi}\Big[\sum\limits_{j \in \mathbb{N}} \Bar{P}_{r,j} G_j-\Bar{P}_{r,i} G_i\Big]\\
    &= p_g \int_{R_i^2}^{\tau^2} \Bigg(\sum_{j \in \mathbb{N}}  \frac{(\frac{c}{\beta})^j v^{j-1}e^{-\frac{c u}{\beta}}}{(j-1)!}-\frac{(\frac{c}{\beta})^i v^{i-1}e^{-\frac{c u}{\beta}}}{(i-1)!}\Bigg) \Bar{P}_{r}(v) dv.
\end{split}
\end{equation}
The sum inside the integral converges and gives
\begin{equation}\label{eq:sum_converges}
    \sum_{j \in \mathbb{N}} \frac{v^{j-1}}{(\beta/c)^{j}\,(j-1)!}
    = \sum_{j \in \mathbb{N}_0} \frac{v^{j}}{(\beta/c)^{j+1}\,j!} 
    = \frac{c}{\beta}\, e^{vc/\beta}.
\end{equation}
Replacing \eqref{eq:sum_converges} in \eqref{eq:mean_IPD_4} gives
\begin{equation}\label{eq:mean_IPD_6}
\begin{split}
    &\mathbb E_{G, \Psi \setminus \{\!X_i\!\}}\Big[\sum\limits_{j \in \mathbb{N} \setminus \{i\}} \Bar{P}_{r,j} G_j\Big]\\
    &= \frac{2 p_g c/\beta}{\alpha-2} \left[\Bar{P}_{r}(r^2)\left(r^2+z^2\right)\right]_{r = \tau}^{r = R_i}-p_g \int_{R_i^2}^{\tau^2} f_i(v) \Bar{P}_{r}(v) dv.
\end{split}
\end{equation}
Replacing \eqref{eq:mean_IPD_3} and \eqref{eq:mean_IPD_6} in \eqref{eq:mean_IPD_2} and applying the expectation operator $\mathbb E_{R_i}$ using \eqref{eq:BGPP_pdf} leads to 
{\small
\begin{multline*}
    \mathbb E\left[\mathcal{P}\right] \!=\! \beta \sum_{i \in \mathbb{N}} \int_{r_e^2}^{\tau^2}\!\left(\Bar{P}_{r}(u) + \frac{2 \, c \,p_g}{\alpha-2} \!\!\left[\Bar{P}_{r}(r^2)\left(r^2+z^2\right)\right]_{r = \tau}^{r = \sqrt{u}}\right.\\
    \left.-{p_g}{\beta} \int_{u}^{\tau^2} f_i(v) \Bar{P}_{r}(v) dv\right)\times f_i(u) \Upsilon_i^\beta(u) du.
\end{multline*}}At last, the expression of the mean~\eqref{eq:BGPP_meanexpeq} in Theorem~\ref{eq:BGPP_meanexp} is obtained by switching the sum and the integral, by distributing the product and by defining 
\begin{equation*}
    \Omega(u) = \sum_{i \in \mathbb{N}} f_i(u) \Upsilon_i^\beta(u), \quad \Omega^*(u,v) = \sum_{i \in \mathbb{N}} f_i(u) f_i(v) \Upsilon_i^\beta(u).
\end{equation*}

If no dynamic BF is employed, \eqref{eq:mean_IPD_1} simplifies to $\mathbb E\left[\mathcal{P^*}\right] = \mathbb E_{\Psi,|h|,\xi}\left[\sum_{i \in \mathbb{N}} \Bar{P}_{r,i} |h_i|^2 \, \xi_i\right]$. Applying the expectation operators leads to
\begin{equation*}
    \mathbb E\left[\mathcal{P^*}\right] = \beta \int_{r_e^2}^{\tau^2} \sum_{i \in \mathbb{N}}  \frac{(c/\beta)^i}{(i-1)!}u^{i-1}e^{-cu/\beta} \Bar{P}_{r}(u) du.
\end{equation*}
Repeating step \eqref{eq:sum_converges} and integrating leads to \eqref{eq:BGPP_meanexpeq} in Corollary~\ref{eq:BGPP_meanexp_nobf}.
\section{Proof of the Second Moment of the EMFE in a $\beta$-GPP}\label{sec:BGPP_var_proof}
Since every BS can potentially be the serving BS, the same reasoning used for the proof of the mean can be applied to derive the second moment of the EMFE. By using the result \eqref{eq:upsilon}, this gives
\begin{equation}\label{eq:var_BGPP_1}
    \begin{split}
        \mathbb E\left[\mathcal{P}^2\right] &= \mathbb E\left[(S_0+I_0)^2\right] = \mathbb E\left[S_0^2+2 S_0 I_0 + I_0^2\right]\\
        &= \mathbb E\left[\sum_{i \in \mathbb{N}}\left(S_i^2+2 S_i I_i + I_i^2\right) \mathbb P\left[X_0 = X_i\right]\right]\\
        &= \sum\limits_{i \in \mathbb{N}} \mathbb E_{R_i}\!\left[\mathbb E_{\Psi\!\setminus \!\{\!X_i\!\}, G, |h|, \xi}\!\left[S_i^2+2 S_i I_i + I_i^2\right]\!\Upsilon_i^\beta(X_i)\right]
    \end{split}
\end{equation}
We then analyze the the inner expectation operator on each term, separatly. Starting with $\mathbb E[S_i^2]$ and knowing that $\mathbb E_{|h|}\left[|h|^4\right] = {(m+1)}/{m}$, we have
\begin{equation}\label{eq:var_BGPP_2}
    \mathbb E_{\Psi\!\setminus \!\{\!X_i\!\}, G, |h|, \xi}\!\left[S_i^2\right] = \beta \frac{m+1}{m} \Bar{P}_{r,i}^2.
\end{equation}
Now for the cross-term $\mathbb E[S_i I_i]$, since the $|h_i|$'s are pairwise independent, just like the $\xi_i$'s and the $G_i$'s, we obtain
\begin{equation*}
    \begin{split}
        \mathbb E\!\left[S_i I_i\right] &= \mathbb E_{\Psi\!\setminus \!\{\!X_i\!\}, G, |h|, \xi}\!\Bigg[\Bar{P}_{r,i}|h_i|^2 G_i \xi_i \sum\limits_{j \in \mathbb{N} \setminus \{i\}}\Bar{P}_{r,j}|h_j|^2 G_j \xi_j\Bigg] \\
        &= \beta^2 \Bar{P}_{r,i}\mathbb E_{G, \Psi \setminus \{\!X_i\!\}}\Big[\sum\limits_{j \in \mathbb{N} \setminus \{i\}}\Bar{P}_{r,j} G_j\Big]
    \end{split}
\end{equation*}
Similarly to what is done in \eqref{eq:mean_IPD_6} for the mean, this gives
\begin{equation}\label{eq:var_BGPP_3}
\begin{split}
    \mathbb E\!\left[S_i I_i\right] = \beta p_g \Bar{P}_{r,i}\left( \frac{2 c}{\alpha-2} \left[\Bar{P}_{r}(r^2)\left(r^2+z^2\right)\right]_{r = \tau}^{r = R_i}\right.\\
    \left.-\beta\int_{R_i^2}^{\tau^2} f_i(v) \Bar{P}_{r}(v) dv\right).
\end{split}
\end{equation}
At last, $\mathbb E[I_i^2]$ can be split into two terms:
{\small
\begin{multline*}
   \mathbb E[I_i^2] = \overbrace{\mathbb E_{\Psi, G, |h|, \xi}\Big[\sum_{j \in \mathbb{N} \setminus \{i\}} \Bar{P}_{r,j}^2 G_j^2 |h_j|^4\xi_j^2 \Big]}^{T_1} \\
   +  \underbrace{\mathbb E_{\Psi, G,  |h|, \xi}\Bigg[\sum\limits_{j \in \mathbb{N} \setminus \{i\}}\sum\limits_{k \in \mathbb{N} \setminus \{i,j\}} \!\left(\Bar{P}_{r,j} G_j |h_j|^2\xi_j\right)\!\left(\Bar{P}_{r,k} G_k |h_k|^2\xi_k\right)\!\Bigg]}_{T_2}
\end{multline*}}

By following the same reasoning as before and by using $\mathbb E_G[G^2] = p_g$ for the interfering BSs, $T_1$ is obtained by
\begin{equation}\label{eq:var_BGPP_4}
\begin{split}
    T_1 &= \beta \frac{m+1}{m} p_g\! \sum_{j \in \mathbb{N} \setminus \{i\}} \int_{R_i^2}^{\tau^2}\! f_j(v) \Bar{P}_{r}^2(v) dv\\
    &= \frac{m+1}{m} \frac{p_g c}{\alpha-1} \left[\Bar{P}_{r}^2(r^2)\left(r^2+z^2\right)\right]_{r = \tau}^{r = R_i}\\
    &\qquad\qquad\qquad\qquad-\frac{m+1}{m} p_g\beta \int_{R_i^2}^{\tau^2} f_i(v) \Bar{P}_{r}^2(v) dv
\end{split}
\end{equation}
and $T_2$ is obtained by
{\footnotesizetosmall
\begin{equation*}
    T_2 = \beta^2 p_g^2\!\sum\limits_{j \in \mathbb{N} \setminus \{i\}}\! \int_{R_i^2}^{\tau^2}\!f_j(v) \Bar{P}_{r}(v) \! \sum\limits_{k \in \mathbb{N} \setminus \{i,j\}}\! \int_{R_i^2}^{\tau^2}\!f_j(w) \Bar{P}_{r}(w) dv\, dw.
\end{equation*}}This can be rewritten as
{\footnotesizetosmall
\begin{equation}\label{eq:var_BGPP_5}
\begin{split}
    T_2 = \underbrace{\beta^2 p_g^2 \Big(\sum_{j \in \mathbb{N} \setminus \{i\}} \int_{R_i^2}^{\tau^2}f_j(v) \Bar{P}_{r}(v) dv\Big)^2}_{T_{2a}}\\
    -\underbrace{\beta^2 p_g^2 \sum_{j \in \mathbb{N} \setminus \{i\}} \left(\int_{R_i^2}^{\tau^2}f_j(v) \Bar{P}_{r}(v) dv\right)^2}_{T_{2b}}
\end{split}
\end{equation}}where $T_{2a}$ gives
{\footnotesizetosmall
\begin{equation}\label{eq:var_BGPP_6}
\begin{split}
    T_{2a} &= p_g^2\left(\frac{2 c}{\alpha-2} \left[\Bar{P}_{r}(r^2)\left(r^2+z^2\right)\right]_{r = \tau}^{r = R_i}-\beta \!\int_{R_i^2}^{\tau^2}\! f_i(v) \Bar{P}_{r}(v) dv.\right)^2\\
    &= \left(\frac{2 p_g c }{\alpha-2} \left[\Bar{P}_{r}(r^2)\left(r^2+z^2\right)\right]_{r = \tau}^{r = R_i}\right)^2\\
    &\qquad - \frac{4 p_g^2 c \beta}{\alpha-2} \left[\Bar{P}_{r}(r^2)\left(r^2+z^2\right)\right]_{r = \tau}^{r = R_i} \!\int_{R_i^2}^{\tau^2}\! f_i(v) \Bar{P}_{r}(v) dv\\
    &\qquad + \beta^2 p_g^2 \!\int_{R_i^2}^{\tau^2}\!\int_{R_i^2}^{\tau^2}\! f_i(v)f_i(w) \Bar{P}_{r}(v)\Bar{P}_{r}(w) dw \,dv
\end{split}
\end{equation}}and $T_{2b}$ can be developed as follows:
{\footnotesizetosmall
\begin{equation*}
\begin{split}
        T_{2b} = p_g^2\beta^2\! \sum_{j \in \mathbb{N}} \int_{R_i^2}^{\tau^2}\!\int_{R_i^2}^{\tau^2}\!\frac{(vw)^{j-1}e^{-\frac{c(v+w)}{\beta}}(\frac{c}{\beta})^{2j}}{(j-1)!(j-1)!}\Bar{P}_{r}(v)\Bar{P}_{r}(w) dw\,dv\\
        - p_g^2\beta^2\! \int_{R_i^2}^{\tau^2}\!\int_{R_i^2}^{\tau^2}\!f_i(v)f_i(w)\Bar{P}_{r}(v)\Bar{P}_{r}(w) dw\,dv
\end{split}
\end{equation*}}
The integrals and the sum in the first line can be switched. The sum over $j$ converges and gives a modified Bessel function of the first kind $I_0(x)$ of order 0:
\begin{multline}\label{eq:var_BGPP_7}
         \kern-.9em T_{2b} = p_g^2 c^2 \!\int_{R_i^2}^{\tau^2}\!\int_{R_i^2}^{\tau^2}\!e^{-\frac{c(v+w)}{\beta}} I_0\left(2c\sqrt{vw}/\beta\right)\Bar{P}_{r}(v)\Bar{P}_{r}(w) dw\,dv\\
        - p_g^2\beta^2\! \int_{R_i^2}^{\tau^2}\!\int_{R_i^2}^{\tau^2}\!f_i(v)f_i(w)\Bar{P}_{r}(v)\Bar{P}_{r}(w) dw\,dv
\end{multline}

The second moment in Theorem~\ref{eq:BGPP_meanexp} is given by combining \eqref{eq:var_BGPP_1}, \eqref{eq:var_BGPP_2}, \eqref{eq:var_BGPP_3}, \eqref{eq:var_BGPP_4}, \eqref{eq:var_BGPP_5}, \eqref{eq:var_BGPP_6} and \eqref{eq:var_BGPP_7} and by reorganizing the terms.


If no BF is employed, the reasoning is exactly the same as the one to compute $\mathbb E\left[I_i^2\right]$, except that there is no $j \neq i$ and $k \neq i$ conditions. There is therefore no summation on the index $i$ and the only terms that remain are the first term of $T_1$ in \eqref{eq:var_BGPP_4}, the first term of $T_{2a}$ in \eqref{eq:var_BGPP_6} and the first term of $T_{2b}$ in \eqref{eq:var_BGPP_7}, with the lower bound of the integrals being $r_e$ instead of $R_i$.


\section{Proof of the Characteristic Function of the Interference in a $\beta$-GPP}\label{sec:BGPP_PhiI_Proof}
From the definition of CF,
{\smalltonormalsize
\begin{equation*}
    \begin{split}
    \Phi_{I,i}(q|R_i^2) &= \mathbb E\left[\exp\Big(j q \sum\limits_{k \in \mathbb{N} \setminus \{i\}} \Bar{P}_{r,k} |h_k|^2 G_k \xi_k \Big| |X_k| > |X_i|\Big) \right]\\
    &= \mathbb E\Big[\prod\limits_{k \in \mathbb{N} \setminus \{i\}} e^{j q \Bar{P}_{r,k} |h_k|^2 G_k \xi_k}\, \mathds{1}_{\left\{|X_k| > |X_i|\right\}}\Big].
    \end{split}
\end{equation*}
}
Since the random variables $\Psi$, $h$ and $\xi$ are independent,
\begin{equation}\label{eq:Phi_I_BGPP_step}
    \Phi_{I,i}(q|R_i^2)\! = \!\prod\limits_{k \in \mathbb{N} \setminus \{i\}}\!\mathbb  E_{\Psi}\Bigg[\underbrace{\mathbb{E}_{\xi,|h|,G}\!\left[e^{j q \Bar{P}_{r,k} |h_k| G_k \xi_k}\right]}_{Q(q)}\! \mathds{1}_{\left\{|X_k| > |X_i|\right\}}\Bigg].
\end{equation}
Extracting the term $Q(q)$ and applying the expectations on $\xi$ and $G$ leads to
\begin{equation*}
    Q(q) = \mathbb E_{|h|}\left[p_g\beta e^{j q \Bar{P}_{r,k} |h_k|^2}+1-p_g\beta\right].
\end{equation*}

Then for a Nakagami-$m$ fading it becomes
\begin{equation*}
     Q(q)  =  \frac{p_g\beta}{\left(1-j q \Bar{P}_{r,k}/m\right)^m}+1-p_g\beta.
\end{equation*}

Replacing this expression in~\eqref{eq:Phi_I_BGPP_step} and applying the expectation on $\Phi$ using \eqref{eq:BGPP_pdf} results in Proposition~\ref{eq:BGPP_PhiI}.

\section{Proof of the CDF of the EMFE in a $\beta$-GPP}\label{BGPP_emf_proof}
Since every BS can be the serving BS, we obtain
\begin{equation}\label{eq:BGPP_emfsum}
\begin{split}
    F_{\text{emf}}(T') &= \sum_{i \in \mathbb{N}} \mathbb P\left(\mathcal{P} < T', X_0 = X_i\right)\\
    &= \beta \sum_{i \in \mathbb{N}} \underbrace{\mathbb P\left[S_i+I_i < T', \mathcal{A}_i\right]}_{F_i(T')}
\end{split}
\end{equation}Using Gil-Pelaez' theorem, $F_i(T')$ can be further expressed~as
\begin{equation*}
    F_i(T')\! = \!\mathbb E_{\Psi,\xi}\left[\mathds{1}_{\mathcal{A}_i}\!\left(\frac{1}{2}-\int_0^\infty  \frac{1}{\pi q}\text{\normalfont Im} \left[\phi_{E,i}(q|R_i^2)e^{-j q T'}\right]\!dq\right)\right]
\end{equation*}
where $\phi_{E,i}(q|R_i^2) = \Phi_S(q|R_i^2)\, \Phi_{I,i}(q|R_i^2)$ and
\begin{equation*}
    \Phi_S(q|R_i^2) = \mathbb{E}_{|h|}\left[\exp\left(j q S_i\right)\right] = (1-jq \Bar{P}_{r,i}/m)^{-m}.
\end{equation*}

Using the same reasoning as in Appendix~\ref{sec:BGPP_mean_proof} to obtain \eqref{eq:mean_IPD_2} by using the definition \eqref{eq:upsilon} of $\Upsilon_i^\beta$, we have
{\small
\begin{equation*}
    F_i(T')\! = \!\mathbb E_{R_i}\!\left[\Upsilon_i^\beta(R_i^2)\!\left(\frac{1}{2}-\int_0^\infty \! \frac{1}{\pi q}\text{\normalfont Im} \left[\phi_{E,i}(q|R_i^2)e^{-j q T'}\right]\!dq\right)\right]\!.
\end{equation*}}
Applying the expectation operator and using \eqref{eq:BGPP_pdf} are applied results in
{\footnotesize
\begin{equation}\label{eq:BGPP_emfT2}
    F_i(T')\! = \!\int_{r_e^2}^{\tau^2}\!f_i(u) \Upsilon_i^\beta(u)\!\left(\frac{1}{2}\!-\!\int_0^\infty \! \frac{1}{\pi q}\text{\normalfont Im} \left[\phi_{E,i}(q|u)e^{-j q T'}\right]\!dq\right)\!du.
\end{equation}}
Theorem~\ref{eq:exp_BGPP} is obtained by replacing \eqref {eq:BGPP_emfT2} in \eqref{eq:BGPP_emfsum}.

\section{Proof of the Mean EMF Exposure in an I-PPP}\label{sec:IPPP_mean_proof}
As the main beam of the serving BS is assumed to be directed towards the user, while the beams of the interfering BSs are randomly oriented, it is essential to separate the EMFE caused by the serving BS and the EMFE caused by the interfering BSs. The mean EMFE can then be written
\begin{equation*}
    \mathbb E\left[\mathcal{P}\right] \!=\! \mathbb E_{\Psi,|h|, G}\!\Bigg[\Bar{P}_{r,0} G_0 |h_0|^2 + \sum\limits_{i \in \mathbb{N}} \Bar{P}_{r,i} G_i |h_i|^2\Bigg].
\end{equation*}
Using the independence between $G_i$, $h_i$ and $\Psi$, given that $G_0 = 1$, $\mathbb E_G[G_{i|i\neq 0}] = p_g$ and $\mathbb E_{|h|}[|h_i|^2] = 1$ and using $\mathbb E_{\Psi}[\cdot] = \mathbb E_{X_0}[\cdot] + \mathbb E_{\Psi \setminus \{\!X_0\!\}}[\cdot]$ , the mean EMFE can be further expressed~as
\begin{equation*}
    \mathbb E\left[\mathcal{P}\right] \!=\! \mathbb E_{X_0}\!\left[\Bar{P}_{r,0} + p_g \mathbb E_{\Psi \setminus \{\!X_0\!\}}\Big[\sum\limits_{i \in \mathbb{N}} \Bar{P}_{r,i}\Big]\right].
\end{equation*}
It follows from Campbell's theorem~that
\begin{equation*}
    \mathbb E\left[\mathcal{P}\right] = \mathbb E_{X_0}\!\left[\Bar{P}_{r,0} + p_g \int_{R_0}^{\tau} \Bar{P}_{r}(r)\,\Lambda^{\!(1)\!}(r)\,dr\right].
\end{equation*}
The mean EMFE in Theorem~\ref{eq:IPPP_meanexp} is then obtained by applying Proposition~\ref{eq:IPPP_density_fct} to compute the expectation over $X_0$ and by resolving over $r$.
\section{Proof of the Second Moment of the EMF Exposure in an I-PPP}\label{sec:IPPP_var_proof}
The proof is very similar to the proof of a $\beta$-GPP in Appendix~\ref{sec:BGPP_var_proof}. For the reader's convenience, the main steps can be summarized as follows: 
{\small
\begin{equation}\label{eq:var_IPPP_1}
        \mathbb E\left[\mathcal{P}^2\right]\! =\! \mathbb E_{X_0}\!\left[\mathbb E_{|h|}\!\left[\!S_0^2\!\right]\!+\!2 \mathbb E_{\Psi\!\setminus \!\{\!X_0\!\}, G, |h|}\!\left[S_0 I_0\right]\! + \!\mathbb E_{\Psi\!\setminus \!\{\!X_0\!\}, G, |h|}\!\left[I_0^2\right]\right]
\end{equation}}
with
\begin{equation}\label{eq:var_IPPP_2}
    \mathbb E_{|h|}\!\left[S_0^2\right] = \frac{m+1}{m} \Bar{P}_{r,0}^2;
\end{equation}
\begin{equation}\label{eq:var_IPPP_3}
    \begin{split}
        \mathbb E_{\Psi\!\setminus \!\{\!X_0\!\}, G, |h|}\!\left[S_0 I_0\right] 
        &= p_g \Bar{P}_{r,0}\mathbb E_{\Psi \setminus \{\!X_0\!\}}\Big[\sum\limits_{i \in \mathbb{N}}\Bar{P}_{r,i}\Big]\\
        &= p_g \Bar{P}_{r,0}\int_{R_0}^{\tau} \Bar{P}_{r}(r)\,\Lambda^{\!(1)\!}(r)\,dr;
    \end{split}
\end{equation}
{\small
\begin{equation}\label{eq:var_IPPP_4}
\begin{split}
    &\mathbb E_{\Psi\!\setminus \!\{\!X_0\!\}, G, |h|}\left[I_0^2\right] = \mathbb{E}_{\Psi\!\setminus \!\{\!X_0\!\}, G, |h|}\Bigg[\Bigg(\sum_{i \in \Psi \setminus \{\!X_0\!\}} \Bar{P}_{r,i}\,G_i^2\, |h_i|^2\Bigg)^2 \Bigg]\\
    &= \mathbb{E}\Bigg[\sum_{i \in \mathbb{N}}\Bar{P}_{r,i}^2 G_i^2 |h_i|^4\Bigg]+\mathbb{E}\Bigg[\sum\limits_{{i \in \mathbb{N}}}\!\sum\limits_{j \in \mathbb{N} \setminus \{\!i\!\}} \!\Bar{P}_{r,i}\Bar{P}_{r,j}|h_i|^2|h_j|^2 G_i G_j\Bigg]\\
    &\stackrel{(a)}{=} p_g \frac{m+1}{m}\int_{R_0}^{\tau} \Bar{P}_{r}^2(r)\,\Lambda^{\!(1)\!}(r)\,dr+p_g^2\left(\int_{R_0}^{\tau} \Bar{P}_{r}(r)\,\Lambda^{\!(1)\!}(r)\,dr\right)^2
\end{split}
\end{equation}}
Campbell's theorem is again used for the first term on the right-hand side, and the PPP's second-order product density formula is used for the second term in step $(a)$ in \eqref{eq:var_IPPP_4}. The latter is equal to $\left(\Bar{P}_r^{I}(r|R_0)\right)^2$. Replacing \eqref{eq:var_IPPP_2}, \eqref{eq:var_IPPP_3} and \eqref{eq:var_IPPP_4} in \eqref{eq:var_IPPP_1} and applying the expectation operator over $X_0$ results in~\eqref{eq:IPPP_varexpeq}.

\section{Proof of the Characteristic Function of Interference in an I-PPP}\label{proof_IPPP_PhiI}
Using the definition of the CF and the probability generating functional, the CF for the interference is
\begin{equation}\label{eq:phiI_def}
    \phi_{I}(q) = \exp\left[\iint_{\mathcal{B}(r_0, \tau)}  \left(\mathbb E_{G,|h|}\left[e^{j q \Bar{P}_{r}(r) G |h|^2}\right]\!-\!1\right) d\Lambda\right]
\end{equation}where $\mathcal{B}(r_0, \tau) = \mathcal{B}(0, \tau) \setminus \mathcal{B}(0, r_0)$ is a ring centered at 0 with inner radius $r_0$ and outer radius $\tau$. For Nakagami-$m$ fading, we have
\begin{equation}\label{eq:Naka_exp}
    \mathbb E_{G,|h|}\left[e^{j x G |h|^2}\right] = p_g\left(1-jx/m\right)^{-m}+1-p_g.
\end{equation}
Replacing \eqref{eq:intensity_der} and \eqref{eq:Naka_exp} in \eqref{eq:phiI_def} and integrating over $\theta$ gives
{\small
\begin{multline*}
    \phi_{I}(q|r) = \exp\Bigg[2\pi p_g\underbrace{\int_{r_0}^{\tau} \frac{\Tilde{b}\,u + \Tilde{d}\,u\,\Tilde{\rho}^2 + \Tilde{d}\,u^3}{\left(1-j q \Bar{P}_{r}(u)/m\right)^m} du}_{\Xi} \\
    + p_g\!\int_{r_0}^\tau \! \left(\!4\frac{\Tilde{a} |\Tilde{\rho}-u|^{-1} K(u)+\Tilde{c}|\Tilde{\rho}-u|\, E(u)}{\left(1-j q \Bar{P}_{r}(u)/m\right)^m} - \Lambda^{\!(1)\!}(u)\right)\!du\Bigg].
\end{multline*}}
By definition, $\int_{r_0}^\tau \Lambda^{\!(1)\!}(u) du = \Lambda(\tau)-\Lambda(r_0)$. Let us now focus on $\Xi$. The change of variable $\upsilon \longrightarrow \left(u^2+z^2\right)^{\frac{-\alpha}{2}}$ leads to
\begin{equation*}
    \Xi = -\!\int\limits_{\left(r_0^2+z^2\right)^{\frac{-\alpha}{2}}}^{\left(\tau^2+z^2\right)^{\frac{-\alpha}{2}}} \!\frac{\left(\Tilde{b} + \Tilde{d}\, \Tilde{\rho}^2\right)\,\upsilon^{1-\frac{2}{\alpha}} + \Tilde{d} \,\upsilon^{1-\frac{4}{\alpha}} -\Tilde{d}\,z^2 \upsilon^{1-\frac{2}{\alpha}}}{\alpha\left(1-j q \,P_{t}\, \kappa^{-1} \,\upsilon/m\right)^m}\, d\upsilon.
\end{equation*}

Proposition~\ref{eq:IPPP_PhiI} is obtained by resolving over $\upsilon$ and by using the relationship 
\begin{equation*}
    \int_p^q \frac{x^{1-n}}{\left(1-\frac{js x}{m}\right)^m} \, dx = \left[\frac{x^{2-n}}{n-2} \, _2F_1\left(m, 2-n, 3-n; \frac{js x}{m}\right)\right]_{x=p}^{x=q}
\end{equation*}
obtained from \cite{Driver2006} after a change of variable, with $n\neq 1$ and $p,q\geq 0$ and $(1-js x)^m$ having its principal value, leading to
{\footnotesize
\begin{multline*}
    \Xi = 
      \left[\tfrac{\Tilde{d}}{2\alpha-4} \,\left(r^2+z^2\right)^{2-\alpha} \, _2F_1\left(m, 2-\tfrac{4}{\alpha}, 3-\tfrac{4}{\alpha}; \frac{j q \Bar{P}_{r}(r)}{m}\right)\right.\\
    + \left.\tfrac{\Tilde{b}+\Tilde{d}(\Tilde{\rho}^2\!-\!z^2)}{2\alpha-2}  \left(r^2+z^2\right)^{1-\alpha} \! _2F_1\!\left(m, 2\!-\!\tfrac{2}{\alpha}, 3\!-\!\tfrac{2}{\alpha}; \frac{j q \Bar{P}_{r}(r)}{m}\right)\right]_{r = r_0}^{r = \tau}
\end{multline*}
}

\section{Proof of the Joint SINR-Exposure CDF}
\label{sec:join_proof}
Using the notations $S = S_0$ and $I = I_0$, the joint metric can be expressed as 
{\smalltonormalsize
\begin{align*}
    G(T,T') &\triangleq \mathbb{P}\left[\frac{S}{I+\sigma^2} > T,S+I < T'\right] \\
    &= \mathbb{P}\left[I < \frac{S}{T}-\sigma^2, I <T'-S\right] \\
    &= \begin{cases}
      \mathbb{P}\left[I < {S}/{T}-\sigma^2\right] & \text{if} \, {S}/{T}-\sigma^2 < T'-S,\\
      \mathbb{P}\left[I <T'-S\right] & \text{if} \, T'-S < {S}/{T}-\sigma^2.
    \end{cases}
\end{align*}
}
Rewriting the inequalities, remembering that $I$ is a positive random variable and introducing ${T''}={{T(T'+\sigma^2)}/({1+T})}$, we get
{\smalltonormalsize
\begin{equation*}
    G(T,T')= \begin{cases}
      \mathbb{P}\left[I < {S}/{T}-\sigma^2\right] & \text{if} \, S <{T''},\\
      \mathbb{P}\left[I <T'-S\right] & \text{if} \, T'' < S < T',\\
      0 & \text{if} \, S > T'.
    \end{cases} \\
\end{equation*}}
Let $F_I(y|r_0, \theta_0)$ be the CDF of the interference conditioned on the location of the serving BS. Using the definition of $S$~\eqref{eq:SINR} and by linearity, the last equality can be rewritten as 
{\footnotesizetosmall
\begin{multline*}
    G(T,T')=\mathbb{E}_{r_0, \theta_0}\Bigg[\underbrace{\int_{0}^{\frac{T''}{\Bar{P}_{r,0}}}\! F_I\!\left(\tfrac{x\Bar{P}_{r,0}}{T}-\sigma^2  \Big|  r_0, \theta_0\right) \!f_{|h|^2}(x)dx}_{T_1}\\
    +\underbrace{\int_{\frac{T'}{\Bar{P}_{r,0}}}^{\frac{T''}{\Bar{P}_{r,0}}} F_I\left(T' - x\Bar{P}_{r,0} \, \Big| \, r_0, \theta_0\right) f_{|h|^2}(x)dx}_{T_2} \Bigg]
\end{multline*}}

Let us develop the term $T_1$. The following functions can be replaced in the last expression: 

\begin{itemize}
    \item[-] $\!f_{|h|^2}(x)\!=\!\frac{m^m x^{m-1}e^{-mx}}{\Gamma(m)} u(x)\!$ where $u(x)$ is the step function;
    \item[-] $\!F_{|h|^2}(x) \!=\!\frac{\gamma(m, mx)}{\Gamma(m)}$ the corresponding CDF;
    \item[-] $\!F_I(y|r_0,\theta_0)\!=\!\frac{1}{2} \!- \!\frac{1}{\pi} \int_{0}^{\infty} \text{\normalfont Im}\left[\phi_I(q|r_0,\theta_0)e^{-jqy} \right]q^{-1}dq$ using the Gil-Pelaez inversion theorem.
\end{itemize}

Then by linearity, by swapping the integrals over $q$ and $x$, and by using the definition of the CDF, we obtain
{\footnotesize
\begin{multline*}
     T_1 = \frac{1}{2} F_{|h|^2}\left(\frac{T''}{\Bar{P}_{r,0}}\right)\\
     -\int_{0}^{\infty}\!\! \int_{0}^{\frac{T''}{\Bar{P}_{r,0}}}\! \text{\normalfont Im}\left[\phi_I(q|r_0, \theta_0)e^{-jq\left(\frac{x \Bar{P}_{r,0}}{T}-\sigma^2\right)} \right]\frac{m^m x^{m-1}e^{-mx}}{\Gamma(m) \pi q} dx \,dq.
\end{multline*}  
}
By swapping the integral over the real number $x$ with the imaginary part, we have
{\footnotesize
\begin{multline*}
     T_1 = \frac{1}{2} F_{|h|^2}\left(\frac{T''}{\Bar{P}_{r,0}}\right) \\
     - \frac{m^m}{\Gamma(m)}\int_{0}^{\infty}\! \text{\normalfont Im}\left[\phi_I(q|r_0, \theta_0) \!\int_{0}^{\frac{T''}{\Bar{P}_{r,0}}} \!e^{-jq\left(\frac{x \Bar{P}_{r,0}}{T}-\sigma^2\right)-mx}x^{m-1} dx \right] \!\frac{dq}{\pi q}.
\end{multline*}    
}
The integral inside the imaginary part can be calculated as follows:
\begin{equation*}
    \int_{a}^{b}e^{-c x}x^{m-1} dx = \frac{1}{c^m}\Gamma(m; c a, c b)  \qquad \textrm{if } m > 0.
\end{equation*}
By evaluating this integral, we obtain

\begin{multline}\label{eq:T1_jointproof}
    T_1 = \frac{1}{2} F_{|h|^2}\left(\frac{T''}{\Bar{P}_{r,0}}\right) \\ - \int_{0}^{\infty}\frac{1}{\pi q} \text{\normalfont Im}\left[\phi_I(q|r_0, \theta_0) \zeta_1(T, T', l(r_0)) \right] dq
\end{multline}
where 
\begin{equation*}
    \zeta_1(T, T', \Bar{P}_{r,0}) = \frac{m^m}{(m-1)!}\frac{\gamma\left(m, \frac{-T''}{\Bar{P}_{r,0}}\left(m+j\frac{q\Bar{P}_{r,0}}{T}\right)\right)
    }{\left(m+j\frac{q\Bar{P}_{r,0}}{T}\right)^m}\,e^{jq\sigma^2}.
\end{equation*}
The same reasoning can be applied to the term $T_2$. It leads to
\begin{multline}\label{eq:T2_jointproof}
    T_2 =\frac{1}{2} F_{|h|^2}\left(\frac{T'}{\Bar{P}_{r,0}}\right) -  \frac{1}{2} F_{|h|^2}\left(\frac{T''}{\Bar{P}_{r,0}}\right)\\
    - \int_{0}^{\infty}\frac{1}{\pi q} \text{\normalfont Im}\left[ \phi_I(q|r_0) \zeta_2(T, T', \Bar{P}_{r})  \right] dq
\end{multline}
where
{\small
\begin{equation*}
    \zeta_2(T, T', \Bar{P}_{r,0}) = \frac{m^m e^{-jqT'}}{(m-1)!}\frac{\Gamma\left(m; \frac{T' \left(m-jq\Bar{P}_{r,0}\right)}{\Bar{P}_{r,0}}, \frac{T''\left(m-jq\Bar{P}_{r,0}\right)}{\Bar{P}_{r,0}}\right)}{\left(m-jq\Bar{P}_{r,0}\right)^m}.
\end{equation*}}

The joint EMFE-SINR distribution conditioned on the distance from the user to the serving BS is obtained by grouping the two terms~\eqref{eq:T1_jointproof} and~\eqref{eq:T2_jointproof} and by writing $\zeta(T, T', l_0) = \zeta_1(T, T', l_0) + \zeta_2(T, T', l_0)$. This demonstration is used to obtain:
\begin{itemize}
    \item Theorem~\ref{eq:BGPP_joint} by inserting the sum of the expectations over the distance to every BS $X_i$, as done in Appendix~\ref{BGPP_emf_proof}.
    \item Theorem~\ref{eq:joint} by applying the expectation operator over $r_0$ and $\theta_0$, i.e. using the PDF of the distance from the user to the serving BS \eqref{eq:IPPP_density_fct} (there is no dependence on $\theta_0$ so the corresponding expectation operator leads to a multiplicative factor $2\pi$).
\end{itemize}

}

\bibliographystyle{IEEEtran}
\bibliography{bibli}

\begin{IEEEbiography}[{\includegraphics[width=1in,height=1.25in,clip,keepaspectratio]{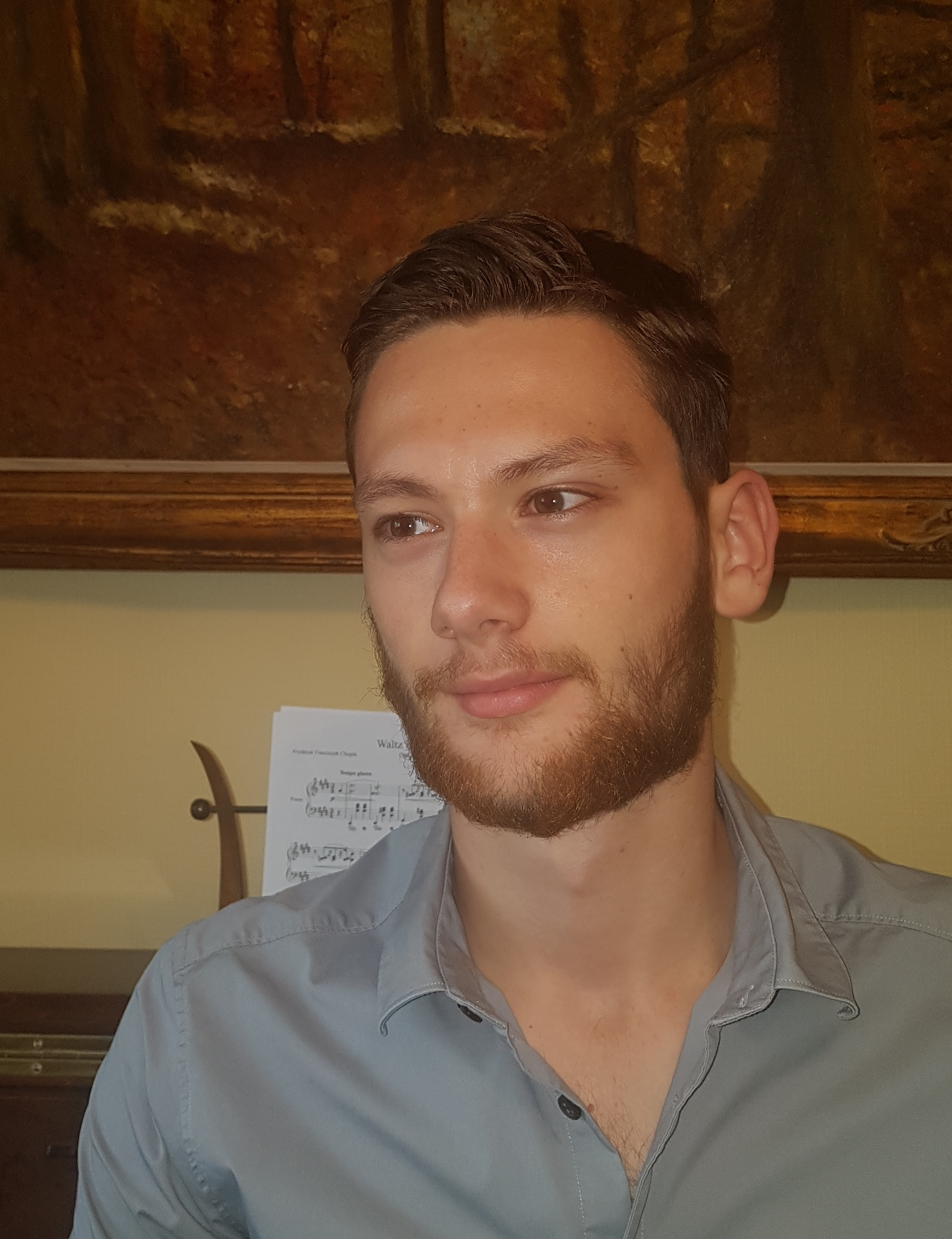}}]{Quentin Gontier}
was born in Brussels in 1997. He received the B.S. and M.S. degrees in physics engineering from the Université Libre de Bruxelles, Belgium, in 2018 and 2020, respectively. Since 2020, he has been pursuing his Ph.D. candidacy in the Wireless Communications Group at the Universit\'e Libre de Bruxelles and at Brussels Environment, Belgium. His research interests encompass stochastic geometry and ray-tracing modeling as applied to the assessment of EMF exposure and coverage analysis in wireless networks.
\end{IEEEbiography}
\vskip -2\baselineskip plus -1fil
\begin{IEEEbiography}[{\includegraphics[width=1in,height=1.25in,clip,keepaspectratio]{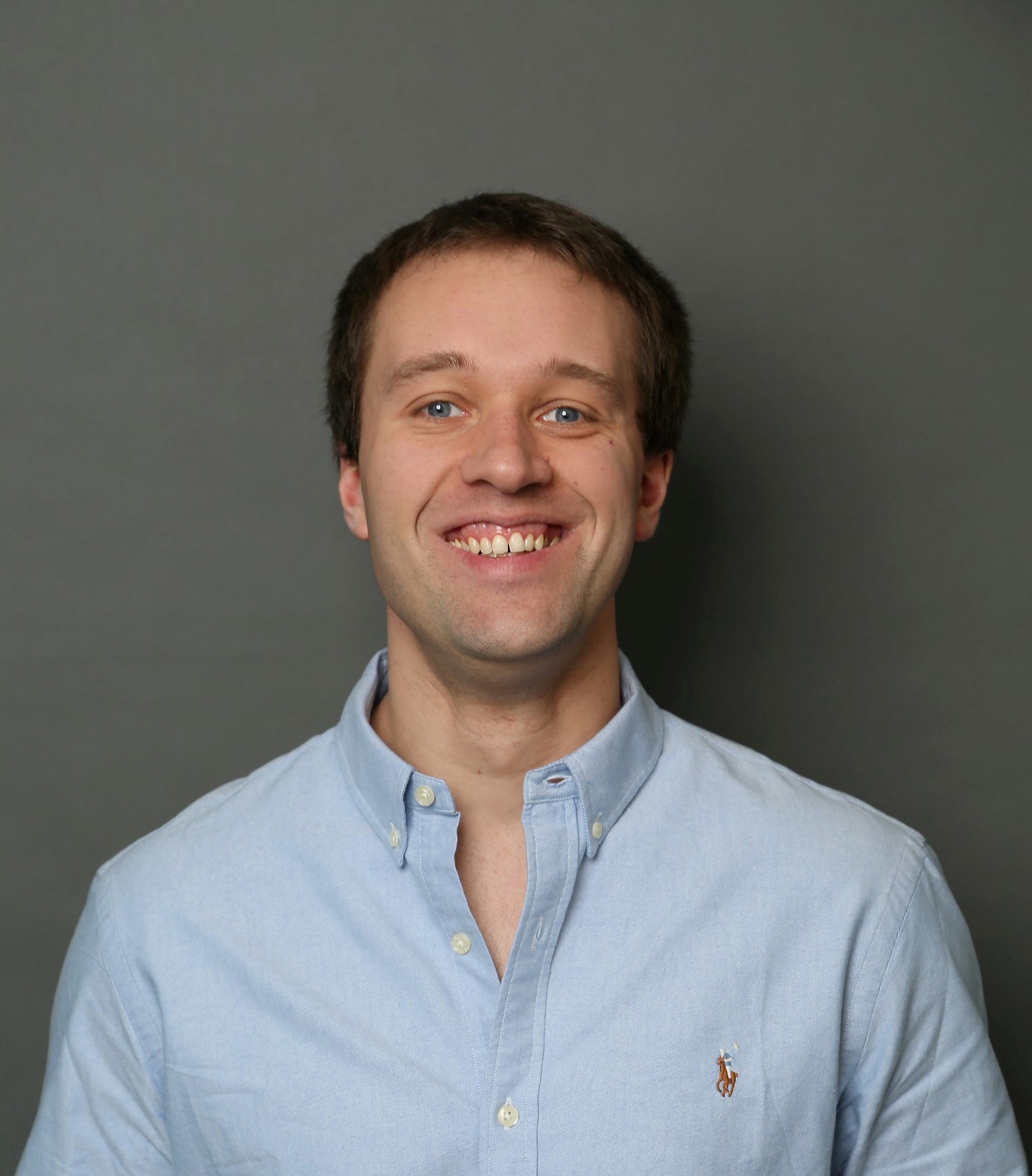}}]{Charles Wiame} earned his M.Sc. degree in electrical engineering from UCLouvain, Belgium, in 2017. As a Ph.D. student under the guidance of Prof. L. Vandendorpe and Prof. C. Oestges at UCLouvain, he successfully obtained his Ph.D. degree in 2023. His doctoral research focused on exploring the trade-offs between coverage and EMF exposure in wireless systems, approached from a stochastic geometry perspective. Simultaneously, Charles served as a teaching assistant and lecturer. In 2022, he was visiting researcher in the lab of Prof. Emil Björnson at the Kungliga Tekniska högskolan (KTH), Stockholm, Sweden, where he studied cell-free massive Multiple Input Multiple Output (MIMO) systems. After receiving a postdoctoral research fellowship from the Belgian American Education Foundation (BAEF), Charles joined the Research Laboratory of Electronics (RLE) at the Massachusetts Institute of Technology (MIT). He currently works in the Reliable Communications and Network Coding (NCRC) group, led by Prof. Muriel Médard. His ongoing research projects are dedicated to improvements in the GRAND decoder for wireless systems.
\end{IEEEbiography}
\vskip -2\baselineskip plus -1fil

\begin{IEEEbiography}[{\includegraphics[width=1in,height=1.25in,clip,keepaspectratio]{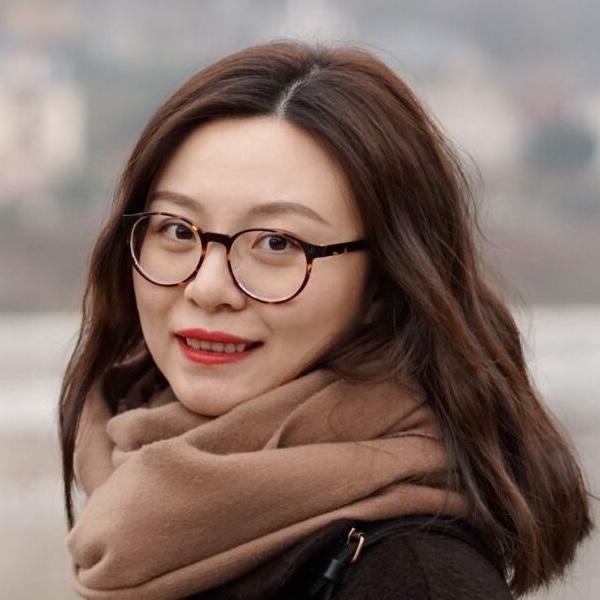}}]{Shanshan Wang} was born in Nanjing, China, in 1991. She is an assistant professor at ETIS (UMR8051, a joint research laboratory between CY Cergy Paris University, ENSEA and CNRS). She received her Ph.D. degree in Networks, Computer Sciences, and Telecommunications from Laboratory of Signals and Systems (L2S), Paris-Saclay University, France in 2019. From 2019 to 2023, She was a postdoctoral researcher in Télécom Paris, IP Paris, France. From 2015 to 2018, she was with the French National Center for Scientific Research (CNRS), Paris, as an Early Stage Researcher of the European-funded Project H2020 ETN-5Gwireless. From 2014 to 2015, she worked as a research engineer at Toshiba Telecommunication Lab in Bristol, UK. She was a recipient of the 2018 INISCOM Best Paper Award. She was the Task 1.1 leader of the European Horizon Project SEAWave. Her research interests include EMF exposure assessments, artificial neural network, and stochastic geometry.
\end{IEEEbiography}
\vskip -2\baselineskip plus -1fil

\begin{IEEEbiography}[{\includegraphics[width=1in,height=1.25in,clip,keepaspectratio]{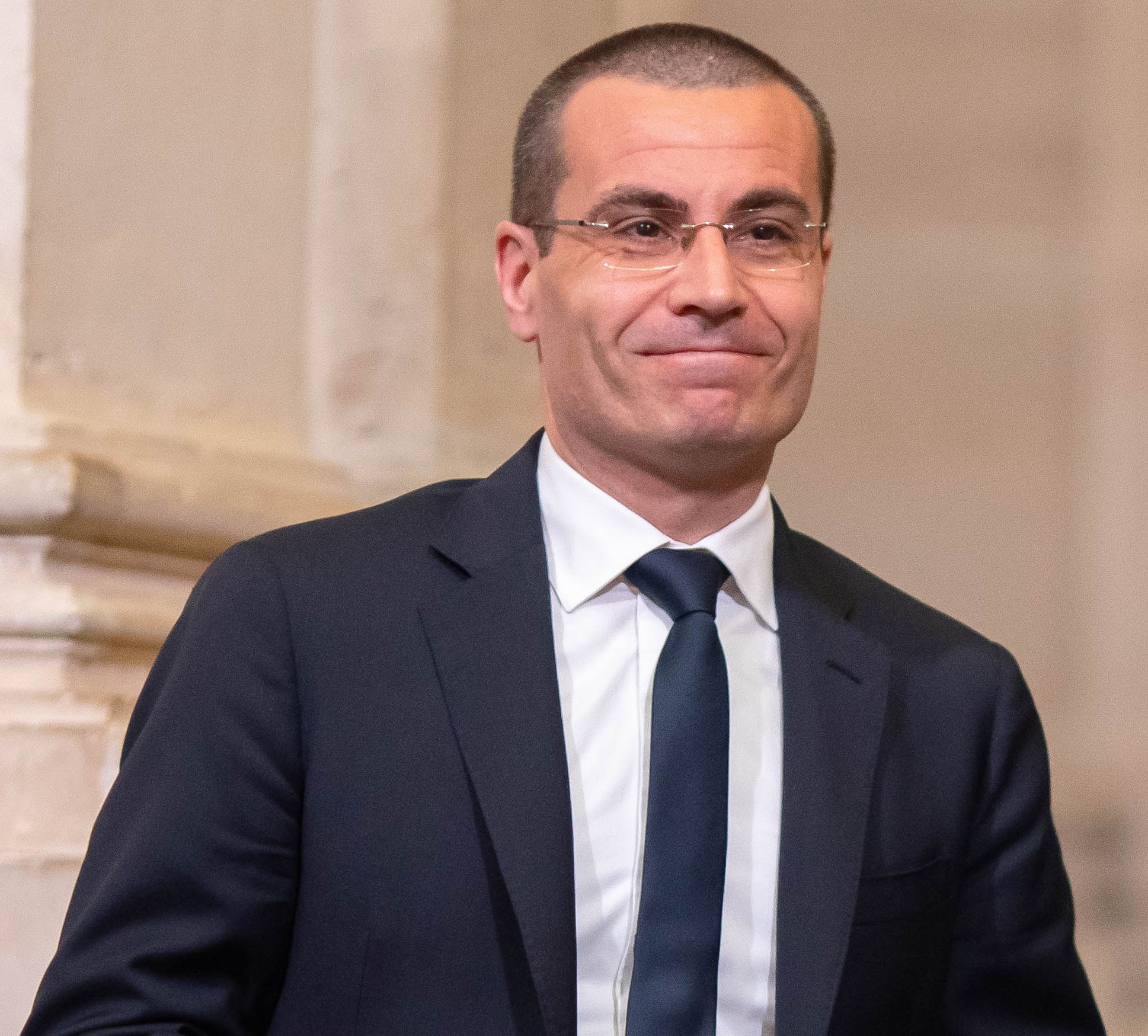}}]{Marco Di Renzo}
(Fellow, IEEE) received the Laurea (cum laude) and Ph.D. degrees in electrical engineering from the University of L’Aquila, Italy, in 2003 and 2007, respectively, and the Habilitation à Diriger des Recherches (Doctor of Science) degree from University Paris-Sud (currently Paris-Saclay University), France, in 2013. Currently, he is a CNRS Research Director (Professor) and the Head of the Intelligent Physical Communications group in the Laboratory of Signals and Systems (L2S) at Paris-Saclay University – CNRS and CentraleSupelec, Paris, France. Also, he is an elected member of the L2S Board Council and a member of the L2S Management Committee, and is a Member of the Admission and Evaluation Committee of the Ph.D. School on Information and Communication Technologies, Paris-Saclay University. He is a Founding Member and the Academic Vice Chair of the Industry Specification Group (ISG) on Reconfigurable Intelligent Surfaces (RIS) within the European Telecommunications Standards Institute (ETSI), where he served as the Rapporteur for the work item on communication models, channel models, and evaluation methodologies. He is a Fellow of the IEEE, IET, EURASIP, and AAIA; an Academician of AIIA; an Ordinary Member of the European Academy of Sciences and Arts, an Ordinary Member of the Academia Europaea; an Ambassador of the European Association on Antennas and Propagation; and a Highly Cited Researcher. Also, he holds the 2023 France-Nokia Chair of Excellence in ICT, he holds the Tan Chin Tuan Exchange Fellowship in Engineering at Nanyang Technological University (Singapore), and he was a Fulbright Fellow at City University of New York (USA), a Nokia Foundation Visiting Professor (Finland), and a Royal Academy of Engineering Distinguished Visiting Fellow (UK). His recent research awards include the 2021 EURASIP Best Paper Award, the 2022 IEEE COMSOC Outstanding Paper Award, the 2022 Michel Monpetit Prize conferred by the French Academy of Sciences, the 2023 EURASIP Best Paper Award, the 2023 IEEE ICC Best Paper Award, the 2023 IEEE COMSOC Fred W. Ellersick Prize, the 2023 IEEE COMSOC Heinrich Hertz Award, the 2023 IEEE VTS James Evans Avant Garde Award, the 2023 IEEE COMSOC Technical Recognition Award from the Signal Processing and Computing for Communications Technical Committee, the 2024 IEEE COMSOC Fred W. Ellersick Prize, the 2024 Best Tutorial Paper Award, and the 2024 IEEE COMSOC Marconi Prize Paper Award in Wireless Communications. He served as the Editor-in-Chief of IEEE Communications Letters during the period 2019-2023, and he is now serving in the Advisory Board. He is currently serving as a Voting Member of the Fellow Evaluation Standing Committee and as the Director of Journals of the IEEE Communications Society.
\end{IEEEbiography}
\vskip -2\baselineskip plus -1fil

\begin{IEEEbiography}[{\includegraphics[width=1in,height=1.25in,clip,keepaspectratio]{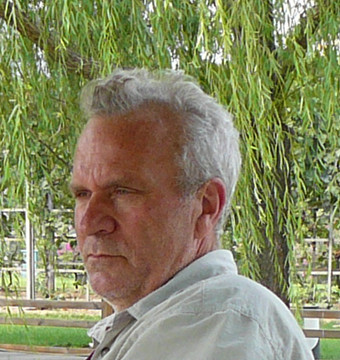}}]{Joe Wiart} PhD (95), Engineer of Telecommunication (92) is since 2015 the holder of the Chair C2M "Caractérisation, modélisation et maîtrise of the Institut Mines Telecom" at Telecom-Paris. Since 2018 he is the president of "Comité d’orientation de l’observatoire des ondes de la ville de Paris". He is also the Chairman of the TC106x of the European Committee for Electrotechnical Standardization (CENELEC) in charge of EMFE standards. He is the present Chairman of the International Union of Radio Science (URSI) commission and has been the Chairman of the French chapter of URSI. He is emeritus member of The Society of Electrical Engineers (SEE) since 2008 and senior member of Institute of Electrical and Electronics Engineers (IEEE) since 2002. His research interests are experimental, numerical methods, machine learning and statistic applied in electromagnetism and dosimetry. His works gave rise to more than 150 publications in journal papers and more than 200 communications.
\end{IEEEbiography}
\vskip -2\baselineskip plus -1fil

\begin{IEEEbiography}[{\includegraphics[width=1in,height=1.25in,clip,keepaspectratio]{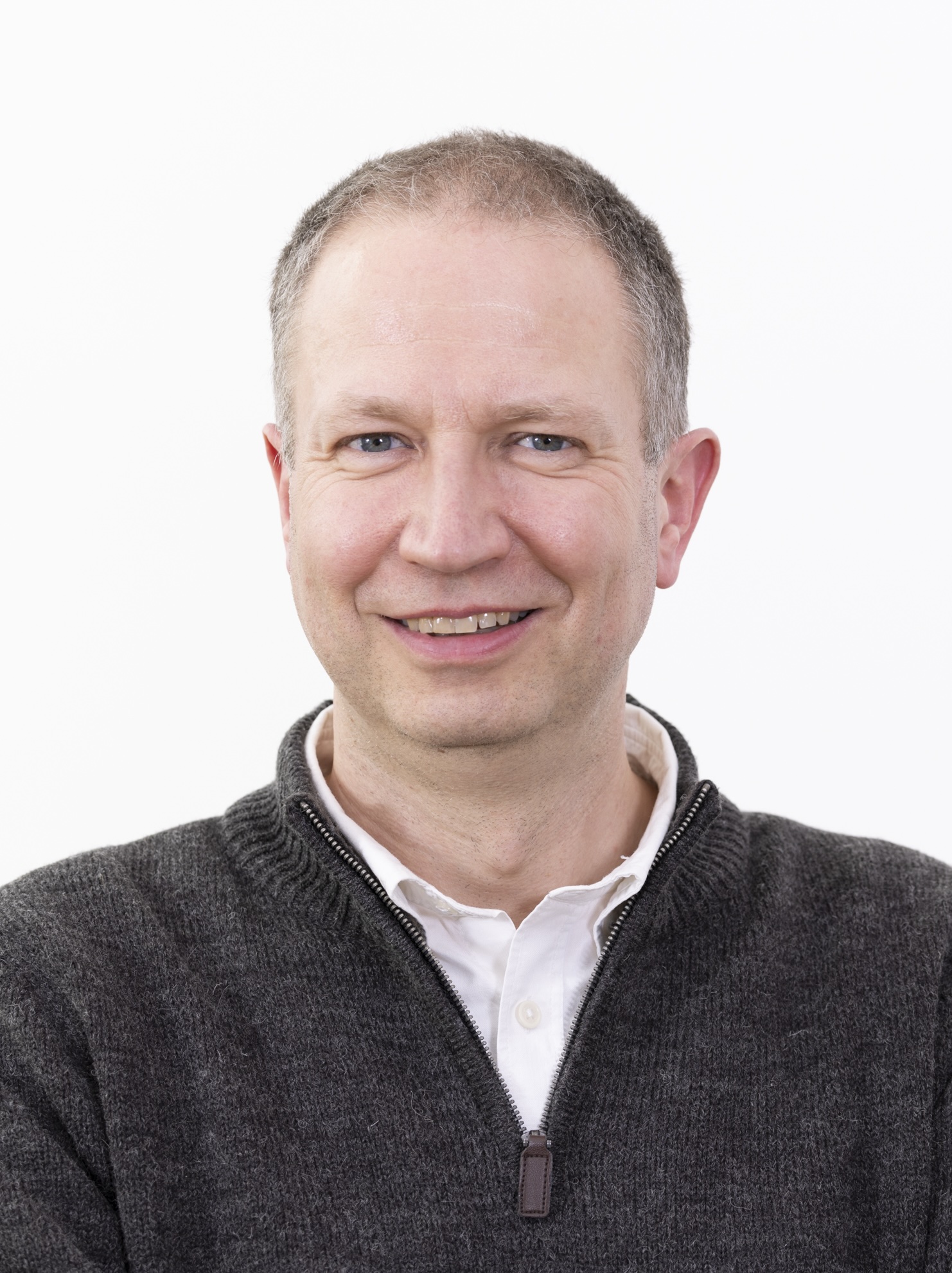}}]{Fran\c{c}ois Horlin}
received the Ph.D. degree from the Université catholique de Louvain (UCLouvain) in 2002. He specialized in the field of signal processing for digital communications. His Ph.D. research aimed at optimizing the multi-access for 3G networks. 
He joined the Inter-university Micro-Electronics Center (imec) in 2006 as a senior scientist. He worked on the design efficient transceivers that can cope with the channel and hardware impairments in the context of 4G cellular systems. The technologies have been transferred to Samsung Korea.
In 2007, François Horlin became professor at the Université libre de Bruxelles (ULB). He is supervising a research team working on modern communication and radar systems. Localization based on 5G signals, filterbank-based modulations, massive MIMO and dynamic spectrum access, are examples of investigated research topics. Recently, the team focused on the design of passive radars, working by opportunistically capturing the Wi-Fi communication signals to monitor the crowd dynamics.
\end{IEEEbiography}
\vskip -2\baselineskip plus -1fil

\begin{IEEEbiography}[{\includegraphics[width=1in,height=1.25in,clip,keepaspectratio]{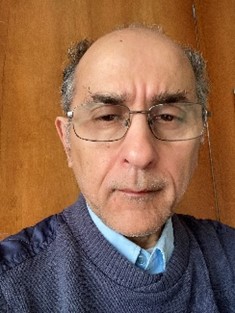}}]{Christo Tsigros}
received the Engineering degree in electronics from ISIH, Charleroi, Belgium, in 1984, and the Master's degree in quality management from Faculté Polytechnique de Mons, in 1992. He received the Ph.D. degree in Engineering from KULeuven and Royal Military Academy, in 2014.
He was a Quality and Development Manager in a testing laboratory for at least 15 years, where his work was focused on electromagnetic compatibility (EMC) testing for CE certification. He was an expert for at least ten years in working groups developing European product-family standards related to EMC immunity testing of electronic security equipment. In 2007, he has been a Researcher at Royal Military Academy. His main research interests include electromagnetic immunity testing comparison between reverberation chamber (RC) and anechoic rooms, and applications of RC and EMC immunity on military equipment.
From 2016, he works as an technical expert and file manager with Bruxelles Environnement.
\end{IEEEbiography}
\vskip -2\baselineskip plus -1fil

\begin{IEEEbiography}[{\includegraphics[width=1in,height=1.25in,clip,keepaspectratio]{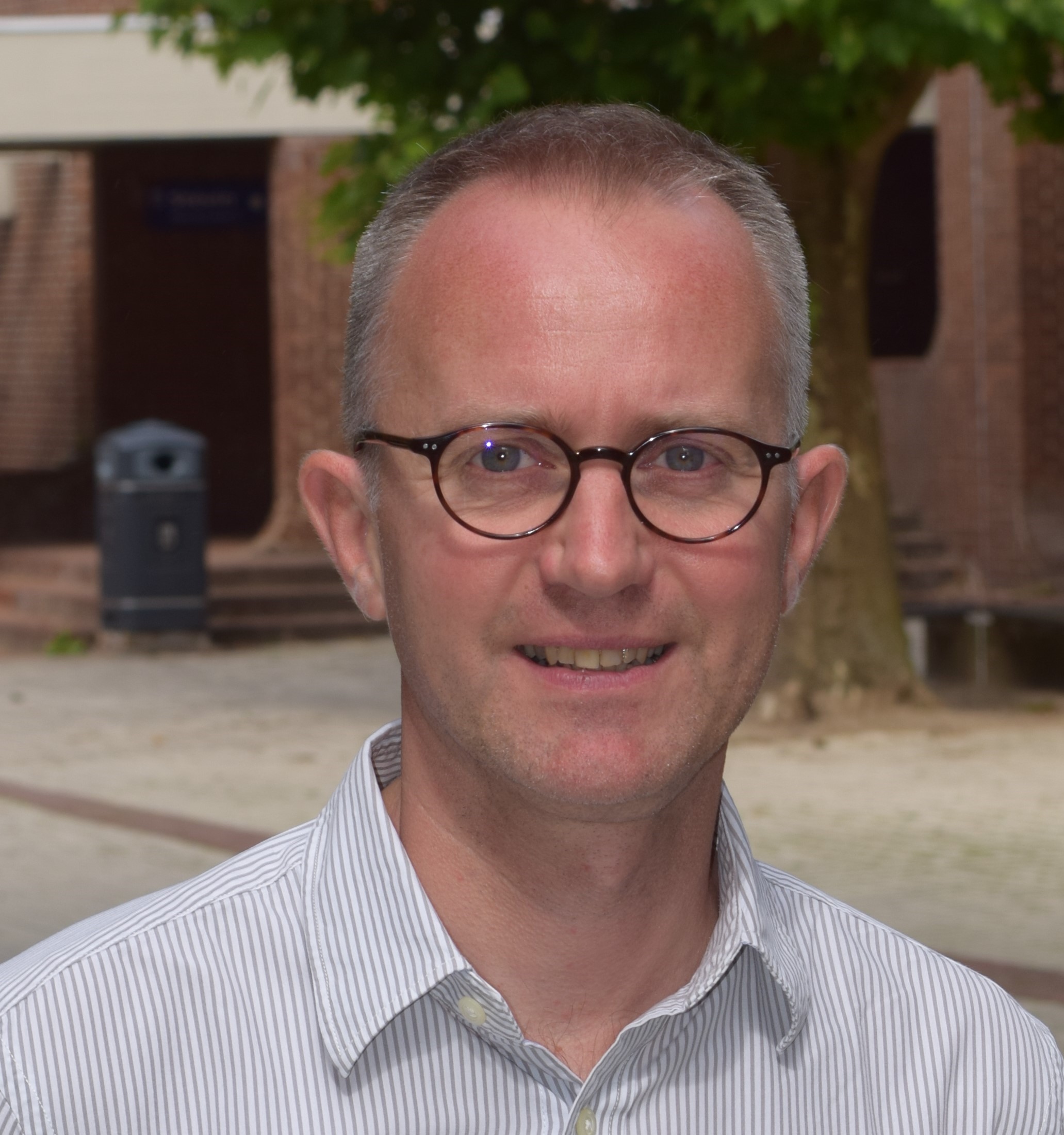}}]{Claude Oestges}
received the M.Sc. and Ph.D. degrees in Electrical Engineering from the Université catholique de Louvain (UCLouvain), Louvain-la-Neuve, Belgium, respectively in 1996 and 2000. In January 2001, he joined as a post-doctoral scholar the Smart Antennas Research Group (Information Systems Laboratory), Stanford University, CA, USA. He is presently Full Professor with  the Electrical Engineering Department, Institute for Information and Communication Technologies, Electronics and Applied Mathematics (ICTEAM), UCLouvain.  His research interests cover wireless and satellite communications, with a focus on the propagation channel and its impact on system performance. Claude Oestges is a Fellow of the IEEE. He previously served in the Board of Directors of the European Association on Antennas and Propagation (EurAAP), and was the Chair of COST Action CA15104 "Inclusive Radio Communication Networks for 5G and Beyond", known as IRACON. He is the author or co-author of four books and more than 200 papers in international journals and conference proceedings. He was the recipient of the IET Marconi Premium Award in 2001, of the IEEE Vehicular Technology Society Neal Shepherd Award in 2004 and 2013, and of the EurAAP Propagation Award in 2024.
\end{IEEEbiography}
\vskip -2\baselineskip plus -1fil

\begin{IEEEbiography}[{\includegraphics[width=1in,height=1.25in,clip,keepaspectratio]{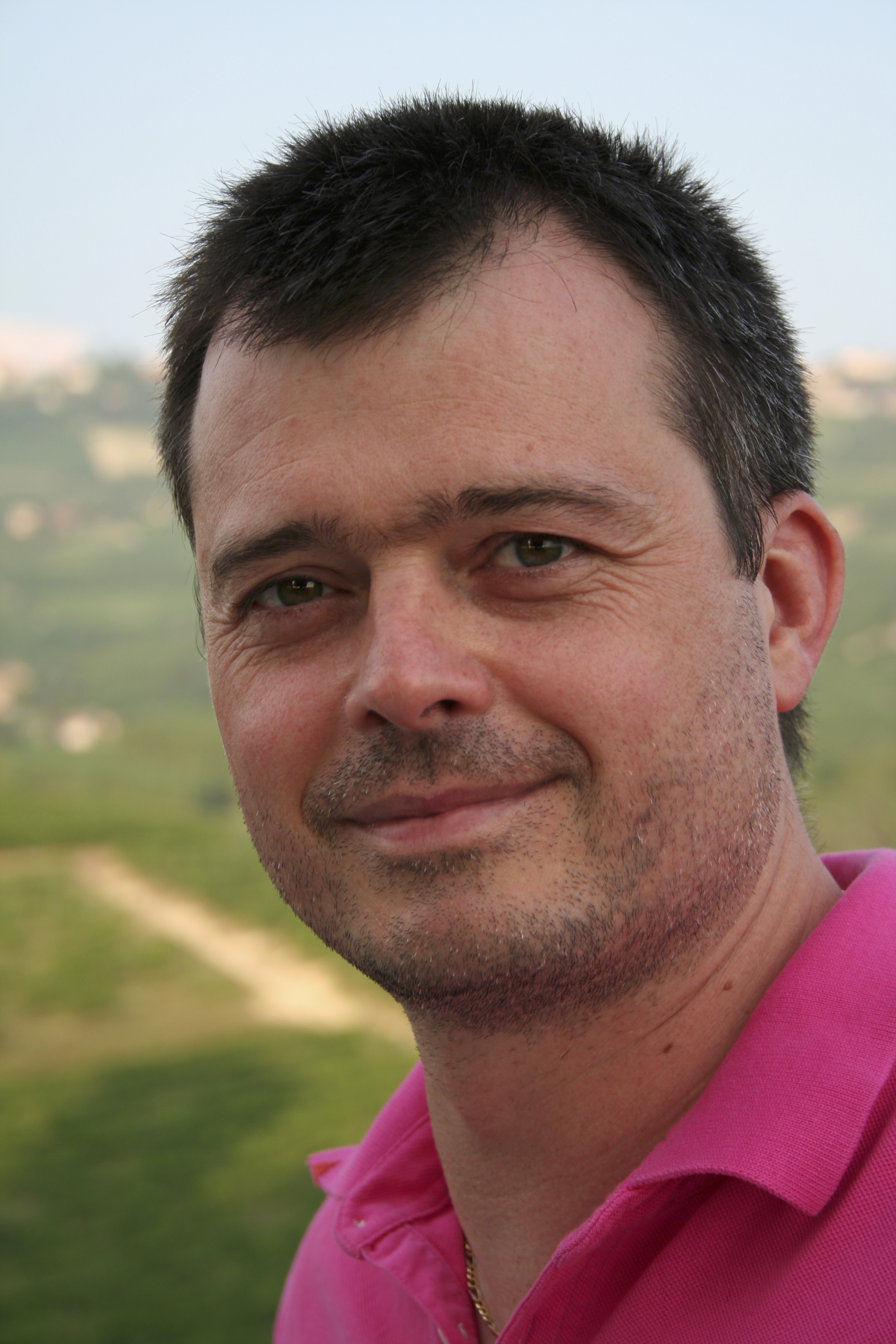}}]{Philippe De Doncker}
received the M.Sc. degree in physics engineering and the Ph.D. degree in science engineering from the Universit\'e Libre de Bruxelles (ULB), Brussels, Belgium, in 1996 and 2001, respectively. He is currently a Full Professor with ULB, where he also leads the research activities on wireless channel modeling and electromagnetics.
\end{IEEEbiography}

\vfill

\end{document}